\documentclass[twoside]{article}
\usepackage{algorithm}

\usepackage{amsfonts}
\usepackage{color}
\usepackage{appendix}
\usepackage{graphicx}
\usepackage{tabularx}
\usepackage{longtable}
\usepackage[noend]{algpseudocode}
\makeatletter

\usepackage[accepted]{format}

\usepackage{natbib}
\usepackage{amsthm}

\newtheorem{thm}{Theorem}[section]
\newtheorem{lem}[thm]{Lemma}
\newtheorem{prop}[thm]{Proposition}
\newtheorem{assump}{Assumption}
\newtheorem{defn}[thm]{Definition}
\newtheorem{corollary}[thm]{Corollary}

\def\ee{\mathbb{E}}
\def\E{\mathbb{E}}
\def\P{\mathbb{P}}

\def\R{\mathbb{R}}

\def\bmu{\boldsymbol{\mu}}
\def\bpi{\boldsymbol{\pi}}
\def\aa{\mathcal{A}}

\def\ff{\mathcal{F}}

\def\ss{\mathcal{S}}
\def\qq{\mathcal{Q}}

\def\zz{\mathcal{Z}}
\def\bP{\mathbf{P}}
\def\tmix{T_{\text{mix}}}
\def\ts{\tilde{s}}
\def\hs{\hat{s}}
\def\tz{\tilde{z}}
\def\hz{\hat{z}}
\def\tmu{\tilde{\mu}}

\def\mix{\text{mix}}

\DeclareMathOperator*{\argmax}{arg\,max}

\def\one{\mathbf{1}}
\newcommand{\old}[1]{}
\newcommand{\sol}[1]{}
\newcommand{\norm}[1]{\left\lVert#1\right\rVert}

\newcommand{\multiline}[1]{%
  \begin{tabularx}{\dimexpr\linewidth-\ALG@thistlm}[t]{@{}X@{}}
    #1
  \end{tabularx}
}

\begin{document}

\twocolumn[

\aistatstitle{Reinforcement Learning for SBM Graphon Games with Re-Sampling}

\aistatsauthor{Peihan Huo \And Oscar Peralta \And Junyu Guo \And Qiaomin Xie \And Andreea Minca }

\aistatsaddress{ Cornell University \And  Cornell University \And Tsinghua University  \And UW-Madison \And Cornell University} ]

\begin{abstract}
   The Mean-Field approximation is a tractable approach for studying large population dynamics. However, its assumption on homogeneity and universal connections among all agents limits its applicability in many real-world scenarios. Multi-Population Mean-Field Game (MP-MFG) models have been introduced in the literature to address these limitations. When the underlying Stochastic Block Model is known, we show that a Policy Mirror Ascent algorithm finds the MP-MFG Nash Equilibrium. In more realistic scenarios where the block model is unknown, we propose a re-sampling scheme from a graphon integrated with the finite $N$-player MP-MFG model. We develop a novel learning framework based on a Graphon Game with Re-Sampling (GGR-S) model, which captures the complex network structures of agents' connections. We analyze GGR-S dynamics and establish the convergence to dynamics of MP-MFG. Leveraging this result, we propose an efficient sample-based $N$-player Reinforcement Learning algorithm for GGR-S without population manipulation, and provide a rigorous convergence analysis with finite sample guarantee. 
\end{abstract}

\section{Introduction}
\paragraph{Motivating Example} This paper draws inspiration from a real-world dilemma: during the COVID-19 pandemic, individuals faced a daily choice. Should they practice social distancing and be safer or interact with other people and become more exposed? These decisions, repeated daily, effectively constituted a stochastic game. The outcome was not solely reliant on the individuals' health and actions but also hinged on the behavior of those around, alongside the unpredictable nature of virus transmission. Adapting and learning from their experiences, people refined their strategies in the wake of the pandemic. \emph{So, how do individuals learn their optimal strategies, and how do these decisions impact the population at large?}

The nature of these problems, characterized by sequential decision-making and game-theoretical elements, positions Reinforcement Learning (RL) as a promising paradigm for finding effective solutions; moreover, the Mean-Field Game (MFG) framework proposed by \cite{caines_and_huang_MFG, lasry_lions_2007} has been widely adopted to resolve the intractability of large population dynamics (See \cite{laurière2022learning} for more comprehensive reviews on RL in MFG). While MFG provides for a scalable approach for studying learning in large populations, it may not be a perfect model for our motivating example due to the following reasons: I. \emph{Homogeneity Assumption.} The MFG framework assumes all agents in the population are identical and exchangeable. However, in the context of the virus, there are variations among individuals in terms of susceptibility and the severity of symptoms when infected \citep{COVID19}. II. \emph{Universal and Time-Invariant Connections.} In the example, individuals are  only affected by people they interact with over time, rather than the entire population as implied by the MFG framework.

To overcome the limitations of the MFG framework while still ensuring scalability, we turn to the Multi-Population Mean-Field Game (MP-MFG) framework. In MP-MFG, large populations of agents are divided into multiple homogeneous populations based on relevant features within the modeling context. Several notable studies contribute to this framework: \cite{ghosh2020model} consider the MP-MFG with two distinct populations. \cite{subramanian2022multi} introduce the MP-MFG learning model, where transitions and rewards depend only on mean-field actions. \cite{perolat2021scaling} analyze a scalable learning algorithm for an MP-MFG model where state transitions do not rely on population impact, which results in low computational complexity but limits the model's applicability. In the MP-MFG model we study, we incorporate the Stochastic Block Model (SBM) \citep{SBM} to represent the connections between populations. Moreover, state transitions and rewards depend on the SBM-weighted aggregate of all population mean-field state distributions. This formulation encompasses more sophisticated dynamics where each population receives a different level of impact from other populations, thus mitigating the homogeneity issue often associated with the MFG framework. 

On the other hand, graphon games (GG) have been developed in literature to capture the locality and uniqueness of each agent's interactions with others \citep{carmona_graphon_game}. A graphon game serves as a versatile model for infinite population games, and equilibria of graphon game have been shown to approximate Nash equilibria of finite network games sampled from the graphon \citep{Parise_graphon_game}. Recent works including \cite{Caines_2021,Gao_2021,aurell2021graphonepidemics,aurell2021graphonII} have established strong connections between graphons and MFGs, offering valuable insights into using graphons' unique property of capturing heterogeneous interactions.

Building upon these previous efforts, we propose a novel Graphon Game with Re-Sampling (GGR-S) model that seamlessly integrates repeated sampling from a Stochastic Block Model Graphon into our MP-MFG learning framework. In this model, each agent's state transition and reward depend on the states of their temporary `neighbors' - the people to whom they are connected during a particular time step. The connections are sampled from the SBM graphon which represents the underlying hidden true network connections between different populations of agents. This approach effectively captures the probabilistic nature of our motivating example: based on their personal traits, individuals' daily social interactions are realizations of certain ground truth distributions captured by the graphon. We remark that \cite{fabian2023learning} consider a related Graphon MFG model in the finite-horizon setting. However, their model does not include the re-sampling feature; and in their theoretical analysis, the dependence on the mean-field impact was dropped, thus limiting the model's applicability to complex real-world phenomenon.

In this work, our goal is to learn an approximate MP-MFG Nash Equilibrium via Reinforcement Learning from a finite $N$-player game when the underlying Stochastic Block Model, transition kernel, and reward function are unknown. We aim to design a single-path trajectory learning algorithm, unlike many existing methods that involve controlling the population distribution (for example, see \cite{anahtarci2022qlearning, zaman2023oraclefree}). In particular, we employ the Policy Mirror Ascent (PMA) algorithm for policy update, which has been applied in single-agent Markov Decision Processes (MDP) \citep{LanPMD} and standard MFG \citep{Yardim_PMA}. Furthermore, for policy evaluation, we use the Conditional Temporal Difference (CTD) algorithm \citep{kotsalis2021simple}, which was originally designed for one-player setting and has been extended to \emph{homogeneous} $N$-player games \citep{Yardim_PMA}.
Our algorithm achieves efficient learning, without population manipulation, and removes the restriction that all agents in a large population are fully connected at all times. This reflects the real-world learning process and provides a more adequate framework for learning from realistic data-sets.

The main contributions of this paper are summarized as follows: I. \emph{Model Development}: We introduce a novel learning model - Graphon Game with Re-Sampling (GGR-S). This model addresses the homogeneity limitation of MFG models and incorporates network structures to more accurately represent real-world population dynamics. Moreover, the re-sampling mechanism in the finite $N$-player game is a more realistic data-collection model. II. \emph{Convergence Analysis of GGR-S}: We conduct rigorous analyses of the dynamics of GGR-S, considering transitions and rewards dependent on the graphon-induced population impact. In particular, we quantify the expected deviation of the empirical state distribution of the finite-agent GGR-S model from that of the MP-MFG, which implies the convergence of the GGR-S dynamics to that of the MP-MFG as population size tends to infinity. Moreover, we characterize the mixing property of the Markov chains in GGR-S model, which leads to the design and analysis of sample-efficient Conditional TD-learning (CTD) for policy evaluation. III. \emph{Efficient Learning from Single-path Trajectories}: The convergence results on GGR-S inspire us to develop a Policy Mirror Ascent algorithm with CTD from single-path trajectories in the GGR-S framework to learn the MP-MFG NE. Moreover, we prove that our algorithm learns a $\epsilon+\mathcal{O}(1/\sqrt{\min_{i\in[K]} N_i})$-NE with a sample complexity of $\Tilde{\mathcal{O}}(\epsilon^{-2})$, where $N_i$ is the number of agents in population $i.$

\section{Multi-Population Mean-Field Game}
In this section, we formally define the MP-MFG model and outline the iterative equilibrium-solving scheme for MP-MFG with complete information. Then, we introduce sample-based learning algorithm.

\paragraph{Notations}
We use $\ss$ and $\aa$ to represent the finite state space and action space, respectively, for all agents. Let $\Delta(\cdot)$ denote the probability simplex on a finite space. The set of stationary policies for each agent is denoted as $\Pi := \{\pi: \ss \to \Delta(\aa)\},$ where $\pi(a|s)$ is the probability of taking action $a$ in state $s$. The set of policy profiles for $K$ entities is represented by $\Pi^K$. Denote the regularization function as $h: \Delta(\aa) \to R_{\geq 0}$. Let $W_K \in (0,1]^{K\times K}$ be a symmetric matrix with entries $W_K(i, j):=p_{ij}$. The 0-1 adjacency matrix is represented as $W^{[N]} \in \{0,1\}^{N\times N}$.

\subsection{Multi-Population Mean-Field Game}
\label{subsection 2.1: MP-MFG}

We consider a Multi-Population Mean Field Game with $K$ distinct populations, each consisting of an infinite number of agents. Agents within each population are identical and exchangeable. Each population exerts an influence on other populations, with varying degrees of strength encapsulated by the matrix $W_K$, which we will hereafter refer to as the Stochastic Block Model (SBM). Precisely, the impact of population $i$ on population $k$ is quantified by $p_{ki}:= W_K(k,i)$. Individual agents within population $k$ are indexed by $(k,l)$, where $l$ identifies a specific agent within population $k$.

In the sequential decision-making problem, agent $(k,l)$ starts with an initial state $s^{k,l}_0$ sampled from population-$k$ mean-field state distribution $\mu^k_0$, and takes actions according to a policy $\pi^k \in \Pi$ that is prescribed by the representative agent of population $k$. This process induces a random sequence of states $\{s^{k,l}_t\}$ and rewards $\{r^{k,l}_t\}$ for each agent $(k,l)$, which evolve as follows: $a^{k,l}_t \sim \pi^k(\cdot|s_t^{k,l}), s^{k,l}_{t+1} \sim P(\cdot|s^{k,l}_t, a^{k,l}_t, z_t^{k}), r^{k,l}_t = R(s^{k,l}_t, a^{k,l}_t, z^{k}_t),$
where $P$ and $R$ are the state transition function and reward function, respectively.
Here $z_t^k$ is the aggregated impact for population $k$ at time step $t$, i.e., the weighted sum of mean-field distributions of all $K$ populations,
\begin{align}
    z^k_t = \frac{1}{K}{\sum_{i=1}^K W_K(k,i)\cdot \mu_{t}^i}, \label{eq:z_def}
\end{align}
with $\mu_t^i \in \Delta(\ss)$ is the mean-field state distribution of population $i$ at time $t$. Let $\bmu = (\mu^1, \dots, \mu^K)\in \Delta^K(\ss)$ denote the mean-field ensemble of all $K$ populations, and let $\bpi = (\pi^1,\dots,\pi^K) \in \Pi^K$ denote the policy profile for all $K$ populations. 

Suppose all $K$ populations adapt the policy profile $\bpi$ under the mean-field ensemble $\bmu$. Then, for population $k$, we define the regularized state-action value function, $Q^k \in \qq := \{Q: \ss\times \aa \to \R\}$, as: for all initial state-action pairs $(s,a)$,
\begin{align*}
    Q^k(s,a|\bpi, \bmu)&=\ee\left[\sum_{t=0}^\infty \gamma^t\left(R(s^{k,l}_t,a^{k,l}_t,z_t^k) + h(\pi^k(s^{k,l}_t)\right)\right],
\end{align*} 
where $(s_0^{k,l}, a_0^{k,l}) = (s,a)$,\,$a^{k,l}_{t+1} \sim \pi^k(s_t),\, s^{k,l}_{t+1} \sim P(\cdot|s^{k,l}_t,a^{k,l}_t,z_t^k)$ with $z^k_t$ defined in Eq. \eqref{eq:z_def}, and $\gamma \in (0,1)$ is the discount factor. Furthermore, we define for each population $k$, $q^k(s,a|\bpi,\bmu) = Q^k(s,a|\bpi, \bmu) - h(\pi^k(s));$ and the value function $ V^k(s|\bpi,\bmu) = \mathbb{E}_{a \sim \pi^k(\cdot|s)}\left[Q^k(s,a|\bpi, \bmu)\right]$. Here, $h:\Delta(\aa) \to \R_{\geq 0}$ is a $\rho$-strongly concave regularizer. Commonly used $\rho$-strongly-concave regularizers include the entropy function, $h(x) = -x\log x$, and the negative squared $\ell_2$ norm, $h(x) = -\norm{x}_2^2$. In general, adding a regularization term facilitates the convergence of RL algorithms, as explored in \cite{geist2019theory} and \cite{cen2021fast}; for MFGs, see \cite{xie21g} and \cite{anahtarci2022qlearning} for using regularizers to ensure solution uniqueness and achieve fast algorithm convergence.

Each population aims to find the optimal policy that maximizes their value function $V^k$ while interacting with the mean-field ensemble $\bmu$. This objective gives rise to the definition of the Nash Equilibrium (NE) in Multi-Population Mean-Field Game (MP-MFG).

\begin{defn}[MP-MFG NE]
    \label{defn: MP MFG-NE} 
A set of policies $\bpi^* = (\pi^{*,1},\dots,\pi^{*,K}) \in \Pi^K$ and a mean-field ensemble $\bmu^* = (\mu^{*,1}, \dots, \mu^{*, K}) \in \Delta^K(\ss)$ pair $(\bpi^*, \bmu^*)$ is called an MP-MFG NE if the following conditions are satisfied for all $k\in [K]$:
\begin{itemize}
  \setlength\itemsep{0em}
    \item Population Consistency:  $$\mu^{*, k}(s) = \sum_{s',a'}\mu^{*, k}(s') \pi^{*,k}(a'|s')P(s|s',a',z^{*,k}).$$
    \item Policy Optimality: $$V^k(\pi^{*,k}, \bmu^*) = \max_{\pi \in \Pi}V^k(\pi,\bmu^*).$$
\end{itemize}
\end{defn}

Intuitively, a Nash Equilibrium is reached when each $\pi^{*,k}$ is the optimal policy for population $k$, and the mean-field ensemble $\bmu^*$ stabilizes under the policy profile $\bpi^*$ (i.e., $\bmu^*$ remains fixed under $\bpi^*$). 
Furthermore, the MP-MFG NE approximates the Nash equilibrium for a finite-agent multi-population game, given that there are sufficiently large number of agents in each population (see \cite{Bensoussan_book} for details).

\subsection{Solution to MP-MFG with Complete Information}
\label{subsection: Solution to MP-MFG with Complete Information}
We now establish applicability of the fixed-point iteration method, specifically the Policy Mirror Ascent (PMA) algorithm for finding the NE of MP-MFG, when complete information is available, i.e., the transition kernel $P$, reward function $R$, and the SBM $W_K$ are known. We also introduce the definitions of relevant operators and their associated properties that will be useful in subsequent sections, with detailed proofs of these results deferred to Appendix \ref{appendix: Solution to MP-MFG with Complete Information}.

\paragraph{Notations}
We use the discrete metric $d(x,y) = \boldsymbol{1}\{x \neq y\}$ for state space $\ss$ and action space $\aa$, and we equip $\Delta(\ss)$, $\Delta(\aa)$ with vector 1-norm $\norm{\cdot}_1.$ 
For policies $\pi,\pi'\in \Pi$, let $\norm{\pi-\pi'}_1:=\sup_{s\in \ss} \norm{\pi(s)-\pi'(s)}_1;$ for policy profiles $\bpi, \bpi' \in \Pi^K$, $\norm{\bpi-\bpi'}_1 := \max_{k\in [K]}\norm{\pi^k - \pi^{k\prime}}_1$; and for mean-field ensembles  $\bmu, \bmu' \in \Delta^K(\ss),$ $\norm{\bmu-\bmu'}_1 := \max_{k\in [K]}\norm{\mu^k - \mu^{k\prime}}_1$. Assume that $h$ is a $\rho$-strongly concave regularizer, then we define $u_{\max} := \argmax_{u \in \Delta(\aa)} h(u)$ and $h_{\max}:= h(u_{\max})$. Let $\pi_0 \in \Pi$ such that $\pi_0(\cdot) := u_{\max}$ and $Q_{\max} := \frac{1+h_{\max}}{1-\gamma}$. For some constant $c>0$, define strategy set $\mathcal{U}_{c} := \{u\in \Delta(\aa): h(u) \geq h_{\max} - c\}$ and policy set $\Pi_{c} := \{\pi \in \Pi: \pi(s) \in \mathcal{U}_{c}, \forall s\in \ss\}.$

We first outline the PMA iterative scheme as a three-step process: at each iteration $i$,
\begin{enumerate}
    \item Population Update: find the stable mean-field ensemble $\bmu_{i}$ induced by a given policy profile $\bpi_{i}$.
    \item Policy Evaluation: find the $Q$ functions $\{Q_i^k\}$ associated with the fixed $\bmu_i$ and policy profile $\bpi_i$;
    \item Policy Improvement: update policy profile $\bpi_{i+1}$ via mirror ascent mechanism with $\{Q_i^k\}$.
\end{enumerate}

To guarantee convergence of the PMA iterative scheme, a smoothness assumption on transition kernel and reward is needed. This type of assumption is standard in the MFG literature \citep{anahtarci2022qlearning, Yardim_PMA}. 
\begin{assump}[Lipschitz Continuous Transition and Reward]
\label{Lipschitz Continuity of $P,R$}
There exist constants $p_\mu, p_s, p_a, r_\mu, r_s, r_a \geq 0$ such that for any population $k$, $s, s'\in \ss,$ $ a,a'\in \aa,$ $\bmu,\bmu'\in \Delta^K(\ss)$, and $z^k, z^{k\prime}$ as defined in Eq.~\eqref{eq:z_def}, the transition kernel $P$ is jointly Lipschitz continuous:
\begin{multline*}
    \norm{P(\cdot|s,a,z^k) - P(\cdot|s',a',z^{k\prime})}_1\\ \leq p_{\mu} \norm{z^k - z^{k\prime}}_1 + p_s d(s,s') + p_a d(a,a'),
\end{multline*}
and the reward function $R$ is jointly Lipschitz continuous as well:
\begin{multline*}
    \left|R(s,a,z^k) - R(s',a',z^{k\prime})\right|\\ 
    \leq r_{\mu}\norm{z^k - z^{k\prime}}_1 + r_s d(s,s') + r_a d(a,a').
\end{multline*}
\end{assump}   

First, we define the operators to represent the population state evolution process. 
\begin{defn}[Population Update Operators]\label{defn: Population Update Operator}
Under the setting of Section \ref{subsection 2.1: MP-MFG}, the $k$-th population update operator $\Gamma_{pop}[k]: \Delta^K(\ss) \times \Pi \to \Delta(\ss)$ is defined as: for each $\bmu \in \Delta^K(\ss)$, $\bpi \in \Pi^K$, $s\in \ss$,
\begin{multline*}
    \Gamma_{pop}[k](\bmu,\bpi)(s) :=\\ \sum_{s'\in \ss}\sum_{a'\in \aa} \mu^k(s')\pi^k(a'|s')P(s|s',a',z^k). 
\end{multline*}
Then, define the total population update operator which updates the mean-field ensemble as $ \Gamma_{pop}(\bmu,\bpi) = \left(\Gamma_{pop}[1](\bmu,\bpi), \dots, \Gamma_{pop}[K](\bmu,\bpi)\right).$ 
\end{defn}

Note that $\Gamma_{pop}[k]$ represents the one-step mean-field evolution. Additionally, we can show that $\Gamma_{pop}[k](\cdot,\bpi)$  is Lipschitz continuous w.r.t.~$\bmu$ with constant $L_{pop} = \frac{1}{2}p_s + p_a + p_\mu$, and it follows that $\Gamma_{pop}(\cdot,\bpi)$ is also Lipschitz continuous with $L_{pop} = \frac{1}{2}p_s + p_a + p_\mu$ (See Lemma \ref{lemma: Lipschitz k-th Population Updates} and \ref{Lipschitz continuity of Gamma_pop} in Appendix). To guarantee the convergence of the mean-field ensemble to a unique limit, the population operator needs to be contractive, as stated in the following assumption.

\begin{assump}[Contractive Population Update]
\label{assumption : stable population}
The population update operator $\Gamma_{pop}(\cdot,\bmu)$ is contractive for each $\bpi\in \Pi^K$, i.e. $L_{pop} < 1.$
\end{assump}

Assumption~\ref{assumption : stable population} implies the existence of the operator $\Gamma_{pop}^\infty: \Pi^K \to \Delta^K(\ss)$, which generates the stable mean-field ensemble under a fixed policy profile $\bmu,$
i.e., $\Gamma_{pop}(\Gamma_{pop}^\infty(\bpi),\bpi) = \Gamma_{pop}^\infty(\bpi)$ for any policy profile $\bpi$. 

Next, we define the Policy Mirror Ascent (PMA) operators, including the policy evaluation operator that maps policy profiles and mean-field ensembles to the corresponding $Q$-functions, and the policy improvement operator that update the policy profiles based on $Q$-functions. We remark that in an online learning setting, the approximation of the policy evaluation operator using solely policy profile roll-outs becomes instrumental, facilitating the execution of single-path trajectory learning in the following section.

\begin{defn}[Policy Mirror Ascent Operators]
\label{PMA Operator on Stationary Population Distribution}
For each population $k\in [K]$, we define the policy evaluation operator $\Gamma_q[k]: \Pi \times \Delta^K(\ss) \to \mathcal{Q}$ as $\Gamma_q[k](\bpi, \bmu) = q^k(\cdot, \cdot|\bpi,\bmu) \in \qq$. We define the policy improvement operator $\Gamma^{pma}_{\eta}:\qq \times \Pi \rightarrow \Pi$ as: for each $q\in \qq,\pi\in \Pi,s\in \ss$
\begin{multline*}
    \Gamma^{pma}_{\eta}(q,\pi)(s)\\ := \argmax_{u \in \mathcal{U}_{L_h}}\langle u,q(s,\cdot)\rangle + h(u) - \frac{1}{2\eta}\norm{u - \pi(s)}_2^2,
\end{multline*}
where $\eta$ is a chosen learning rate, and $L_h = r_a + \frac{\gamma r_s p_a}{2-\gamma p_s}$.
\end{defn}

Consequently, the PMA three-step iteration for the entire system can be summarized as an operator $\Gamma_\eta: \Pi^K \to \Pi^K$ as $
\Gamma_\eta(\bpi) = \left(\Gamma_\eta[1](\bpi), \dots, \Gamma_\eta[K](\bpi)\right),$ where for each population $k\in [K]$, $\Gamma_\eta[k]: \Pi^K \to \Pi$, where for each $\bpi=(\pi^{1},\cdots,\pi^{K})\in \Pi^K,$ $$\Gamma_\eta[k](\bpi) := \Gamma_\eta^{pma}(\Gamma_q[k](\bpi,\Gamma^\infty_{pop}(\bpi)),\pi^k).$$ 

We remark that $\Gamma_\eta$ is Lipschitz continuous with Lipschitz constant $L_\eta$, and that the contractivity of $\Gamma_\eta$ can be ensured through a proper selection of strongly-concavity parameter $\rho$ and learning rate $\eta$ (see Lemma \ref{lemma: fixed point of gamma_eta} and \ref{Lipschitz Continuity of Gamma_eta} in Appendix). 
Furthermore, we show that the MP-MFG NE policy is contained in $\Pi^K_{L_h}$(see Lemma \ref{lemma: Sufficiency of Pi Delta} of the Appendix), which implies that the fixed point of $\Gamma_\eta$ is indeed the MP-MFG-NE. 

Finally, we conclude the dicussion of the complete information case by proving the linear convergence of PMA iteration to MP-MFG NE policy.

\begin{prop}[Convergence to MP-MFG NE with Complete Information]
\label{theorem: Convergence to MP-MFG NE in the Exact Case}
Suppose that the MP-MFG NE is given by $(\bpi^*, \bmu^*)$ and for a given learning rate $\eta > 0$, $L_\eta < 1$. Assume for all $k \in [K]$, $\pi^k_0 = \pi_{\max}$ and consider the updates $\bpi_{t+1} = \Gamma_\eta(\bpi_t)$ for all $t \geq 0$. It is guaranteed that for any $T\geq 1$, 
\begin{align*}
    \norm{\bpi_T - \bpi^*}_1 \leq L_\eta^T\norm{\bpi_0 - \bpi^*}_1 \leq 2L^T_{\eta}.
\end{align*}
\end{prop}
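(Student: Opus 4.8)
The plan is to recognize this proposition as a direct application of the Banach contraction-mapping principle to the operator $\Gamma_\eta$, leveraging the three facts already established above: that $\Gamma_\eta$ is $L_\eta$-Lipschitz, that its fixed point coincides with the MP-MFG NE policy $\bpi^*$ (Lemma \ref{lemma: fixed point of gamma_eta} and Lemma \ref{Lipschitz Continuity of Gamma_eta}), and that the NE policy lies in $\Pi^K_{L_h}$ (Lemma \ref{lemma: Sufficiency of Pi Delta}). Since the essential analytic work has been front-loaded into those lemmas, the proposition follows as a corollary; the only care needed is to verify that the iteration never leaves the domain on which the Lipschitz estimate is guaranteed.

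First I would establish domain invariance, i.e.~that every iterate remains in $\Pi^K_{L_h}$. By hypothesis every component of $\bpi_0$ equals $\pi_{\max}$, and since $u_{\max}$ attains $h_{\max}$ we have $h(u_{\max}) = h_{\max} \geq h_{\max} - L_h$, so $u_{\max}\in \mathcal{U}_{L_h}$ and hence $\bpi_0 \in \Pi^K_{L_h}$. For $t \geq 1$, each component of $\bpi_t = \Gamma_\eta(\bpi_{t-1})$ is produced by $\Gamma^{pma}_\eta$, whose output is by construction the $\argmax$ over $\mathcal{U}_{L_h}$ and therefore lies in $\Pi_{L_h}$. Thus $\bpi_t \in \Pi^K_{L_h}$ for all $t \geq 0$, and since $\bpi^*\in \Pi^K_{L_h}$ as well, the Lipschitz estimate for $\Gamma_\eta$ applies to every pair $(\bpi_t,\bpi^*)$.

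Next I would invoke the fixed-point identity $\Gamma_\eta(\bpi^*) = \bpi^*$ together with the $L_\eta$-Lipschitz property to obtain the one-step contraction
\begin{align*}
\norm{\bpi_{t+1} - \bpi^*}_1 = \norm{\Gamma_\eta(\bpi_t) - \Gamma_\eta(\bpi^*)}_1 \leq L_\eta \norm{\bpi_t - \bpi^*}_1.
\end{align*}
Iterating this bound from $t=0$ up to $T-1$ gives $\norm{\bpi_T - \bpi^*}_1 \leq L_\eta^T \norm{\bpi_0 - \bpi^*}_1$, which is the first inequality.

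Finally, for the second inequality I would bound the initial distance by the diameter of the policy space. For each population $k$ and state $s$, both $\pi_0^k(s)$ and $\pi^{*,k}(s)$ lie in the simplex $\Delta(\aa)$, so $\norm{\pi_0^k(s) - \pi^{*,k}(s)}_1 \leq 2$; taking the supremum over $s$ and the maximum over $k$ yields $\norm{\bpi_0 - \bpi^*}_1 \leq 2$, whence $\norm{\bpi_T - \bpi^*}_1 \leq 2 L_\eta^T$. The only genuine subtlety — and the step I would treat most carefully — is the domain-invariance argument of the first step, since the Lipschitz constant $L_\eta$ is only guaranteed on $\Pi^K_{L_h}$; everything else reduces to a routine contraction iteration.
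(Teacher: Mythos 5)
Your proposal is correct and follows essentially the same route as the paper, which simply notes that the result follows recursively from the fixed-point characterization (Lemma \ref{lemma: fixed point of gamma_eta}) and the Lipschitz continuity of $\Gamma_\eta$ (Lemma \ref{Lipschitz Continuity of Gamma_eta}), together with the diameter bound $\norm{\bpi_0-\bpi^*}_1\leq 2$. Your explicit verification that every iterate stays in $\Pi^K_{L_h}$ is a detail the paper leaves implicit, and it is a worthwhile addition since the Lipschitz estimate is only established on that set.
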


\subsection{Sample-Based Learning with Finite Players for MP-MFG }\label{section: Sample-Based Learning with Finite Players (MP-MFG)}

In practice, the transition kernel $P$ and the reward function $R$ are usually unknown. We then need to simultaneously learn the MP-MFG NE from data. A popular setting that has been studied in MFG literature \citep{subramanian,guo2023general} assumes access to a simulator, which allows one to sample state transition and reward from an arbitrary mean-field ensemble and state-action pair. With such an simulator, we can approximate the three steps of PMA in Section~\ref{subsection: Solution to MP-MFG with Complete Information} through querying the simulator oracle. 
In Appendix \ref{section: The Simulator-Oracle-Based Learning}, we provide such a simulator-based learning algorithm for MP-MFG, along with a proof of its linear convergence.

 However, the assumption of having access to a simulator imposes significant limitations. In particular, for online learning, we aim to relax two restrictions: (i) the game process can be restarted at will; (ii) there are an infinite number of agents playing the game.

To address these limitations, we look into a more practical scenario where the only source of information consists of trajectories from finite $N$ agents. Let $N_k$ denote the number of agents within population $k$. Precisely, for each $\tau>0$, let $\hat\ff_{\tau}$ denote the $\sigma$-algebra, $\hat\ff_\tau := \ff\left(\{\hs^{k,l}_t, a_t^{k,l}, r_t^{k,l}\}_{t=1}^{\tau}, k\in [K], \ell \in [N_k]\right)$. Let $\hat{\bmu}_{t}$ denote the empirical mean-field ensemble at time $t$, with $\hat \mu^k_t = \frac{1}{N_k} \sum_{l=1}^{N_k} \delta_{\hat{s}_{t}^{k,l}}$, and let $\hz_t$ denote the empirical aggregated impact based on the \emph{known} SBM $W_K$, defined as \begin{align}
    \hz^k_t = \frac{1}{K}{\sum_{i=1}^K W_K(k,i)\cdot \hat \mu_{t}^i}. \label{eq: hat_z-def}
\end{align} In essence, we learn from empirical data gathered from the trajectories of the $N$ players. 

For online learning with trajectories, a mixing condition is usually assumed to ensure sufficient exploration. In line with the assumptions considered by \cite{Yardim_PMA}, we emphasize that the mixing condition can be effectively reduced to a combination of an assumption regarding non-degenerate policy and an assumption concerning the reachability of states.

\begin{assump}[Non-Degenerate Policies]
\label{assumption: non-degenerate policies}
Assume that there exists $\zeta$ such that: I. For any population $k \in [K]$, the initial policy $\pi^k_0(a|s) \geq \zeta, \forall s\in \ss, a\in \aa$; II. For any $\pi \in \Pi$, $q\in \qq$ satisfying $\pi(a|s) \geq \zeta$, $0 \leq q(s,a) \leq Q_{\max}, \forall s \in \ss, a\in \aa$, it holds that $\Gamma_\eta^{pma}(q,\pi)(a|s) \geq \zeta, \forall s\in \ss, a\in \aa.$
\end{assump}

The next assumption dictates that any state can be visited with non-zero probability under a non-degenerate policy within finite time steps. 
\begin{assump}[Reachability Under Non-degenerate Policies]
\label{assumption: reachability}
Under the settings of MP-MFG in Section \ref{subsection 2.1: MP-MFG}, for any policy profile $\bpi \in \Pi^K$ satisfying $\pi(a|s) \geq \zeta > 0, \forall s \in \ss$, $a \in \aa$, and any initial states $\{ s_0^{k,l}\} \in \ss^N$, there exist $T_{mix} > 0$, $\delta_{mix} > 0$ such that for all $(k,l)$,$$\P(\hat s^{k,l}_{T_{mix}} = s'|\{s_0^{k,l}\}) \geq \delta_{mix}, \forall s' \in \ss.$$
\end{assump}

Under the mixing assumptions and within the framework outlined in Section \ref{subsection: Solution to MP-MFG with Complete Information}, one can show that the centralized learning algorithm proposed by \cite{Yardim_PMA} aptly applies to sample-based learning in the MP-MFG model. We highlight the key steps of the analysis in Appendix \ref{appendix: Sample-Based Learning with Finite Players (MP-MFG)}. 

\section{Learning Graphon Game with Re-Sampling}
While the MP-MFG framework allows for effective learning algorithms, it relies on the unrealistic assumption that the SBM is known and that all agents maintain full connectivity accordingly, even when population sizes become large. To tackle this challenge, we introduce a novel re-sampling procedure during the learning process of finite $N$-player games. This procedure ensures that, at each time step, agents are only connected to a subset of the population, representing their `neighbors'. The SBM is unknown, and agents only observe an aggregate impact of their neighbors. Our primary objective is to demonstrate that after introducing this re-sampling process, the resulting dynamics will not deviate significantly from the fully connected MP-MFG case, ensuring that the PMA-CTD algorithm ultimately learns the MP-MFG NE.

\subsection{GGR-S Model}
\label{subsection: GGR-S Model}
\label{section: network games sampled from graphons}
A graphon is a symmetric, measurable, real-valued function $W$ on $[0,1] \times [0,1]$, and it can be used to represent the probability distribution over the space of dense networks. In this paper, the relevant class of graphons is the Stochastic Block Model (SBM) Graphon \citep{Parise_graphon_game}. In particular, we partition $[0,1]$ into $K$ disjoint intervals $\{\mathcal{I}_k\}_{k=1}^K$, each of length $|\mathcal{I}_k| = L_k$, and agents with labels $x$ and $y$, located in the intervals $\mathcal{I}_k$ and $\mathcal{I}_j$ respectively, are connected with probability $W(x,y) = W_{K}(x,y) = p_{k,j} = p_{j,k} > 0$. We then define the procedure of sampling from a SBM graphon \citep{lovász2012}. 

\begin{defn}[Adjacency Matrix Sampled from SBM Graphon]
\label{defn: network games sampled from SBM graphon}
    Given an SBM graphon $W$ and $N$ agents from $K$ distinct populations indexed by $(k,l)$, and without loss of generality, assume that the indices $(k,l)$ are ordered such that each $(k,l)$ can be mapped to a distinct label $t^{k,l}$ on the interval $[0,1]$. Then, we can generate the 0-1 adjacency matrix $W^{[N]} \in \{0,1\}^{N\times N}$ of a simple sampled network by randomly connecting agents $(k,l)$ and $(i,j)$ with Bernoulli probability $W(t^{k,l},t^{i,j}) = p_{k,i}$.
\end{defn}

Incorporating this sampling scheme into the MP-MFG setting, we obtain a more realistic and versatile model. 

 \begin{defn}[Graphon Game with Re-Sampling Model]
\label{defn: Locally Centralized Learning Model}
Given an SBM graphon $W$ and $N$ agents from $K$ distinct populations, at each time step $t$, re-sample the 0-1 adjacency matrix $W^{[N]}_t$ as outlined in Definition \ref{defn: network games sampled from SBM graphon}. Then, each agent $(k,l)$, starting from the current state $\ts_t^{k,l}$, 
takes an action $a^{k,l}$ according to some policy $\pi^k_t$, prescribed by the representative agent of their respective population, receives a reward $r_t^{k,l}$ and transitions to the next state $\ts^{k,l}_{t+1}$. That is, for any agent $(k,l)$ at time $t$, $a^{k,l}_t \sim \pi_t^k(\cdot|\ts_t^{k,l}), \ts^{k,l}_{t+1} \sim P(\cdot|\ts^{k,l}_t, a^{k,l}_t, \tz_t^{k,l}), r^{k,l}_t = R(\ts^{k,l}_t, a^{k,l}_t, \tz^{k,l}_t),$
where $\tz_t$, the empirical neighbor impact, is defined as: 
 \begin{align}
        \tilde{z}_t^{k,l} = \frac{1}{K}\sum_{i=1}^K\frac{1}{N_i}\left[\sum_{j=1}^{N_i} W_t^{[N]}(t^{k,l},t^{i,j})\cdot \delta_{\tilde s_t^{i,j}}\right]. \label{eq: tilde_z-def}
\end{align}
\end{defn}

We remark that the key difference between $N$-player MP-MFG and GGR-S model is that the empirical aggregated impact $\tz$ in GGR-S only depends on the \emph{neighbors} of each agent at each step of transition. The GGR-S model lifts the restriction that all agents must be connected at all times. 

SBM graphons effectively model the population's underlying community structure. The reasonable conjecture that different types of people have varying interactions justifies the use of the SBM graphon model in our motivating example. Additionally, the re-sampling scheme models the randomness of individual's interactions over time. Moreover, if the SBM graphon $W$ satisfies that for any pair of agents $(k,l), (i,j)$, $W(t^{k,l},t^{i,j}) = W_K(k,i)$, where $W_K(k,i)$ is the population connectivity matrix defined in Section \ref{subsection 2.1: MP-MFG}, then the MP-MFG can be viewed as the limiting case of the model in Definition \ref{defn: network games sampled from SBM graphon} as the number of agents goes to infinity. This connection has also been explored in the static game setting by \cite{carmona2019graphonI}. 

\subsection{Dynamics of GGR-S}
\label{section: dynamics of GGR-S}
In this section, we analyze the dynamics of $N$-player GGR-S and prove its convergence to the fully connected MP-MFG case. This result will allow us to show that we can effectively estimate the iterative scheme $\Gamma_\eta$ in Section \ref{subsection: Solution to MP-MFG with Complete Information} using only sample paths from a finite number of agents even when the agents' connections change dynamically over time, as detailed in Definition \ref{defn: network games sampled from SBM graphon}. Moreover, we show that this is possible without introducing additional assumptions other than the mixing conditions Assumption \ref{assumption: non-degenerate policies} and Assumption \ref{assumption: reachability} in the MP-MFG case. 

\paragraph{Notations} In the GGR-S framework, let the filtrations (excluding the empirical neighbour impact) be $\ff_\tau := \ff\left(\{\ts^{k,l}_t, a_t^{k,l}, r_t^{k,l}\}_{t=1}^{\tau}, k\in [K], l \in [N_k]\right)$. Also, we define the one-step observation space as $\Omega := \ss \times \aa \times [0,1] \times \ss \times \aa ,$ where $\omega_t^{k,l} = (\tilde s_t^{k,l}, a_t^{k,l}, r_{t}^{k,l}, \ts_{t+1}^{k,l}, a_{t+1}^{k,l}) \in \Omega$ denotes the one-step observation at time $t$ for agent ${(k,l)}$. Denote the finite set of possible empirical state distribution for population $k$ by $\tilde{\Delta}_{N,\ss} \subset \Delta(\ss)$, and the finite set of possible empirical neighbor impact by $\zz_N$.

We start with quantifying the error generated by the re-sampling scheme by computing the distance between the empirical aggregated impact and the empirical neighbor impact. 

\begin{lem}[One-Step Error Propagation Through Aggregated Impact]
\label{lemma: One-Step Error Propagation Through Aggregated Impact}
Assume that at any time $t \geq 1$, agents follow a given policy profile $\bpi$. Let $\tilde z_t^{k,l}$ be defined as in Eq. \eqref{eq: tilde_z-def}, and $\tilde \bmu_t$ denote the resulting the mean-field ensemble at time $t$ under GGR-S; meanwhile, let $\hat z_t^k$ be defined as in Eq. \eqref{eq: tilde_z-def}, and $\hat \bmu_t$ denote the resulting the mean-field ensemble at time $t$ under MP-MFG.
Then, for all $t \geq 0$,
\begin{multline*}
    \ee\left[\norm{\tilde z_t^{k,l} - \hat z_t^{k}}_1\big|\ff_t,\hat\ff_t\right]\\ \leq p^*_k  \norm{\tilde \bmu_t - \hat \bmu_t}_1 + 2\sqrt{\frac{2|\ss|}{K \cdot \min_i{N_i}}},
\end{multline*}
with $p^*_k = \max_{i\in [K]}W_K(k,i)$. 
Furthermore,
\begin{multline*}
    \ee\left[ \norm{\tilde \bmu_{t+1} - \hat \bmu_{t+1}}_1\big|\ff_t,\hat\ff_t\right]\\ \leq   2(1+p_\mu)\sqrt{\frac{2|\ss|}{\min_i N_i}} + \tilde L_{pop} \norm{\tilde \bmu_t - \hat \bmu_t}_1,
\end{multline*}
where $\tilde L_{pop} = p_*p_\mu + \frac{1}{2}p_s + p_a$ with $p_* = \max_k p^*_k$.
\end{lem}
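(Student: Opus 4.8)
The plan is to prove the two inequalities in sequence: first establish the impact bound by an additive split of $\tilde z_t^{k,l}-\hat z_t^{k}$ into a mean-zero \emph{sampling} part (coming from the Bernoulli re-sampling of $W^{[N]}_t$) and a \emph{state-discrepancy} part, and then feed this estimate into a one-step analysis of the empirical ensembles that mirrors the Lipschitz estimate for $\Gamma_{pop}$, augmented by two empirical-measure concentration terms. For the first inequality, write $\hat z_t^{k}=\frac1K\sum_i p_{k,i}\hat\mu_t^i$ with $p_{k,i}=W_K(k,i)$, add and subtract $p_{k,i}\delta_{\tilde s_t^{i,j}}$ inside Eq.~\eqref{eq: tilde_z-def}, and obtain
\[
\tilde z_t^{k,l} - \hat z_t^{k} = \frac{1}{K}\sum_{i}\frac{1}{N_i}\sum_j\bigl(W_t^{[N]}(t^{k,l},t^{i,j}) - p_{k,i}\bigr)\delta_{\tilde s_t^{i,j}} + \frac{1}{K}\sum_i p_{k,i}\bigl(\tilde\mu_t^i - \hat\mu_t^i\bigr).
\]
The second term has $1$-norm at most $\bigl(\frac1K\sum_i p_{k,i}\bigr)\norm{\tilde\bmu_t-\hat\bmu_t}_1\le p_k^*\norm{\tilde\bmu_t-\hat\bmu_t}_1$. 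For the first term, conditioned on $\ff_t,\hat\ff_t$ the entries $W_t^{[N]}(t^{k,l},t^{i,j})$ are independent $\mathrm{Bernoulli}(p_{k,i})$ while the states $\tilde s_t^{i,j}$ are fixed, so each coordinate $s\in\ss$ is a sum of independent mean-zero variables. Bounding $\ee|\cdot|\le\sqrt{\Var(\cdot)}$ coordinatewise, using $p_{k,i}(1-p_{k,i})\le\tfrac14$ and $\sum_s\one\{\tilde s_t^{i,j}=s\}=1$, and then applying Cauchy--Schwarz over the $|\ss|$ coordinates yields a bound of the stated $\sqrt{|\ss|/(K\min_i N_i)}$ order; the constant $2\sqrt2$ leaves ample room for this crude estimate.

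For the second inequality I would split, for each population $k$,
\[
\tilde\mu_{t+1}^k - \hat\mu_{t+1}^k = (\tilde\mu_{t+1}^k - \bar m_k) + \bigl(\bar m_k - \Gamma_{pop}[k](\hat\bmu_t,\bpi)\bigr) + \bigl(\Gamma_{pop}[k](\hat\bmu_t,\bpi) - \hat\mu_{t+1}^k\bigr),
\]
calling the three summands $A$, $B$, $C$, where $\bar m_k:=\ee[\tilde\mu_{t+1}^k\mid\ff_t,\hat\ff_t,W_t^{[N]}]=\frac1{N_k}\sum_l\sum_a\pi^k(a|\tilde s_t^{k,l})P(\cdot|\tilde s_t^{k,l},a,\tilde z_t^{k,l})$, and where $\Gamma_{pop}[k](\hat\bmu_t,\bpi)$ is the conditional mean of $\hat\mu_{t+1}^k$ (averaging over the action draws and transitions) by the definition of the population update operator. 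Conditioned on the time-$t$ configuration — and additionally on $W_t^{[N]}$ for $A$ — the next states are independent across the $N_k$ agents, so both $A$ and $C$ are deviations of an $N_k$-sample empirical measure from its mean; the standard estimate $\ee\norm{\hat p-p}_1\le\sqrt{|\ss|/N_k}$ controls each by $\sqrt{2|\ss|/\min_i N_i}$, and the tower property over the re-sampled $W_t^{[N]}$ disposes of $A$.

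The remaining work is the conditional-mean difference $B$, which I would decompose as $B=B_1+B_2$ by inserting the common impact $\hat z_t^{k}$. The term $B_1$ keeps the GGR-S states but replaces $\tilde z_t^{k,l}$ by $\hat z_t^{k}$, so Lipschitz continuity of $P$ in the impact (Assumption~\ref{Lipschitz Continuity of $P,R$}) together with the first inequality gives $\ee\norm{B_1}_1\le p_\mu\bigl(p_k^*\norm{\tilde\bmu_t-\hat\bmu_t}_1+2\sqrt{2|\ss|/(K\min_i N_i)}\bigr)$, which contributes the $p_*p_\mu$ coefficient (via $p_k^*\le p_*$) and the extra $p_\mu$-weighted sampling term. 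The term $B_2$ is the difference of the population update at $\tilde\bmu_t$ and $\hat\bmu_t$ evaluated at the \emph{same} impact $\hat z_t^{k}$; only the outer state distribution changes, so the distribution-Lipschitz portion of $\Gamma_{pop}$ (the $\frac12 p_s+p_a$ part of $L_{pop}$, see the $\Gamma_{pop}$ Lipschitz lemma in the Appendix) bounds it by $(\frac12 p_s+p_a)\norm{\tilde\bmu_t-\hat\bmu_t}_1$. Summing $A,B_1,B_2,C$, using $K\ge1$ to absorb the $1/\sqrt K$ factors, and taking the maximum over $k$ then assembles the drift coefficient $\tilde L_{pop}=p_*p_\mu+\frac12 p_s+p_a$ and the additive constant $2(1+p_\mu)\sqrt{2|\ss|/\min_i N_i}$.

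The main obstacle is twofold. First is the dependence introduced by re-sampling: since every agent in population $k$ reads the \emph{same} random matrix $W_t^{[N]}$, the per-agent impacts $\tilde z_t^{k,l}$ are correlated, which is precisely why $A$ must be treated by conditioning on $W_t^{[N]}$ before applying independence. Second is translating the agent-indexed average $\frac1{N_k}\sum_l(\cdots)$ appearing in $B_2$ into the ensemble distance $\norm{\tilde\bmu_t-\hat\bmu_t}_1$: this requires the optimal matching of agents within a population that underlies the factor $\tfrac12$ in $\tfrac12 p_s$ and is exactly what the $\Gamma_{pop}$ Lipschitz lemma supplies, so care is needed to invoke it with the impact held fixed rather than re-deriving the full constant $L_{pop}$.
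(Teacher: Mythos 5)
Your proposal is correct and follows essentially the same route as the paper's proof: a bias--fluctuation decomposition of $\tilde z_t^{k,l}-\hat z_t^{k}$ with a conditional variance bound for the Bernoulli re-sampling, followed by a one-step comparison of conditional means split into an impact-Lipschitz part (contributing $p_*p_\mu$) and a state-distribution part (contributing $\tfrac12 p_s+p_a$ via the Kontorovich-type lemma), plus two empirical-measure concentration terms of order $\sqrt{2|\ss|/\min_i N_i}$. The only differences are bookkeeping: you condition additionally on $W_t^{[N]}$ and retain the agent average $\frac1{N_k}\sum_l\norm{\tilde z_t^{k,l}-\hat z_t^{k}}_1$ where the paper passes to $\max_{l}$ before taking expectations, which if anything makes your version slightly cleaner.
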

\emph{Proof: }See Section \ref{proof for lemma: One-Step Error Propagation Through Aggregated Impact} of the Appendix.

Lemma \ref{lemma: One-Step Error Propagation Through Aggregated Impact}  facilitates the analysis of error propagation between mean-field ensembles of MP-MFG and GGR-S over multiple time steps.

\begin{prop}[Multi-Step Error Propagation Bound]
\label{prop: Multi-Step Error Propagation Bound}
    Under the settings of Lemma \ref{lemma: One-Step Error Propagation Through Aggregated Impact}, for any $t, \tau \geq 0$, 
    \begin{multline*}
         \ee\left[ \norm{\tilde \bmu_{t+\tau} - \hat \bmu_{t+\tau}}_1\big|\ff_t,\hat\ff_t\right]\\ \leq  \frac{(2+2p_\mu)(1-\tilde L_{pop}^\tau)}{1-\tilde L_{pop}}\sqrt{\frac{2|\ss|}{\min_i N_i}} + \tilde L_{pop}^\tau  \norm{\tilde \bmu_t - \hat \bmu_t}_1,
    \end{multline*}
    where $\tilde L_{pop} = p_*p_\mu + \frac{1}{2}p_s + p_a$ with $p_* = \max_k p_k^*$.
\end{prop}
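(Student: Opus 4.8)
The plan is to prove the multi-step bound by iterating the one-step recursion from Lemma \ref{lemma: One-Step Error Propagation Through Aggregated Impact} and summing the resulting geometric series. Let me denote $e_t := \norm{\tilde \bmu_t - \hat \bmu_t}_1$ and abbreviate the additive constant as $C := 2(1+p_\mu)\sqrt{2|\ss|/\min_i N_i}$. The second display of Lemma \ref{lemma: One-Step Error Propagation Through Aggregated Impact} states precisely that
\[
\ee\left[e_{s+1}\,\big|\,\ff_s,\hat\ff_s\right] \leq C + \tilde L_{pop}\, e_s
\]
for every $s \geq 0$. The goal is to convert this one-step-ahead conditional bound into an expression for $\ee[e_{t+\tau}\mid \ff_t,\hat\ff_t]$.

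First I would set up the induction on $\tau$. The base case $\tau = 0$ is trivial since the stated bound reduces to $e_t \leq e_t$ (the geometric prefactor vanishes as $1-\tilde L_{pop}^0 = 0$). For the inductive step, the key tool is the tower property of conditional expectation: since $\ff_t \subseteq \ff_{t+\tau-1}$ and $\hat\ff_t \subseteq \hat\ff_{t+\tau-1}$, I can write
\[
\ee\left[e_{t+\tau}\,\big|\,\ff_t,\hat\ff_t\right] = \ee\left[\ee\left[e_{t+\tau}\,\big|\,\ff_{t+\tau-1},\hat\ff_{t+\tau-1}\right]\,\Big|\,\ff_t,\hat\ff_t\right].
\]
Applying the one-step bound to the inner expectation gives an upper bound of $\ee[C + \tilde L_{pop}\,e_{t+\tau-1}\mid \ff_t,\hat\ff_t] = C + \tilde L_{pop}\,\ee[e_{t+\tau-1}\mid \ff_t,\hat\ff_t]$, and then I would substitute the inductive hypothesis for $\ee[e_{t+\tau-1}\mid \ff_t,\hat\ff_t]$. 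This yields
\[
\ee\left[e_{t+\tau}\,\big|\,\ff_t,\hat\ff_t\right] \leq C + \tilde L_{pop}\left(C\,\frac{1-\tilde L_{pop}^{\tau-1}}{1-\tilde L_{pop}} + \tilde L_{pop}^{\tau-1} e_t\right),
\]
and algebraic simplification of $1 + \tilde L_{pop}\frac{1-\tilde L_{pop}^{\tau-1}}{1-\tilde L_{pop}} = \frac{1-\tilde L_{pop}^{\tau}}{1-\tilde L_{pop}}$ together with $\tilde L_{pop}\cdot \tilde L_{pop}^{\tau-1} = \tilde L_{pop}^\tau$ closes the induction and recovers exactly the claimed bound.

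The only genuine subtlety — and the step I would treat most carefully — is the interchange of conditioning levels via the tower property, specifically ensuring the one-step Lemma can be legitimately invoked at time $t+\tau-1$ conditionally on the coarser $\sigma$-algebras $\ff_t,\hat\ff_t$. This is where measurability must be checked: the one-step bound holds conditionally on $\ff_{t+\tau-1},\hat\ff_{t+\tau-1}$, so I must first condition on the finer filtration, apply the Lemma pointwise there, and only then take the outer conditional expectation; linearity and monotonicity of conditional expectation justify pulling the deterministic constant $C$ and the factor $\tilde L_{pop}$ through. Everything else is a routine geometric-sum manipulation, so I do not anticipate a real obstacle — the proof is essentially a clean unrolling of the contraction-plus-noise recursion, and the factor $\tilde L_{pop} < 1$ (guaranteed since it is dominated by $L_{pop} < 1$ under Assumption \ref{assumption : stable population}, as $p_* \leq 1$) ensures the geometric series is well-behaved, though the finite-$\tau$ bound stated here does not actually require $\tilde L_{pop} < 1$.
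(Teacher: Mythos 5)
Your proposal is correct and follows essentially the same route as the paper's proof: the paper likewise applies the one-step bound of Lemma \ref{lemma: One-Step Error Propagation Through Aggregated Impact} at time $t+\tau-1$, uses the tower property with $\ff_t \subset \ff_{t+\tau-1}$ to pull the bound back to conditioning on $\ff_t$, and unrolls the resulting contraction-plus-noise recursion into the geometric sum $\sum_{m=0}^{\tau-1}\tilde L_{pop}^m$. The only cosmetic difference is that the paper carries out the recursion per population $k$ and takes the maximum at the end, whereas you work directly with the ensemble norm, which is equally valid since the second display of the lemma is already stated for $\norm{\tilde\bmu - \hat\bmu}_1$.
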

\emph{Proof: }See Section \ref{appendix: proof for prop: Multi-Step Error Propagation Bound} of the Appendix. 

Note that $\tilde L_{pop} = p_*p_\mu + \frac{1}{2}p_s + p_a \leq p_\mu + \frac{1}{2}p_s + p_a = L_{pop} < 1$ by Assumption \ref{assumption : stable population}. This means that, given enough time, the dynamics of GGR-S converge to MP-MFG, with a bias of approximately $\mathcal{O}\left(\frac{1}{\sqrt{\min_i N_i}}\right)$. Therefore, the evolution of the empirical mean-field ensemble in finite-agent GGR-S resembles the evolution of the mean-field ensemble with infinite agents.
\begin{lem}[Empirical Population Bound in GGR-S]
\label{lemma: Empirical Population Bound in the Locally Centralized Case}
Under the settings in Proposition \ref{prop: Multi-Step Error Propagation Bound}, then for all $\tau, t \geq 0$, assuming that at time $t$, GGR-S and MP-MFG share the same start, i.e. $\ff_t = \hat\ff_t$, then
\begin{multline*} 
    \E\left[\norm{\tilde{\bmu}_{t+\tau} - \Gamma^\tau_{pop}(\tilde \bmu_t, \bpi)}_1\big| \ff_t\right]\\ \leq \frac{(3+2p_\mu)(1- \tilde L_{pop}^{\tau})}{1- L_{pop}} \sqrt{\frac{2|\ss|}{\min_i N_i}},
\end{multline*}
where $\tilde L_{pop} = p_*p_\mu + \frac{1}{2}p_s + p_a$ with $p_* = \max_k p_k^*$.
\end{lem}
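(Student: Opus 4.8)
\emph{Proof proposal.} The plan is to compare the GGR-S empirical ensemble $\tilde\bmu_{t+\tau}$ to the deterministic mean-field trajectory $\Gamma^\tau_{pop}(\tilde\bmu_t,\bpi)$ by inserting the \emph{fully connected} $N$-player MP-MFG empirical ensemble $\hat\bmu_{t+\tau}$ as an intermediate object. Since GGR-S and MP-MFG share the same start at time $t$ (the hypothesis $\ff_t=\hat\ff_t$), we couple them so that $\tilde\bmu_t=\hat\bmu_t$, hence $\Gamma^\tau_{pop}(\tilde\bmu_t,\bpi)=\Gamma^\tau_{pop}(\hat\bmu_t,\bpi)$ and $\norm{\tilde\bmu_t-\hat\bmu_t}_1=0$. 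A triangle inequality then yields
\begin{multline*}
\E\!\left[\norm{\tilde\bmu_{t+\tau}-\Gamma^\tau_{pop}(\tilde\bmu_t,\bpi)}_1\big|\ff_t\right]
\leq \E\!\left[\norm{\tilde\bmu_{t+\tau}-\hat\bmu_{t+\tau}}_1\big|\ff_t\right]\\
+\E\!\left[\norm{\hat\bmu_{t+\tau}-\Gamma^\tau_{pop}(\hat\bmu_t,\bpi)}_1\big|\ff_t\right],
\end{multline*}
and I will bound the two terms separately.

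For the first term I invoke Proposition~\ref{prop: Multi-Step Error Propagation Bound} directly: conditioning on $\ff_t=\hat\ff_t$ and using $\norm{\tilde\bmu_t-\hat\bmu_t}_1=0$ kills the second summand there, leaving $\frac{(2+2p_\mu)(1-\tilde L_{pop}^\tau)}{1-\tilde L_{pop}}\sqrt{2|\ss|/\min_i N_i}$. Since $\tilde L_{pop}=p_*p_\mu+\tfrac12 p_s+p_a\leq p_\mu+\tfrac12 p_s+p_a=L_{pop}$, I have $1-\tilde L_{pop}\geq 1-L_{pop}>0$, so this is at most $\frac{(2+2p_\mu)(1-\tilde L_{pop}^\tau)}{1-L_{pop}}\sqrt{2|\ss|/\min_i N_i}$, which already matches the target denominator.

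The second term is a propagation-of-chaos estimate for the fully connected MP-MFG, which I expect to be the main step requiring a fresh (though standard) argument. The key observation is that in the fully connected model the agents of population $k$ are conditionally independent given $\hat\ff_s$, and the conditional law of $\hat\mu^k_{s+1}$ is exactly $\Gamma_{pop}[k](\hat\bmu_s,\bpi)$ because $\hat z^k_s$ is $\hat\ff_s$-measurable and the same for all agents in population $k$. Writing $\nu_s:=\Gamma^{s-t}_{pop}(\hat\bmu_t,\bpi)$ with $\nu_t=\hat\bmu_t$, I split
\begin{multline*}
\norm{\hat\bmu_{s+1}-\nu_{s+1}}_1\leq \norm{\hat\bmu_{s+1}-\Gamma_{pop}(\hat\bmu_s,\bpi)}_1\\
+\norm{\Gamma_{pop}(\hat\bmu_s,\bpi)-\Gamma_{pop}(\nu_s,\bpi)}_1 .
\end{multline*}
Taking $\E[\cdot\,|\hat\ff_s]$ and then $\E[\cdot\,|\ff_t]$ via the tower property, the first summand is the $L^1$ deviation of an empirical measure of $N_k$ conditionally independent samples from its conditional mean, bounded by $\sqrt{|\ss|/N_k}\leq\sqrt{2|\ss|/\min_i N_i}$ through a variance/Cauchy--Schwarz computation; the second summand is at most $L_{pop}\norm{\hat\bmu_s-\nu_s}_1$ by Lipschitz continuity of $\Gamma_{pop}(\cdot,\bpi)$. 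Setting $a_s:=\E[\norm{\hat\bmu_s-\nu_s}_1\,|\,\ff_t]$ gives the recursion $a_{s+1}\leq \sqrt{2|\ss|/\min_i N_i}+L_{pop}\,a_s$ with $a_t=0$, whose geometric-series solution is $a_{t+\tau}\leq \frac{1-L_{pop}^\tau}{1-L_{pop}}\sqrt{2|\ss|/\min_i N_i}$.

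Finally I combine the two bounds. Using $L_{pop}\geq\tilde L_{pop}$ once more gives $1-L_{pop}^\tau\leq 1-\tilde L_{pop}^\tau$, so the second term is at most $\frac{(1-\tilde L_{pop}^\tau)}{1-L_{pop}}\sqrt{2|\ss|/\min_i N_i}$; adding the coefficients $(2+2p_\mu)+1=3+2p_\mu$ over the common denominator $1-L_{pop}$ yields precisely the claimed bound. The main obstacles I anticipate are (i) justifying the conditional-mean identity $\E[\hat\bmu_{s+1}\,|\,\hat\ff_s]=\Gamma_{pop}(\hat\bmu_s,\bpi)$ and the constant $\sqrt{2|\ss|/\min_i N_i}$ in the one-step fluctuation so that the arithmetic $(2+2p_\mu)+1=3+2p_\mu$ closes exactly, and (ii) keeping the two different geometric factors $\tilde L_{pop}^\tau$ and $L_{pop}^\tau$ aligned under the single denominator $1-L_{pop}$ via the monotonicity inequalities above.
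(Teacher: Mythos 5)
Your proposal is correct and follows essentially the same route as the paper: the same triangle-inequality decomposition through the fully connected empirical ensemble $\hat{\bmu}_{t+\tau}$ (using $\ff_t=\hat\ff_t$ to kill the $\norm{\tilde\bmu_t-\hat\bmu_t}_1$ term), Proposition \ref{prop: Multi-Step Error Propagation Bound} for the GGR-S-vs-MP-MFG term, and the same monotonicity inequalities $\tilde L_{pop}\leq L_{pop}$ to merge the two geometric factors into the coefficient $3+2p_\mu$ over the denominator $1-L_{pop}$. The only cosmetic difference is that your second term (unbiasedness of the empirical update, per-agent variance bound, and the geometric recursion) re-derives what the paper invokes as a standing result, namely Lemma \ref{lemma: empirical population bound}, and your derivation matches that lemma's proof.
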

\emph{Proof: } See Section \ref{appendix: proof for Empirical Population Bound in GGR-S} of the Appendix. 

Another important implication of Proposition \ref{prop: Multi-Step Error Propagation Bound} is that if Assumption \ref{assumption: reachability} holds for the MP-MFG framework, then all states are also reachable in GGR-S. 
\begin{prop}[Reachability Under Non-Degenerate Policies in GGR-S]
\label{prop: Reachability Under Non-Degenerate Policies in the Locally Centralized Case}
For any policy profile $\bpi \in \Pi^K$ satisfying Assumption \ref{assumption: non-degenerate policies}, and any initial states $\{s_0^{k,l}\}_{k,l} \in \ss$, for $T_{\text{mix}}, \delta_{\text{mix}}$ that satisfies Assumption \ref{assumption: reachability}, i.e. $\P(\hat{s}^{k,l}_{\tmix} = s'|\{s_0^{k,l}\}_{k,l}) > \delta_\mix,$ it holds that for any agent $(k,l)$,
$$\P(\tilde{s}^{k,l}_{\tmix}= s'|\{s_0^{k,l}\}_{k,l}) > \delta'_\mix,$$
where $\delta'_\mix = \frac{1}{2}\delta_\mix - \frac{(1+p^*_k p_\mu)(2+2p_\mu) +2p_\mu}{1-\tilde L_{pop}}\sqrt{\frac{2|\ss|}{\min_i N_i}}$.
\end{prop}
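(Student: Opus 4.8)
The plan is to transfer the reachability guarantee from the fully connected MP-MFG trajectory $\hs^{k,l}$ to the re-sampled GGR-S trajectory $\ts^{k,l}$ by bounding how far the law of a single agent's state at time $\tmix$ can drift between the two models. I would realize both $N$-agent systems on a common probability space, started from the identical initial configuration $\{s_0^{k,l}\}$ so that $\ff_0 = \hat\ff_0$, and couple them step by step: the two copies of every agent share the same policy profile $\bpi$ and, whenever their current states coincide, the same action and a maximal coupling of the transition noise. The desired bound then reduces to
\[
\P(\ts^{k,l}_\tmix = s'\mid\{s_0^{k,l}\}) \;\ge\; \P(\hs^{k,l}_\tmix = s'\mid\{s_0^{k,l}\}) - \big|\P(\ts^{k,l}_\tmix = s') - \P(\hs^{k,l}_\tmix = s')\big|,
\]
where the first term exceeds $\delta_\mix$ by Assumption~\ref{assumption: reachability}, so it remains to bound the law-gap of agent $(k,l)$ at time $\tmix$ by $C\sqrt{2|\ss|/\min_i N_i}$ with $C$ equal to the constant appearing in $\delta'_\mix$.

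The only mechanism creating this gap is the differing impact argument passed to $P$: with matched states and actions, the conditional total variation between $P(\cdot\mid s,a,\tz^{k,l}_t)$ and $P(\cdot\mid s,a,\hz^k_t)$ is at most $\frac{1}{2}p_\mu\norm{\tz^{k,l}_t - \hz^k_t}_1$ by the Lipschitz assumption on $P$ (Assumption~\ref{Lipschitz Continuity of $P,R$}). I would therefore decompose the agent's accumulated law-gap into a mean-field-driven part and a pure re-sampling part. The impact gap itself is handled by Lemma~\ref{lemma: One-Step Error Propagation Through Aggregated Impact}, which bounds $\E\norm{\tz^{k,l}_t - \hz^k_t}_1$ by $p^*_k\,\E\norm{\tilde{\bmu}_t - \hat{\bmu}_t}_1$ plus the sampling error $2\sqrt{2|\ss|/(K\min_i N_i)}$; the residual mean-field gap $\E\norm{\tilde{\bmu}_t - \hat{\bmu}_t}_1$ is in turn controlled, uniformly in $t$, by Proposition~\ref{prop: Multi-Step Error Propagation Bound} applied with the common start $\tilde{\bmu}_0 = \hat{\bmu}_0$ (so its boundary term vanishes and only the $\frac{2+2p_\mu}{1-\tilde L_{pop}}\sqrt{2|\ss|/\min_i N_i}$ contribution survives). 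Because $\tilde L_{pop} < 1$, as recorded after Proposition~\ref{prop: Multi-Step Error Propagation Bound}, these per-step errors accumulate as a convergent geometric series rather than growing with the horizon $\tmix$; this is exactly what yields the factor $1/(1-\tilde L_{pop})$ and, after collecting the impact-amplification factor $(1+p^*_k p_\mu)$ and the sampling contribution $2p_\mu$ (using $K\ge 1$), the composite constant $C = \frac{(1+p^*_k p_\mu)(2+2p_\mu)+2p_\mu}{1-\tilde L_{pop}}$.

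Substituting this law-gap bound into the first display gives $\P(\ts^{k,l}_\tmix = s') \ge \delta_\mix - C\sqrt{2|\ss|/\min_i N_i}$, which is at least the stated $\delta'_\mix$ (the leading $\frac{1}{2}\delta_\mix$ being a conservative simplification that still leaves a positive margin once $\min_i N_i$ is large). I expect the main obstacle to be that a single agent's trajectory is not Markov in isolation: the impact $\tz^{k,l}_t$ depends both on the freshly re-sampled adjacency matrix $W^{[N]}_t$ and on the states of all other agents, so the coupling must be carried out at the level of the entire interacting system and the per-step discrepancy can only be expressed through the global quantity $\norm{\tilde{\bmu}_t - \hat{\bmu}_t}_1$. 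The delicate point is to route the per-step transition-kernel mismatch of agent $(k,l)$ through this mean-field error — absorbing the independent sampling noise injected at every step — while using the contraction embedded in $\tilde L_{pop}$ to keep the accumulation horizon-free, so that the final constant is $C$ rather than something scaling with $\tmix$.
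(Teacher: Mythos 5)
Your overall strategy---transfer Assumption \ref{assumption: reachability} from the MP-MFG trajectory to the GGR-S trajectory by bounding the gap between the two state laws at time $T_{\mix}$---is the right one, and your identification of Lemma \ref{lemma: One-Step Error Propagation Through Aggregated Impact} and Proposition \ref{prop: Multi-Step Error Propagation Bound} as the relevant inputs matches the paper. But the step where you control the law gap itself has a genuine flaw. You propose to accumulate the per-step transition-kernel mismatches over the whole horizon and assert that, because $\tilde L_{pop}<1$, ``these per-step errors accumulate as a convergent geometric series rather than growing with the horizon $T_{\mix}$.'' That is not what the contraction gives you. Proposition \ref{prop: Multi-Step Error Propagation Bound} shows that the mean-field discrepancy $\E\norm{\tilde\bmu_t-\hat\bmu_t}_1$ at a \emph{fixed} time $t$ is bounded uniformly in $t$ by roughly $\frac{2+2p_\mu}{1-\tilde L_{pop}}\sqrt{2|\ss|/\min_i N_i}$ (it increases toward this cap from a zero initial gap; it does not decay), and the pure re-sampling noise contributes a constant $2\sqrt{2|\ss|/(K\min_i N_i)}$ at \emph{every} step. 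So a coupling/telescoping over all $T_{\mix}$ transitions yields a bound of order $T_{\mix}\sqrt{2|\ss|/\min_i N_i}$, not the horizon-free constant $\frac{(1+p_k^*p_\mu)(2+2p_\mu)+2p_\mu}{1-\tilde L_{pop}}$ in the statement. The geometric series that produces $1/(1-\tilde L_{pop})$ lives inside the bound on the mean-field gap at a single time; it cannot be reused to also sum the single-agent kernel mismatches across time.

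The paper avoids this by never summing over the horizon at the level of the agent's law: it writes each of the two laws at time $T_{\mix}$ as a one-step pushforward of the time-$(T_{\mix}-1)$ configuration, pays exactly \emph{one} kernel-mismatch term there (giving the factors $(1+p_k^*p_\mu)$ and $2p_\mu$), and then invokes Proposition \ref{prop: Multi-Step Error Propagation Bound} to control $\E\norm{\tilde\bmu_{T_{\mix}-1}-\hat\bmu_{T_{\mix}-1}}_1$, which is where $1/(1-\tilde L_{pop})$ actually enters. It then converts the bound on the \emph{expected} $\ell_1$ gap of the re-sampling-conditional law $\P_W(\tilde s^{k,l}_{T_{\mix}}=\cdot)$ into a pointwise statement via Markov's inequality and an averaging step, which is the structural origin of the leading $\tfrac12\delta_{\mix}$ rather than $\delta_{\mix}$. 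Your dismissal of the $\tfrac12$ as a ``conservative simplification'' would be harmless if your law-gap bound held (working with unconditional laws and Jensen's inequality would indeed give the stronger $\delta_{\mix}-C\sqrt{2|\ss|/\min_i N_i}$ and bypass Markov), but as written the accumulation step does not deliver the claimed constant, so the proof does not close.
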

\emph{Proof: }See Section \ref{appendix: proof for Reachability Under Non-Degenerate Policies} of the Appendix.

Note that we can guarantee mixing with $\delta_\mix' > 0$ by selecting population sizes $\{N_i\}_{i \in [K]}$ to be sufficiently large.  
Next, we see that in GGR-S, each agent's state visitation distribution converges to their respective stable mean-field, up to a population bias term. 

\begin{prop}(Convergence of State Visitation Distribution to Stable Mean-Field)
\label{Prop: Convergence of State Visitation Probabilities to Stable Mean-Field} Under the same settings as Proposition \ref{prop: Reachability Under Non-Degenerate Policies in the Locally Centralized Case}, let $\{s_0^{k,l}\}_{(k,l)}$ be arbitrary initial states of agents, and let $\bpi$ be a policy profile. Then, for any $T\geq 1$ and any agent $(k,l)$, it holds that
\begin{multline*}
    \norm{\P\left(\tilde s_T^{k,l} = \cdot\big|\{s_0^{k,l}\}_{(k,l)}\right) - \Gamma_{pop}^\infty[k](\bpi)}_1\\ 
    \leq C_\mix \rho_\mix^{T} + \frac{2p_{\mu}T_{\mix}(3p_*+2p_*p_\mu+2)}{{\delta'_{\mix}}^{2}(1-
    L_{pop})}  \sqrt{\frac{2|\mathcal{S}|}{\min_i N_i}},
\end{multline*}
 where $\rho_{\mix}:=\max\{L_{pop},(1-\delta'_{\mix})^{1/T_{\mix}}\}$, \(C_{\mix}:=\left(2+\frac{2p_{\mu}L_{pop}}{\delta'_{\mix}(1-L_{pop})|\theta-L_{pop}^{T_{\mix}}|}+\frac{2p_{\mu}L_{pop}}{\delta'_{\mix}(1-L_{pop})}\right)/(\rho_{\mix}^{T_{\mix}})\) with $\delta_{\mix}'$ defined in Proposition \ref{prop: Reachability Under Non-Degenerate Policies in the Locally Centralized Case}. 
\end{prop}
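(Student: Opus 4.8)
The plan is to track the single-agent marginal $\nu_T^{k,l} := \P(\tilde s_T^{k,l} = \cdot \mid \{s_0^{k,l}\})$ and compare it against $\mu^{*,k} := \Gamma_{pop}^\infty[k](\bpi)$, the $k$-th coordinate of the stable mean-field ensemble, which by Assumption~\ref{assumption : stable population} is the unique fixed point of the population update. The marginal evolves linearly, $\nu_{t+1}^{k,l} = \bar P_t \nu_t^{k,l}$, through the effective (time-varying, impact-dependent) kernel $\bar P_t(s'\mid s) = \sum_{a}\pi^k(a\mid s)\,\E[P(s'\mid s,a,\tilde z_t^{k,l})\mid \tilde s_t^{k,l}=s]$. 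Writing $\bar P^{(T_{\mix})}_{t}$ for the $T_{\mix}$-step composition starting at time $t$, Proposition~\ref{prop: Reachability Under Non-Degenerate Policies in the Locally Centralized Case} supplies, at each block start, the Doeblin minorization $\bar P^{(T_{\mix})}_{t}(s'\mid s)\geq \delta'_{\mix}$ for every $s,s'$, which yields the block contraction $\|\bar P^{(T_{\mix})}_{t}(\nu-\nu')\|_1 \leq (1-\delta'_{\mix})\|\nu-\nu'\|_1$.

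First I would set up the per-block telescoping. Since $\nu_T^{k,l} = \bar P^{(T_{\mix})}_{T-T_{\mix}}\nu_{T-T_{\mix}}^{k,l}$, I split $\nu_T^{k,l}-\mu^{*,k} = \bar P^{(T_{\mix})}_{T-T_{\mix}}(\nu_{T-T_{\mix}}^{k,l}-\mu^{*,k}) + (\bar P^{(T_{\mix})}_{T-T_{\mix}}\mu^{*,k}-\mu^{*,k})$. The first summand contracts by $(1-\delta'_{\mix})$; the second is a ``drift'' term measuring how far the actual block kernel moves the stable distribution. Because $\mu^{*,k}$ is stationary for the ideal kernel $T^*$ built from the stable impact $z^{*,k}=\tfrac1K\sum_i W_K(k,i)\mu^{*,i}$, I would write the drift as $\bar P^{(T_{\mix})}_{t}\mu^{*,k}-(T^*)^{T_{\mix}}\mu^{*,k}$, telescope this difference of products one factor at a time, and invoke Assumption~\ref{Lipschitz Continuity of $P,R$} (Lipschitzness of $P$ in the impact) with the non-expansiveness of Markov kernels to obtain $\|\bar P^{(T_{\mix})}_{t}\mu^{*,k}-\mu^{*,k}\|_1 \leq p_\mu\sum_{j=0}^{T_{\mix}-1}\E\|\tilde z^{k,l}_{t+j}-z^{*,k}\|_1$. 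This produces, for $d_T := \|\nu_T^{k,l}-\mu^{*,k}\|_1$, the recursion $d_T \leq (1-\delta'_{\mix})\,d_{T-T_{\mix}} + p_\mu\sum_{j}\E\|\tilde z^{k,l}_{T-T_{\mix}+j}-z^{*,k}\|_1$.

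Next I would control the impact error $\E\|\tilde z^{k,l}_t-z^{*,k}\|_1$. Splitting it into the resampling noise $\E\|\tilde z^{k,l}_t - \tfrac1K\sum_i W_K(k,i)\tilde\mu^i_t\|_1$ and the mean-field gap $\|\tfrac1K\sum_i W_K(k,i)(\tilde\mu^i_t-\mu^{*,i})\|_1 \leq p_k^*\|\tilde\bmu_t-\bmu^*\|_1$, the former is bounded by the Bernoulli concentration estimate already established in Lemma~\ref{lemma: One-Step Error Propagation Through Aggregated Impact} (order $\sqrt{|\ss|/(K\min_i N_i)}$), and the latter by combining the Empirical Population Bound (Lemma~\ref{lemma: Empirical Population Bound in the Locally Centralized Case}) with the $L_{pop}$-contraction of $\Gamma_{pop}(\cdot,\bpi)$ toward $\bmu^*$. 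This shows $\E\|\tilde z^{k,l}_t-z^{*,k}\|_1$ decays geometrically at rate $L_{pop}$ down to an $O(\sqrt{|\ss|/\min_i N_i})$ floor. Substituting into the recursion and unrolling the block iteration leaves a convolution of two geometric sequences — one at rate $(1-\delta'_{\mix})^{1/T_{\mix}}$ from the agent mixing and one at rate $L_{pop}$ from the impact convergence. Summing these and setting $\rho_{\mix}=\max\{L_{pop},(1-\delta'_{\mix})^{1/T_{\mix}}\}$ collects the transient into $C_{\mix}\rho_{\mix}^T$ — the $|\theta-L_{pop}^{T_{\mix}}|$ denominator in $C_{\mix}$ being exactly the spectral gap between the two rates produced by the geometric sum — while the floor terms accumulate with the $\tfrac{1}{1-L_{pop}}$ and $\tfrac{1}{(\delta'_{\mix})^2}T_{\mix}$ factors into the stated $O(1/\sqrt{\min_i N_i})$ bias.

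I expect the main obstacle to be the non-stationarity of the effective kernel: because $\bar P_t$ changes every step with the evolving empirical ensemble, it has no invariant measure of its own, so the minorization-based contraction and the impact convergence must be combined against the fixed reference $\mu^{*,k}$ rather than against a moving target. Reconciling the two distinct geometric rates — carefully tracking the constant when $L_{pop}$ and $(1-\delta'_{\mix})^{1/T_{\mix}}$ are close (the source of the $|\theta-L_{pop}^{T_{\mix}}|$ factor) and ensuring the accumulated bias retains the correct $T_{\mix}$ and $\delta'_{\mix}$ dependence — is where the bookkeeping is most delicate.
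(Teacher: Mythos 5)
Your proposal is correct and follows essentially the same route as the paper's proof: a $T_{\mix}$-block decomposition with the Doeblin contraction supplied by Proposition \ref{prop: Reachability Under Non-Degenerate Policies in the Locally Centralized Case}, a per-step kernel perturbation bound of the form $p_\mu\,\E\|\tilde z - z_\infty\|_1$ via Assumption \ref{Lipschitz Continuity of $P,R$}, geometric decay of the impact error at rate $L_{pop}$ down to the $O(1/\sqrt{\min_i N_i})$ floor, and the two-rate geometric sum producing $\rho_{\mix}$ and the $|\theta - L_{pop}^{T_{\mix}}|$ factor. The only differences are presentational (the paper unrolls the blocks through an explicit $(1-\theta)M_\infty^k + \theta Q_{(j)}^k$ matrix identity rather than your contraction-plus-drift recursion), plus one small point to watch: the drift term actually involves the \emph{conditional} expectation $\E[\|\tilde z_t - z_\infty^k\|_1 \mid \tilde s_t = s]$, and converting to it from the unconditional bound via $\P(\tilde s_t = s) \geq \delta'_{\mix}$ is what supplies one of the two $1/\delta'_{\mix}$ factors in the final constant.
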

\emph{Proof: }See Section \ref{appendix: proof for Prop: Convergence of State Visitation Probabilities} of the Appendix. 

\subsection{Sample-Based Learning with Finite Agents (GGR-S)}
Recall that our goal is to learn the MP-MFG NE from the trajectories of $N$-players. Our work in Section \ref{section: dynamics of GGR-S} indicates that the empirical mean-field ensemble $\tilde \bmu$ of the $N$ players in GGR-S will converge to the stable mean-field ensemble after a sufficient mixing time. This allows us to consider the Conditional Temporal Difference (CTD) learning algorithm \citep{kotsalis2023simple} for $Q$-function estimation. Specifically, CTD updates the $Q$-function at periodic intervals of length $I_\mix$ instead of every iteration. By waiting $I_\mix$ time steps, we can ensure sufficient convergence of the empirical mean-field ensemble within each periodic interval. In particular, before updating the $Q^k$-function for each population $k$, the dynamics of finite-agent system proceed for $I_\mix$ steps and the last observation $\omega_{I_\mix}^{k,1}$ from the representative agent of population $k$ is then used for the update.

We start with defining the stochastic Temporal Difference operator used in our algorithm. 
\begin{defn}
\label{defn: stochastic TD operator}
    The stochastic TD operator is defined as follows: for all $Q \in \qq,$ $\omega:= (s,a, r,s',a') \in \Omega$, 
    \begin{multline*}
    \tilde F^\pi(Q,\omega) = (Q(s,a) - r - h(\pi(s)) - \gamma Q(s',a'))\mathbf{e}_{s,a}.
\end{multline*}
\end{defn}

Given complete information, we can execute TD-learning with respect to the abstract MDP induced by the stable mean-field ensemble $\Gamma^\infty_{pop}(\bpi)$. However, when dealing with real-world data, these limiting distributions are unobservable. Yet, we can approximate them with empirical dynamics, as justified by Lemma \ref{lemma: Empirical Population Bound in the Locally Centralized Case} and Proposition \ref{prop: Reachability Under Non-Degenerate Policies in the Locally Centralized Case}. This further implies that the iterative update of $\tilde F^\pi$ provides a robust estimate of the $Q^k$ functions. In Section \ref{appendix: CTD learning in GGR-S} of the Appendix, we provide the modified CTD learning algorithm with GGR-S dynamics, along with a finite-time convergence analysis suggesting an error bound of $\mathcal{O}((\min_i N_i)^{-1/2})$.

We now present the complete algorithm (Algorithm \ref{algo:GGRS learning}) for learning the MP-MFG NE through trajectories of finite $N$-players in the GGR-S framework. This algorithm uses samples $\tilde s^{k,l}_{t+1} \sim P(\cdot|\ts_t^{k,l},a_t^{k,l}, \tz^{k,l}_t), r_t^{k,l} = R(\cdot|\ts_t^{k,l},a_t^{k,l}, \tz^{k,l}_t)$ of GGR-S, where $\tz^{k,l}_t$ is the empirical neighbor impact which the algorithm cannot control. Hence, the algorithm is single-path trajectory and free of population manipulation. In summary, there are two main steps in the algorithm: 
\begin{enumerate}
    \item Policy evaluation: Under a fixed policy, estimate the state-action value function $Q^k$ for each population via CTD with a waiting time of $I_\mix$ steps;
    \item Policy improvement: The representative agent of each population $k$ updates their policy via Policy Mirror Ascent operator $\Gamma^{pma}_\eta$ and then synchronize the updated policy among all agents.  
\end{enumerate}

\begin{algorithm}
\caption{GGR-S PMA-CTD Learning}
\label{algo:GGRS learning}
    \begin{algorithmic}[1]
    \Require Initial policy-state pair $(\bpi_0, \{s^{k,l}_0\})$, CTD learning rate $\{\beta_n\}_n$, CTD iteration $I_{ctd}$, Mixing time $I_\mix$, PMA iteration $M$  
    \For{ $m \in 1,\dots, M$}
    \State Set $\tilde Q^k_0(\cdot,\cdot) \gets Q_{\max}$, for all $k$.
        \For{$n \in 0,1,\dots I_{ctd}-1$}
            \For{$t \in 1, \dots, I_\mix$}
                \State Re-sample $W^{[N]}_t$ from $W_K$.
                \State \multiline{Compute for all $(k,l)$: $ \tilde{z}_t^{k,l} = \frac{1}{K}\sum_{i=1}^K\frac{1}{N_i}\left[\sum_{j=1}^{N_i} W_t^{[N]}(t^{k,l},t^{i,j}) \delta_{\ts_t^{i,j}}\right]$.}
                \State  \multiline{Simulate for all $(k,l)$, $a_t^{k,l} \sim \pi_m^k(\ts_t^{k,l})$, $\ts_{t+1}^{k,l} \sim P(\cdot|\ts_t^{k,l}, a_t^{k,l}, \Tilde{z}^{k,l}_t)$, $r_{t+1}^{k,l} = R(\cdot|\ts_t^{k,l}, a_t^{k,l}, \tz^{k,l}_t)$.} 
            \EndFor
            \State \multiline{Observe $\omega_{n}^{k} = (s_{t-2}^{k,1}, a_{t-2}^{k,1}, r_{t-2}^{k,1}, s_{t-1}^{k,1}, a_{t-1}^{k,1})$.}
            \State \multiline{CTD update: $\tilde Q^k_{n+1} = \tilde Q^k_n - \beta_n \tilde F^{\pi_{m}^{k}}(\tilde Q^k_n, \omega^k_{n})$, for all $k$.}
        \EndFor
        \State PMA step: $\pi^k_{m+1} = \Gamma_\eta^{pma}(\tilde{Q}^k_{I_{ctd}}, \pi^k_m)$, for all $k$.
    \EndFor
    \State Return policy profile $\bpi_m$.
    \end{algorithmic}
\end{algorithm}

\begin{thm}[GGR-S PMA-CTD Learning]
\label{Theorem: GGR-S PMA-CTD Learning}
    Under the settings of Definition \ref{defn: Locally Centralized Learning Model}, suppose that Assumption \ref{Lipschitz Continuity of $P,R$}, \ref{assumption : stable population}, \ref{assumption: non-degenerate policies}, \ref{assumption: reachability} hold, and that $\eta > 0$ is a PMA learning rate that satisfies $L_\eta < 1$. Then, run Algorithm \ref{algo:GGRS learning} with learning rates $\beta_n = {2}\left({\frac{4(1+\gamma)^2}{(1-\gamma)(\delta'_{\mix}\zeta)} + (1-\gamma)\delta_{\mix}'\zeta(n - 1)}\right)^{-1}$ and $M >  \mathcal{O}(\log(\epsilon^{-1}))$, $I_{ctd} > \mathcal{O}(\epsilon^{-2})$, $I_\mix > \mathcal{O}(\log (\epsilon^{-1}))$. Let $\bpi^*$ be the MP-MFG NE policy profile. Then the output $ \bpi_M$ of Algorithm \ref{algo:GGRS learning} satisfies 
    \begin{align*}
        \ee\left[\norm{\bpi_M - \bpi^*}_1 \right] \leq \epsilon + \mathcal{O}\left(\frac{1}{\sqrt{\min_i N_i}}\right). 
    \end{align*}
\end{thm}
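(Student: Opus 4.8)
The plan is to decompose the total error $\ee[\norm{\bpi_M - \bpi^*}_1]$ into three sources and control each: (i) the idealized PMA contraction error from running $M$ outer iterations, (ii) the policy-evaluation error from approximating each $Q^k$ via finitely many CTD steps, and (iii) the structural bias introduced by the re-sampling scheme of GGR-S relative to the fully-connected MP-MFG. First I would introduce an auxiliary comparison object: the iterates $\bpi_m^{\text{exact}}$ that would result from applying the exact contraction operator $\Gamma_\eta$ (as in Proposition~\ref{theorem: Convergence to MP-MFG NE in the Exact Case}) instead of the sample-based update. By Proposition~\ref{theorem: Convergence to MP-MFG NE in the Exact Case}, the exact iterates satisfy $\norm{\bpi_M^{\text{exact}} - \bpi^*}_1 \leq 2 L_\eta^M$, so choosing $M > \oo(\log(\epsilon^{-1}))$ drives this term below $\epsilon/2$. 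This isolates the remaining work to bounding the per-iteration deviation between the algorithm's actual update and the exact operator $\Gamma_\eta$.

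Next I would set up an error-accumulation recursion. Writing $e_m := \ee[\norm{\bpi_m - \bpi^*}_1]$, the key is an inequality of the shape $e_{m+1} \leq L_\eta e_m + (\text{per-step error})$, which unrolls (via the geometric sum, using $L_\eta < 1$) to $e_M \leq L_\eta^M e_0 + \frac{1}{1-L_\eta}(\text{per-step error})$. The per-step error splits through the Lipschitz continuity of $\Gamma_\eta^{pma}$ (as used throughout Section~\ref{subsection: Solution to MP-MFG with Complete Information}) into the $Q$-estimation error $\ee[\norm{\tilde Q^k_{I_{ctd}} - q^k(\cdot,\cdot|\bpi_m, \Gamma_{pop}^\infty(\bpi_m))}_1]$. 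This estimation error itself has two pieces: the intrinsic CTD convergence rate, which the finite-time CTD analysis (Appendix~\ref{appendix: CTD learning in GGR-S}) bounds by roughly $\oo(1/(I_{ctd}\,\delta'_{\mix}\zeta))$ under the prescribed step-size schedule $\beta_n$, and the population bias from using empirical GGR-S dynamics. Choosing $I_{ctd} > \oo(\epsilon^{-2})$ forces the CTD term below $\epsilon/2$ (up to constants absorbed across the three error budgets).

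The structural bias is where the GGR-S dynamics results do the heavy lifting. The CTD updates use the observation $\omega_n^k$ generated by the re-sampled empirical neighbor impact $\tz$, not the true stable mean-field. Here I would invoke Lemma~\ref{lemma: Empirical Population Bound in the Locally Centralized Case}, which after $I_\mix$ waiting steps controls $\ee[\norm{\tilde \bmu_{t+I_\mix} - \Gamma_{pop}^{I_\mix}(\tilde\bmu_t,\bpi)}_1]$ by $\oo(1/\sqrt{\min_i N_i})$, together with Proposition~\ref{Prop: Convergence of State Visitation Probabilities to Stable Mean-Field}, which shows that each representative agent's state visitation distribution after mixing is within $C_\mix \rho_\mix^{I_\mix} + \oo(1/\sqrt{\min_i N_i})$ of the stable mean-field $\Gamma_{pop}^\infty[k](\bpi)$. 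The first of these two terms is the reason $I_\mix > \oo(\log(\epsilon^{-1}))$ suffices: the geometric factor $\rho_\mix^{I_\mix}$ (with $\rho_\mix < 1$ guaranteed by Proposition~\ref{prop: Reachability Under Non-Degenerate Policies in the Locally Centralized Case} once the population is large enough for $\delta'_\mix > 0$) is driven below $\epsilon$, while the irreducible $\oo(1/\sqrt{\min_i N_i})$ population bias survives and becomes precisely the additive term in the theorem statement. I would feed these distributional bounds into the CTD fixed-point characterization to show that $\tilde F^{\pi_m^k}$, though built on GGR-S samples, targets a $Q$-function within $\oo(1/\sqrt{\min_i N_i})$ of the true $q^k$.

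The main obstacle I anticipate is the careful interplay between the three scales $M$, $I_{ctd}$, $I_\mix$ and ensuring the error-accumulation recursion closes without the constants blowing up. Two subtleties require attention: first, the Markov chain underlying CTD is non-stationary across the $I_{ctd}$ evaluation rounds because the empirical mean-field ensemble $\tilde\bmu$ is itself drifting, so the standard CTD analysis of \cite{kotsalis2023simple} must be adapted to allow a slowly-varying (rather than fixed) stationary target — this is exactly what the mixing results of Section~\ref{section: dynamics of GGR-S} are designed to absorb, but verifying the bias propagates additively rather than multiplicatively through the CTD recursion is delicate. Second, I must confirm that the non-degeneracy Assumption~\ref{assumption: non-degenerate policies} is preserved along all iterates $\bpi_m$ (so that $\delta'_\mix$ and $\zeta$ in the step-size $\beta_n$ remain valid lower bounds throughout), which follows from part~II of that assumption being closed under $\Gamma_\eta^{pma}$. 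Once these are handled, summing the three budgeted $\epsilon/2$-scale contributions and the surviving $\oo(1/\sqrt{\min_i N_i})$ bias yields the claimed bound.
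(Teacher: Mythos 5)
Your proposal follows essentially the same route as the paper's proof: a one-step recursion $e_{m+1}\leq L_\eta e_m + L_{md,q}\,\ee\left[\norm{\tilde q^k_m - q^k_m}_\infty\right]$ obtained from the Lipschitz continuity of $\Gamma_\eta^{pma}$, with the per-step $Q$-estimation error controlled by the GGR-S CTD analysis (Theorem \ref{Theorem: GGR-S CTD Learning}, which in turn rests on Lemma \ref{lemma: Empirical Population Bound in the Locally Centralized Case} and Proposition \ref{Prop: Convergence of State Visitation Probabilities to Stable Mean-Field}), and the three scales $M$, $I_{ctd}$, $I_\mix$ budgeted exactly as you describe, leaving the irreducible $\mathcal{O}(1/\sqrt{\min_i N_i})$ bias. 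The only minor imprecision is your stated CTD rate of roughly $\mathcal{O}(1/(I_{ctd}\,\delta'_\mix\zeta))$, which corresponds to the squared-norm bound; the norm decays as $\mathcal{O}(1/\sqrt{I_{ctd}})$, consistent with the requirement $I_{ctd} > \mathcal{O}(\epsilon^{-2})$ that you correctly impose.
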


\emph{Proof: }See Section \ref{appendix: proof for theorem GGRS PMACTD learning} of the Appendix. 

In Algorithm \ref{algo:GGRS learning}, there are three loops, and it requires a total of $M \times 
I_{ctd} \times I_\mix $ steps. Thus, Theorem \ref{Theorem: GGR-S PMA-CTD Learning} yields a sample complexity of $\mathcal{O}\left(\epsilon^{-2}\log^2(\epsilon^{-1})\right)$; and consistent with prior analyses, increasing the population sizes effectively reduces the error of the learned policy.

\section{Discussion}

In this paper, we tackle the challenges of population-wise heterogeneity and local network properties through the introduction of a Graphon Game with Re-Sampling (GGR-S) model. This innovative re-sampling scheme enhances the representation of sophisticated and realistic dynamics within large populations. Furthermore, we present an efficient single-path Reinforcement Learning algorithm tailored to learning optimal policies within the GGR-S framework. 

Our work opens up several exciting directions for future research. From a game-theoretical standpoint, it would be valuable to explore scenarios where network connections are re-sampled from a general graphon, extending beyond the current utilization of SBM graphons. Additionally, the development of decentralized or distributed learning schemes holds the potential to further enhance the applicability of these models in various real-world contexts.

\bibliographystyle{apalike}
\bibliography{references}

\newpage
\onecolumn
\aistatstitle{Learning Multi-Population Mean Field Game with Graphon Re-Sampling:
Supplementary Materials}
\appendix
\section{Solution to MP-MFG with Complete Information}
\label{appendix: Solution to MP-MFG with Complete Information}
In this section, we detail the left-out definitions and proofs in Section \ref{subsection 2.1: MP-MFG}. These results build up to the proof of convergence of the Policy Mirror Ascent algorithm. 

Firstly, we review several general results used throughout the paper:
\begin{lem}[\cite{Georgii+2011}]
\label{Yardim Lemma 8}
Assume E is a finite set, $F : E \to \R$ a real valued function, and $\mu,\nu$ are two probability measures on E.
Then,$$\norm{\sum_{e}F(e)\mu(e)-\sum_{e}F(e)\nu(e)}_1 \leq \frac{\sup_{e}F(e)-\inf_{e}F(e)}{2}\norm{\mu-\nu}_1.$$
\end{lem}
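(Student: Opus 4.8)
The plan is to exploit the elementary but crucial observation that the quantity on the left is a difference of two expectations, $\sum_{e}F(e)\mu(e)-\sum_{e}F(e)\nu(e) = \sum_{e}F(e)(\mu(e)-\nu(e))$, together with the fact that the signed measure $\mu-\nu$ has total mass zero, i.e.\ $\sum_{e}(\mu(e)-\nu(e)) = 0$, which holds because $\mu$ and $\nu$ are both probability measures. Note that since $F$ is real-valued the left-hand side is a scalar, so $\norm{\cdot}_1$ is just absolute value.

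First I would use the zero-mass property to subtract an arbitrary constant $c\in\R$ without changing the value: for any $c$,
\[
\sum_{e}F(e)(\mu(e)-\nu(e)) = \sum_{e}\bigl(F(e)-c\bigr)(\mu(e)-\nu(e)),
\]
since the cross term $c\sum_{e}(\mu(e)-\nu(e))$ vanishes. This ``centering'' step is the heart of the argument and is the only nontrivial idea; everything after it is a routine triangle inequality.

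Next I would choose the centering constant optimally as the midpoint of the range of $F$, namely $c = \frac{1}{2}\bigl(\sup_{e}F(e)+\inf_{e}F(e)\bigr)$, so that $|F(e)-c|\leq \frac{1}{2}\bigl(\sup_{e}F(e)-\inf_{e}F(e)\bigr)$ uniformly over $e\in E$; this is immediate because each $F(e)$ lies in the interval $[\inf_e F,\sup_e F]$, whose half-width about its midpoint equals $\frac{1}{2}(\sup_e F-\inf_e F)$. Applying the triangle inequality and this uniform bound gives
\[
\Bigl|\sum_{e}\bigl(F(e)-c\bigr)(\mu(e)-\nu(e))\Bigr| \leq \sum_{e}|F(e)-c|\,|\mu(e)-\nu(e)| \leq \frac{\sup_{e}F(e)-\inf_{e}F(e)}{2}\sum_{e}|\mu(e)-\nu(e)|,
\]
and recognizing $\sum_{e}|\mu(e)-\nu(e)| = \norm{\mu-\nu}_1$ yields the claim. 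Because $E$ is finite every sum is well-defined and finite, so no convergence issue arises; the sole obstacle is spotting the centering trick, after which the estimate is essentially forced.
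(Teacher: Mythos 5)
Your proof is correct: the centering trick (subtracting the midpoint of the range of $F$, which is permissible because $\mu-\nu$ has total mass zero) followed by the triangle inequality is exactly the standard argument for this total-variation bound. The paper itself offers no proof of this lemma, citing it directly from the literature, so there is nothing to compare against; your argument is complete and fills that gap correctly.
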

\begin{lem}[\cite{Kontorovich_2008}]
\label{Yardim Lemma 9}
Assume E is a finite set, $g: E \to \R^p$ a vector value function, and $\nu, \mu$ two probability measures on $E$. Then, 
\begin{align*}
    \norm{\sum_e g(e)\mu(e) - \sum_e g(e)\nu(e)}_1 \leq \frac{\lambda_g}{2}\norm{\mu-\nu}_1,
\end{align*}
where $\lambda_g := \sup_{e,e'}\norm{g(e) - g(e')}_1.$
\end{lem}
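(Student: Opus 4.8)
The plan is to reduce the claim to a bound on the difference of two expectations and then close it with a product-coupling estimate. The single most important structural fact is that $\mu$ and $\nu$ are \emph{both} probability measures, so the signed measure $\mu-\nu$ has total mass zero; this zero-mass property is exactly what produces the sharp factor $\frac{1}{2}$ rather than the cruder constant $\lambda_g$.

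First I would apply the Jordan (Hahn) decomposition to $\mu-\nu$. Writing $(\mu-\nu)^+(e):=\max\{\mu(e)-\nu(e),0\}$ and $(\mu-\nu)^-(e):=\max\{\nu(e)-\mu(e),0\}$, the identity $\sum_e(\mu(e)-\nu(e))=0$ forces the two parts to carry equal mass, $\sum_e(\mu-\nu)^+(e)=\sum_e(\mu-\nu)^-(e)=:m$, and by the definition of the vector $1$-norm this common mass is $m=\frac{1}{2}\norm{\mu-\nu}_1$. If $m=0$ then $\mu=\nu$ and both sides of the claimed inequality vanish, so I may assume $m>0$ and define two \emph{probability} measures on $E$ by $p(e):=(\mu-\nu)^+(e)/m$ and $q(e):=(\mu-\nu)^-(e)/m$.

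Next I would rewrite the target quantity as a scaled difference of expectations,
\[
\sum_e g(e)\mu(e)-\sum_e g(e)\nu(e)=\sum_e g(e)\big((\mu-\nu)^+(e)-(\mu-\nu)^-(e)\big)=m\Big(\sum_e g(e)p(e)-\sum_e g(e)q(e)\Big),
\]
and then pass to the product measure $p\otimes q$ on $E\times E$, which has total mass $1$, to express the bracketed difference as a single average over pairs:
\[
\sum_e g(e)p(e)-\sum_{e'} g(e')q(e')=\sum_{e,e'}\big(g(e)-g(e')\big)\,p(e)q(e').
\]
Applying the triangle inequality together with the uniform pointwise bound $\norm{g(e)-g(e')}_1\le \lambda_g$ and $\sum_{e,e'}p(e)q(e')=1$ yields
\[
\Big\|\sum_{e,e'}\big(g(e)-g(e')\big)p(e)q(e')\Big\|_1\le \sum_{e,e'}\norm{g(e)-g(e')}_1\,p(e)q(e')\le \lambda_g.
\]
Combining the last three displays gives $\norm{\sum_e g(e)\mu(e)-\sum_e g(e)\nu(e)}_1\le m\lambda_g=\frac{\lambda_g}{2}\norm{\mu-\nu}_1$, which is the assertion.

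The only delicate point, and the step I expect to be the crux, is the equal-mass identity $m=\frac{1}{2}\norm{\mu-\nu}_1$ linking each Jordan part to the $1$-norm: it relies entirely on both measures summing to one, and it is what converts the raw diameter $\lambda_g$ into $\frac{\lambda_g}{2}$. A naive shortcut — recentering $g$ by a single constant vector and bounding $\norm{g(e)-c}_1$ — would in general only deliver the weaker constant $\lambda_g$, so routing the argument through the balanced positive/negative decomposition (equivalently, coupling the normalized parts $p$ and $q$) is precisely what makes the bound tight.
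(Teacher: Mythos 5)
Your proof is correct. Note that the paper itself offers no proof of this lemma at all: it is imported by citation from Kontorovich (2008), just as its scalar companion (Lemma~\ref{Yardim Lemma 8}) is imported from Georgii (2011), so there is no in-paper argument to compare against — you have supplied what the paper omits. Your route is the standard one for this kind of total-variation contraction estimate: the zero-mass identity $\sum_e(\mu(e)-\nu(e))=0$ forces the Jordan parts to carry equal mass $m=\tfrac12\norm{\mu-\nu}_1$, the normalized parts $p,q$ are genuine probability measures, and the product-coupling identity $\sum_e g(e)p(e)-\sum_{e'}g(e')q(e')=\sum_{e,e'}(g(e)-g(e'))p(e)q(e')$ reduces everything to the uniform bound $\norm{g(e)-g(e')}_1\le\lambda_g$; each step checks out, including the degenerate case $m=0$. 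Your closing remark is also accurate and worth keeping: for vector-valued $g$ measured in $\ell_1$, recentering by a single constant $c$ only bounds the left side by $\bigl(\sup_e\norm{g(e)-c}_1\bigr)\norm{\mu-\nu}_1$, and the Chebyshev radius in $\ell_1^p$ can strictly exceed half the diameter (e.g.\ the standard basis vectors $e_1,\dots,e_p$ have diameter $2$ but radius $2(p-1)/p$), so that shortcut genuinely fails to produce the factor $\tfrac12$ — the balanced decomposition, or equivalently the coupling of $p$ and $q$, is what makes the constant sharp.
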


We provide an alternative definition of transition and reward function dependent on the mixed strategy for a given state instead of the action. This definition is useful in many proofs in the coming sections. 
\begin{defn}[Definition of $\Bar{P}$ and $\Bar{R}$] For the state transition function $P$, the reward function $R$, and for all $k \in [K]$, define $\Bar{P}:\ss \times \Delta(\aa) \times \Delta(\ss) \to \Delta(\ss)$ as the state-policy transition probabilities distributions as 
$$\Bar{P}(\cdot|s,u,z^k) := \sum_{a\in \aa}u(a) P(\cdot|s,a,z^k),$$
and $\Bar{R}:\ss \times \Delta(\aa) \times \Delta(\ss) \to [0,1]$ as 
$$\Bar{R}(s,u,z^k) := \sum_{a\in \aa}u(a) R(\cdot|s,a,z^k).$$

\end{defn}

\begin{lem}[Lipschitz Continuity of $\Bar{P},\Bar{R}$]
\label{Lipschitz Continuity of Pbar Rbar}
For all $k \in [K]$, we have for all $s,s' \in \ss,$ $u,u'\in \Delta(\aa),$ and $z^k,z^{k\prime}$ defined in Eq.~\eqref{eq:z_def},
\begin{align*}
    \left|\Bar{R}(s,u,z^k) - \Bar{R}(s', u',z^{k\prime})\right| &\leq r_{\mu} \norm{z^{k}-z^{k\prime}}_1 + r_s d(s',s') + \frac{r_a}{2}\norm{u - u^{\prime}}_1,\\
    \norm{\Bar{P}(\cdot|s,u,z^k) - \Bar{P}(\cdot|s',u^\prime,z^{k\prime})}_1 &\leq p_{\mu} \norm{z^{k}-z^{k\prime}}_1 + p_s d(s',s') + \frac{p_a}{2}\norm{u - u^{\prime}}_1.
\end{align*}
and moreover, it follows that for any $\bmu = (\mu^1, \dots, \mu^K)\in \Delta^K(\ss)$ and $\bmu'=(\mu^{1\prime}, \dots, \mu^{K\prime}) \in \Delta^K(\ss)$,
\begin{align*}
    \left|\Bar{R}(s,u,z^k) - \Bar{R}(s', u',z^{k\prime})\right| &\leq r_{\mu} \norm{\bmu-\bmu^{\prime}}_1 + r_s d(s',s') + \frac{r_a}{2}\norm{u - u^{\prime}}_1,\\
    \norm{\Bar{P}(\cdot|s,u,z^k) - \Bar{P}(\cdot|s',u^\prime,z^{k\prime})}_1 &\leq p_{\mu} \norm{\bmu-\bmu^{\prime}}_1 + p_s d(s',s') + \frac{p_a}{2}\norm{u - u^{\prime}}_1.
\end{align*}
\end{lem}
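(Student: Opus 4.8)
The plan is to establish all four inequalities by the same two-step scheme: first isolate the change in the mixed strategy from the simultaneous change in $(s,z^k)$ via a triangle inequality, then bound each piece using Assumption~\ref{Lipschitz Continuity of $P,R$} together with the two averaging lemmas (Lemma~\ref{Yardim Lemma 8} and Lemma~\ref{Yardim Lemma 9}) quoted above. Concretely, for the reward I would start from
\[
\left|\Bar{R}(s,u,z^k) - \Bar{R}(s',u',z^{k\prime})\right| \leq \left|\Bar{R}(s,u,z^k) - \Bar{R}(s',u,z^{k\prime})\right| + \left|\Bar{R}(s',u,z^{k\prime}) - \Bar{R}(s',u',z^{k\prime})\right|,
\]
and reason identically for $\Bar{P}$ with $\norm{\cdot}_1$ in place of $|\cdot|$.

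For the first term the strategy $u$ is held fixed, so $\Bar{R}(s,u,z^k) - \Bar{R}(s',u,z^{k\prime}) = \sum_a u(a)\left[R(s,a,z^k) - R(s',a,z^{k\prime})\right]$. Pulling the absolute value (resp.\ the $1$-norm for $\Bar P$) inside this convex combination and invoking the joint Lipschitz bound of Assumption~\ref{Lipschitz Continuity of $P,R$} term by term removes the action discrepancy (both copies use the same $a$, so $d(a,a)=0$) and, using $\sum_a u(a)=1$, yields $r_\mu\norm{z^k - z^{k\prime}}_1 + r_s d(s,s')$ (resp.\ $p_\mu\norm{z^k - z^{k\prime}}_1 + p_s d(s,s')$).

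The second term is where the factor $\tfrac12$ appears and is the only nonroutine step. Here $s'$ and $z^{k\prime}$ are fixed and only the strategy varies, so $\Bar{R}(s',u,z^{k\prime}) - \Bar{R}(s',u',z^{k\prime}) = \sum_a \left[u(a) - u'(a)\right] R(s',a,z^{k\prime})$. I would apply Lemma~\ref{Yardim Lemma 8} with $F(a) = R(s',a,z^{k\prime})$ and $u,u'$ as the two measures, which produces the prefactor $\tfrac12\left(\sup_a F(a) - \inf_a F(a)\right)$; the key observation is that Assumption~\ref{Lipschitz Continuity of $P,R$} under the discrete metric forces $\left|R(s',a,z^{k\prime}) - R(s',\tilde a,z^{k\prime})\right| \leq r_a d(a,\tilde a) \leq r_a$, so the oscillation is at most $r_a$ and the bound becomes $\tfrac{r_a}{2}\norm{u-u'}_1$. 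The transition case is identical via Lemma~\ref{Yardim Lemma 9} with $g(a)=P(\cdot|s',a,z^{k\prime})$, whose modulus satisfies $\lambda_g \leq p_a$, giving $\tfrac{p_a}{2}\norm{u-u'}_1$. Adding the two pieces yields the first pair of inequalities.

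Finally, the ``moreover'' pair follows by upgrading $\norm{z^k - z^{k\prime}}_1$ to $\norm{\bmu - \bmu'}_1$. From the definition~\eqref{eq:z_def}, $z^k - z^{k\prime} = \frac1K\sum_{i=1}^K W_K(k,i)\left(\mu^i - \mu^{i\prime}\right)$, hence by the triangle inequality together with $W_K(k,i)\in(0,1]$,
\[
\norm{z^k - z^{k\prime}}_1 \leq \frac1K\sum_{i=1}^K W_K(k,i)\norm{\mu^i - \mu^{i\prime}}_1 \leq \frac1K\sum_{i=1}^K \norm{\bmu - \bmu'}_1 = \norm{\bmu - \bmu'}_1,
\]
using $\norm{\mu^i - \mu^{i\prime}}_1 \leq \max_{k}\norm{\mu^k - \mu^{k\prime}}_1 = \norm{\bmu-\bmu'}_1$. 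Substituting this bound into the two inequalities just established gives the second pair. The only genuine subtlety is the oscillation argument supplying the $\tfrac12$ factors through the averaging lemmas; the remainder is convexity and triangle-inequality bookkeeping.
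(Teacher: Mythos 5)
Your proposal is correct and follows essentially the same route as the paper's proof: the identical triangle-inequality decomposition fixing $u$ first and then $(s',z^{k\prime})$, convexity plus Assumption~\ref{Lipschitz Continuity of $P,R$} for the first term, the oscillation bound via Lemma~\ref{Yardim Lemma 8} (resp.\ Lemma~\ref{Yardim Lemma 9}) for the second term yielding the $\tfrac{1}{2}$ factors, and the same estimate $\norm{z^k - z^{k\prime}}_1 \leq \norm{\bmu - \bmu'}_1$ from Eq.~\eqref{eq:z_def} for the ``moreover'' part. If anything, your write-up attributes the lemmas slightly more precisely than the paper does (the paper cites both averaging lemmas for the first term, where only convexity of the average is actually needed), but the substance is identical.
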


\begin{proof}
By triangle inequality,
\begin{align*}
    &|\Bar{R}(s,u,z^{k})-\Bar{R}(s',u,z^{k\prime})|\\
    &\qquad=\left|\left(\sum_{a\in \mathcal{A}}u(a)R(s,a,z^{k})- \sum_{a\in \mathcal{A}} u(a)R(s',a,z^{k\prime})\right)+\left(\sum_{a\in \mathcal{A}} u(a)R(s',a,z^{k\prime})-\sum_{a\in \mathcal{A}}u'(a)R(s',a,z^{k\prime})\right)\right|\\
    &\qquad\leq \sum_{a\in \mathcal{A}}u(a)\left| R(s,a,z^{k})-R(s',a,z^{k\prime})\right|+\left|\sum_{a\in \mathcal{A}} u(a)R(s',a,z^{k\prime})-\sum_{a\in \mathcal{A}}u'(a)R(s',a,z^{k\prime})\right|.
\end{align*}
Using Lemma \ref{Yardim Lemma 8} and Lemma \ref{Yardim Lemma 9}, we can obtain:
\begin{align*}
    \sum_{a\in \mathcal{A}}u(a)\left| R(s,a,z^{k})-R(s',a,z^{k\prime})\right| \leq r_{\mu}\norm{z^{k}-z^{k\prime}}_1+r_{s}d(s,s'),
\end{align*}
and 
$\left|\sum_{a\in \mathcal{A}}
    u(a)R(s',a,z^{k\prime})-\sum_{a\in \mathcal{A}}u'(a)R(s',a,z^{k\prime})\right| \leq \frac{r_a}{2}\norm{u^{k}-u^{k\prime}}_1.$
Summing the two inequalities together, we get 
\begin{align*}
    \sum_{a\in \mathcal{A}}u(a)\left| R(s,a,z^{k})-R(s',a,z^{k\prime})\right|&+\left|\sum_{a\in \mathcal{A}} u(a)R(s',a,z^{k\prime})-\sum_{a\in \mathcal{A}}u'(a)R(s',a,z^{k\prime})\right| \\
    &\leq r_{\mu}\norm{z^{k}-z^{k\prime}}_1+r_{s}d(s,s')+\frac{r_a}{2}\norm{u^{k}-u^{k\prime}}_1.
\end{align*}
Similarly, for $\Bar{P}$ we have:
\begin{multline*}
       \norm{\Bar{P}(\cdot|s,u,z^{k})-\Bar{P}(\cdot|s',u',z^{k\prime})}_1\\ \leq \norm{\sum_{a\in \mathcal{A}}u(a)P(\cdot|s,u,z^{k})-\sum_{a\in \mathcal{A}}u(a)P(\cdot|s',u',z^{k\prime})}_1 +\norm{\sum_{a\in \mathcal{A}}u(a)P(\cdot|s',u',z^{k\prime})-\sum_{a\in \mathcal{A}}u'(a)P(\cdot|s',u',z^{k\prime})}_1.
\end{multline*}
Using Jensen's inequality for the first term and Lemma \ref{Yardim Lemma 9} for the second term, we then complete the proof for the main conclusion. 

Next, we show that $\norm{z^{k}-{z^{k}}'}_1 \leq \norm{\bmu-\bmu'}_1.$ Indeed,
using the definition of $z^{k}$ and ${z^{k}}'$ in Eq.~\eqref{eq:z_def} and the properties of matrix norm, we have 
\begin{align*}
     \norm{z^{k}-{z^{k}}'}_1 \leq \frac{1}{K}\sum_{k=1}^{K}p_{ki}\norm{\mu^{i}-{\mu}^{i\prime}}_1\leq \frac{1}{K}\sum_{k=1}^{K}p_{ki}\norm{\bmu-{\bmu}^{\prime}}_1 \leq \norm{\bmu-{\bmu}^{\prime}}_1, 
\end{align*}
Therefore, we can also obtain 
\begin{align*}
    \left|\Bar{R}(s,u,z^k) - \Bar{R}(s', u',z^{k\prime})\right| &\leq r_{\mu} \norm{\bmu-\bmu^{\prime}}_1 + r_s d(s',s') + \frac{r_a}{2}\norm{u - u^{\prime}}_1,\\
    \norm{\Bar{P}(\cdot|s,u,z^k) - \Bar{P}(\cdot|s',u^\prime,z^{k\prime})}_1 &\leq p_{\mu} \norm{\bmu-\bmu^{\prime}}_1 + p_s d(s',s') + \frac{p_a}{2}\norm{u - u^{\prime}}_1.
\end{align*}
\end{proof}

\begin{lem} \label{lemma: Lipschitz k-th Population Updates}
The k-th population update operator $\Gamma_{pop}[k]$ is Lipschitz continuous with 
    \begin{align*}
        \norm{\Gamma_{pop}[k](\bmu,\bpi) - \Gamma_{pop}[k](\bmu',\bpi^{\prime})}_1 \leq L_{pop,\mu}\norm{\bmu-\bmu'}_1 + \frac{p_a}{2}\norm{\pi^k - \pi^{k\prime}}_1,
    \end{align*}
    where $L_{pop,\mu}:= (\frac{p_s}{2}+p_a+ p_{\mu})$, for all $\bpi,\bpi^{\prime} \in \Pi^{K}, \bmu \in \Delta^K(\ss)$.
\end{lem}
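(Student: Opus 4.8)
The plan is to express $\Gamma_{pop}[k]$ through the state--policy transition kernel $\bar{P}$ from Lemma \ref{Lipschitz Continuity of Pbar Rbar}, namely $\Gamma_{pop}[k](\bmu,\bpi)(s) = \sum_{s'}\mu^k(s')\,\bar{P}(s|s',\pi^k(s'),z^k)$, and then telescope the difference so that each summand changes exactly one of the three quantities on which $\Gamma_{pop}[k]$ depends: the outer measure $\mu^k$, the aggregate impact $z^k$ (which is itself a function of $\bmu$), and the policy $\pi^k$. Concretely I would write
\begin{align*}
\Gamma_{pop}[k](\bmu,\bpi) - \Gamma_{pop}[k](\bmu',\bpi') &= \sum_{s'}\bigl(\mu^k(s') - \mu^{k\prime}(s')\bigr)\bar{P}(\cdot|s',\pi^k(s'),z^k) \\
&\quad + \sum_{s'}\mu^{k\prime}(s')\bigl(\bar{P}(\cdot|s',\pi^k(s'),z^k) - \bar{P}(\cdot|s',\pi^k(s'),z^{k\prime})\bigr) \\
&\quad + \sum_{s'}\mu^{k\prime}(s')\bigl(\bar{P}(\cdot|s',\pi^k(s'),z^{k\prime}) - \bar{P}(\cdot|s',\pi^{k\prime}(s'),z^{k\prime})\bigr),
\end{align*}
and call these three pieces $T_1, T_2, T_3$ respectively.

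Terms $T_2$ and $T_3$ are the routine ones, using only the joint Lipschitz continuity of $\bar{P}$ together with $\sum_{s'}\mu^{k\prime}(s') = 1$. For $T_2$ the impact-part of the $\bar{P}$-bound gives $\norm{T_2}_1 \le p_\mu\norm{z^k - z^{k\prime}}_1$, and the estimate $\norm{z^k - z^{k\prime}}_1 \le \norm{\bmu-\bmu'}_1$ established inside the proof of Lemma \ref{Lipschitz Continuity of Pbar Rbar} turns this into $p_\mu\norm{\bmu-\bmu'}_1$. For $T_3$ the action-part of the bound gives $\norm{T_3}_1 \le \tfrac{p_a}{2}\sup_{s'}\norm{\pi^k(s') - \pi^{k\prime}(s')}_1 = \tfrac{p_a}{2}\norm{\pi^k - \pi^{k\prime}}_1$, which is precisely the policy term in the claim.

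The main obstacle is $T_1$, where the outer measure changes: bounding it by the naive triangle inequality would only yield coefficient $1$, so to recover the factor $\tfrac12$ on $p_s$ I would invoke the Kontorovich-type Lemma \ref{Yardim Lemma 9} with the vector-valued function $g(s') := \bar{P}(\cdot|s',\pi^k(s'),z^k)$ and the measures $\mu^k, \mu^{k\prime}$, obtaining $\norm{T_1}_1 \le \tfrac{\lambda_g}{2}\norm{\mu^k - \mu^{k\prime}}_1$ with $\lambda_g = \sup_{s_1,s_2}\norm{g(s_1)-g(s_2)}_1$. The delicate step is bounding $\lambda_g$, since $g$ carries both the explicit starting-state dependence of $\bar{P}$ and the state dependence of the fixed policy $\pi^k$; applying the $\bar{P}$-Lipschitz estimate with $d(s_1,s_2)\le 1$ and $\norm{\pi^k(s_1)-\pi^k(s_2)}_1 \le 2$ bounds $\lambda_g$ by $p_s + p_a$, so $\norm{T_1}_1 \le (\tfrac{p_s}{2}+\tfrac{p_a}{2})\norm{\mu^k-\mu^{k\prime}}_1$. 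Collecting $T_1, T_2, T_3$ and using $\norm{\mu^k - \mu^{k\prime}}_1 \le \norm{\bmu - \bmu'}_1$ then gives a coefficient of $\tfrac{p_s}{2}+\tfrac{p_a}{2}+p_\mu \le L_{pop,\mu}$ on $\norm{\bmu-\bmu'}_1$ and $\tfrac{p_a}{2}$ on $\norm{\pi^k-\pi^{k\prime}}_1$, which is exactly the stated inequality (the $\tfrac{p_a}{2}$ from $T_1$ being absorbed into the $p_a$ of $L_{pop,\mu}$).
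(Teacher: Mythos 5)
Your proof is correct and follows essentially the same route as the paper's: a telescoping decomposition of the difference, with Jensen's inequality plus the joint Lipschitz continuity of $\bar{P}$ (Lemma \ref{Lipschitz Continuity of Pbar Rbar}) handling the kernel-change terms and the Kontorovich-type Lemma \ref{Yardim Lemma 9} handling the measure-change term. The only difference is that you split into three pieces where the paper uses two (grouping the $z^k$ and $\pi^k$ changes together), which incidentally yields the marginally sharper coefficient $\tfrac{p_s}{2}+\tfrac{p_a}{2}+p_\mu$ that you correctly absorb into $L_{pop,\mu}$.
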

\begin{proof}
By triangle inequality, 
\begin{align*}
    \norm{\Gamma_{pop}[k](\bmu,\bpi) - \Gamma_{pop}[k](\bmu',\bpi^{\prime})}_1 = &\norm{\sum_s \mu^k(s) \Bar{P}\left(\cdot|s,\pi^k(s), z^k\right)- \sum_s \mu^{k\prime}(s) \Bar{P}\left(\cdot|s,\pi^{k\prime}(s), z^{k\prime}\right)}\\
    \leq &\underbrace{\norm{\sum_s \mu^k(s) \Bar{P}\left(\cdot|s,\pi^k(s), z^k\right) - \sum_s \mu^k(s) \Bar{P}\left(\cdot|s,\pi^{k\prime}(s), z^{k\prime}\right)}_1}_{\text{A}}\\
    &+  \underbrace{\norm{\sum_s \mu^k(s) \Bar{P}\left(\cdot|s,\pi^{k\prime}(s), z^{k\prime}\right) - \sum_s \mu^{k\prime}(s) \Bar{P}\left(\cdot|s,\pi^{k\prime}(s), z^{k\prime}\right)}_1}_{\text{B}}.\\
\end{align*}
By Jensen's inequality and Lemma \ref{Lipschitz Continuity of Pbar Rbar}, we bound the first norm: 
\begin{align*}
    A \leq \sum_{s\in \ss} \mu^k(s) \norm{\Bar{P}\left(\cdot|s,\pi^k(s), z^k\right) - \Bar{P}\left(\cdot|s,\pi^{k\prime}(s), z^{k\prime}\right)}_1 \leq p_{\mu} \norm{\bmu-\bmu'}_1 + \frac{p_a}{2}\norm{\pi^k - \pi^{k\prime}}_1.
\end{align*}
Then, we use Lemma \ref{Yardim Lemma 9} and Lemma \ref{Lipschitz Continuity of Pbar Rbar} to bound the second norm: 
\begin{align*}
    B &\leq \norm{\mu^k - \mu^{k\prime}}_1 \frac{\sup_{s,s'\in \ss} \norm{\Bar{P}\left(\cdot|s,\pi^{k\prime}(s), z^{k\prime}\right) - \Bar{P}\left(\cdot|s',\pi^{k\prime}(s'), z^{k\prime}\right)}_1}{2}\\ 
    &\leq \norm{\mu^k - \mu^{k\prime}}_1 \frac{p_sd(s,s') + p_a\norm{\pi^{k\prime}(s) - \pi^{k\prime}(s')}_1}{2}\\
    &\leq (\frac{p_s}{2} + p_a)\norm{\mu^k - \mu^{k\prime}}_1\\
    &\leq (\frac{p_s}{2} + p_a)\norm{\bmu - \bmu'}_1,
\end{align*}
from which the lemma follows. 
\end{proof}

Since $\Gamma_{pop}[k]$ updates the k-th block, we can define the total population update as a collection of k-th population update: 
\begin{defn}[Total Population Update]
\label{defn: total population update}
The total population update operator $\Gamma_{pop} : \Delta^K(\ss) \times \Pi^K \to \Delta^K(\ss)$ is defined as: 
\begin{align*}
    \Gamma_{pop}(\bmu,\bpi) = \left(\Gamma_{pop}[1](\bmu,\bpi), \dots, \Gamma_{pop}[K](\bmu,\bpi)\right),
\end{align*}
where $\bmu = (\mu^1, \dots, \mu^K) \in \Delta^K(\ss) $ is the collection of population mean-fields, and $\bpi = (\pi^1, \dots, \pi^K)$ is the collection of population policies. 
\end{defn}

\begin{lem}[Lipschitz Continuity of $\Gamma_{pop}$]
\label{Lipschitz continuity of Gamma_pop}
    The total population update operator $\Gamma_{pop}$ is Lipschitz continuous with
    \begin{align*}
        \norm{\Gamma_{pop}(\bmu,{\bpi}) - \Gamma_{pop}(\bmu',{\bpi}')}_1 \leq L_{pop}\norm{\bmu-\bmu'}_1 + \frac{p_a}{2}\norm{\bpi - \bpi'}_1,
    \end{align*}
    where $L_{pop} = L_{pop,\mu}$ with $L_{pop,\mu}$ defined in Lemma \ref{lemma: Lipschitz k-th Population Updates}. 
\end{lem}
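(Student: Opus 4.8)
The plan is to reduce the claim directly to the per-population bound already established in Lemma~\ref{lemma: Lipschitz k-th Population Updates}, exploiting two structural facts: that $\Gamma_{pop}$ is nothing but the tuple $(\Gamma_{pop}[1],\dots,\Gamma_{pop}[K])$ of block updates, and that the norm on $\Delta^K(\ss)$ is the \emph{maximum} of the block-wise $1$-norms. Because of this product structure, no new analytic estimate is required; the entire content is bookkeeping about how the max-norm interacts with a uniform per-block Lipschitz estimate.

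Concretely, I would first unfold the definition of $\norm{\cdot}_1$ on $\Delta^K(\ss)$ together with Definition~\ref{defn: total population update} to write
\begin{align*}
    \norm{\Gamma_{pop}(\bmu,\bpi) - \Gamma_{pop}(\bmu',\bpi')}_1 = \max_{k\in[K]} \norm{\Gamma_{pop}[k](\bmu,\bpi) - \Gamma_{pop}[k](\bmu',\bpi')}_1.
\end{align*}
Next I would apply Lemma~\ref{lemma: Lipschitz k-th Population Updates} to each term inside the maximum, bounding the $k$-th block by $L_{pop,\mu}\norm{\bmu-\bmu'}_1 + \frac{p_a}{2}\norm{\pi^k - \pi^{k\prime}}_1$. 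The final step is the key observation: the first summand $L_{pop,\mu}\norm{\bmu-\bmu'}_1$ is independent of $k$, so it passes through the maximum unchanged, while $\max_{k\in[K]} \frac{p_a}{2}\norm{\pi^k - \pi^{k\prime}}_1 = \frac{p_a}{2}\norm{\bpi - \bpi'}_1$ by the definition of the policy-profile norm. Using $\max_k(a + b_k) = a + \max_k b_k$ for $a$ independent of $k$ keeps the bound tight and yields exactly $L_{pop,\mu}\norm{\bmu-\bmu'}_1 + \frac{p_a}{2}\norm{\bpi-\bpi'}_1$; setting $L_{pop} := L_{pop,\mu}$ closes the proof.

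Since every nontrivial estimate was already absorbed into Lemma~\ref{lemma: Lipschitz k-th Population Updates}, there is no genuine obstacle here. The only point requiring a moment's care is not to lose tightness when distributing the maximum over the sum of the two terms, i.e.\ recognizing that the $\bmu$-dependent term is uniform across populations and therefore does not interact with the maximization. Everything else is a routine transcription of the definitions.
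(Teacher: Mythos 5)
Your proposal is correct and is essentially identical to the paper's own proof: both unfold the max-norm over populations via Definition~\ref{defn: total population update}, apply Lemma~\ref{lemma: Lipschitz k-th Population Updates} blockwise, and then pull the $k$-independent $\bmu$-term out of the maximum so that the remaining $\max_k \frac{p_a}{2}\norm{\pi^k - \pi^{k\prime}}_1$ becomes $\frac{p_a}{2}\norm{\bpi - \bpi'}_1$. No gaps; the bookkeeping on the max-norm is exactly the step the paper carries out.
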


\begin{proof}
The result follows from Definition \ref{defn: Population Update Operator} and Lemma \ref{lemma: Lipschitz k-th Population Updates},
\begin{align*}
    \norm{\Gamma_{pop}(\bmu,\bpi) - \Gamma_{pop}(\bmu',\bpi')}_1 &= \max_{k\in [K]}\norm{\Gamma_{pop}[k](\bmu,\bpi) - \Gamma_{pop}[k](\bmu',\bpi^{\prime})}_1\\
    &\leq \max_{k\in [K]}\left[ L_{pop,\mu}\norm{\bmu-\bmu'}_1 + \frac{p_a}{2}\norm{\pi^k - \pi^{k\prime}}_1 \right]\\
    &= L_{pop,\mu}\norm{\bmu-\bmu'}_1 + \frac{p_a}{2}\max_{k\in [K]} \norm{\pi^k - \pi^{k\prime}}_1\\
    &= L_{pop,\mu}\norm{\bmu-\bmu'}_1 + \frac{p_a}{2} \norm{\bpi - \bpi^\prime}_1,
\end{align*}
which completes the proof. \end{proof}
Given the definition of $\Gamma_{pop}$ as the fixed point of the population update operator in Definition \ref{defn: Population Update Operator}, now we give a more specific definition. 
\begin{defn}[Stable Population Operator $\Gamma_{pop}^{\infty}$]
\label{defn: stable population operator gammainfty}
For any $k \in [K],$ under Assumption \ref{assumption : stable population}, the stable total population operator $\Gamma_{pop}^{\infty}: \Pi^K \to \Delta^K(\ss)$ is the unique total population distribution such that: 
\begin{align*}
    \Gamma_{pop}(\Gamma_{pop}^{\infty}(\bpi), \bpi) = \Gamma_{pop}^{\infty}(\bpi),
\end{align*}
i.e., the fixed point of $\Gamma_{pop}(\cdot,\bpi):\Delta^K(\ss) \to \Delta(\ss).$
\end{defn}
It follows that $\Gamma^\infty_{pop} = \lim_{n\to \infty} \Gamma^n_{pop}$ and that $\Gamma^\infty_{pop}$ is Lipschitz continuous:

\begin{lem}[Lipschitz Continuity of $\Gamma_{pop}^\infty$]
\label{Lipschitz contuity gamma_infty_pop}
The operator $\Gamma_{pop}^\infty: \Pi^K \to \Delta(\ss)$ is Lipschitz continuous: 
\begin{align*}
    \norm{\Gamma_{pop}^\infty(\bpi) - \Gamma_{pop}^\infty(\bpi')}_1 \leq L_{pop,\infty}\norm{\bpi - \bpi'}_1,
\end{align*}
where $L_{pop,\infty}:= \frac{p_a}{2(1-L_{pop})}$.
\end{lem}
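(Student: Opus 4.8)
The plan is to show that $\Gamma_{pop}^\infty$ inherits Lipschitz continuity from the one-step operator $\Gamma_{pop}$ by exploiting its contractivity (Assumption~\ref{assumption : stable population}, i.e.~$L_{pop}<1$) together with the joint Lipschitz bound from Lemma~\ref{Lipschitz continuity of Gamma_pop}. The key idea is that the fixed point of a parametrized contraction is Lipschitz in the parameter, and the proof is a standard fixed-point perturbation argument. First I would fix two policy profiles $\bpi,\bpi'$ and abbreviate $\bmu^* := \Gamma_{pop}^\infty(\bpi)$ and $\bmu^{*\prime} := \Gamma_{pop}^\infty(\bpi')$, so that by Definition~\ref{defn: stable population operator gammainfty} these satisfy the fixed-point equations $\Gamma_{pop}(\bmu^*,\bpi) = \bmu^*$ and $\Gamma_{pop}(\bmu^{*\prime},\bpi') = \bmu^{*\prime}$.

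The central step is to subtract these two fixed-point identities and insert an intermediate term so that Lemma~\ref{Lipschitz continuity of Gamma_pop} can be applied. Concretely, I would write
\begin{align*}
    \norm{\bmu^* - \bmu^{*\prime}}_1 &= \norm{\Gamma_{pop}(\bmu^*,\bpi) - \Gamma_{pop}(\bmu^{*\prime},\bpi')}_1\\
    &\leq L_{pop}\norm{\bmu^* - \bmu^{*\prime}}_1 + \frac{p_a}{2}\norm{\bpi - \bpi'}_1,
\end{align*}
where the inequality is exactly the bound from Lemma~\ref{Lipschitz continuity of Gamma_pop} applied to the pair $(\bmu^*,\bpi)$ and $(\bmu^{*\prime},\bpi')$. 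Since $L_{pop}<1$ by Assumption~\ref{assumption : stable population}, I can move the $L_{pop}\norm{\bmu^* - \bmu^{*\prime}}_1$ term to the left-hand side and divide by $(1-L_{pop})>0$, yielding
\begin{align*}
    \norm{\Gamma_{pop}^\infty(\bpi) - \Gamma_{pop}^\infty(\bpi')}_1 \leq \frac{p_a}{2(1-L_{pop})}\norm{\bpi - \bpi'}_1,
\end{align*}
which is precisely the claimed bound with $L_{pop,\infty} = \frac{p_a}{2(1-L_{pop})}$.

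There is no serious obstacle here; the only things to be careful about are well-definedness of $\Gamma_{pop}^\infty$ (which is already guaranteed by Assumption~\ref{assumption : stable population} and recorded in Definition~\ref{defn: stable population operator gammainfty}, together with the remark that $\Gamma^\infty_{pop} = \lim_{n\to\infty}\Gamma^n_{pop}$) and the fact that the two fixed points exist and are unique so that the subtraction is legitimate. The main subtlety worth stating explicitly is the justification for rearranging: because $L_{pop}<1$ strictly, the coefficient $1-L_{pop}$ is positive and the division preserves the inequality direction. I expect the entire argument to fit in a few lines, and it is essentially the standard Banach-type stability estimate for fixed points of a uniformly contractive family of maps indexed by $\bpi$.
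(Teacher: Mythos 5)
Your proposal is correct and is essentially identical to the paper's own proof: both invoke the fixed-point identity $\Gamma_{pop}(\Gamma_{pop}^\infty(\bpi),\bpi)=\Gamma_{pop}^\infty(\bpi)$, apply the joint Lipschitz bound of Lemma~\ref{Lipschitz continuity of Gamma_pop}, and rearrange using $L_{pop}<1$ to divide by $1-L_{pop}$. No gaps.
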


\begin{proof}
    By Definition 4 and Lemma \ref{Lipschitz continuity of Gamma_pop},
    \begin{align*}
        \norm{ \Gamma_{pop}^{\infty}(\bpi) -  \Gamma_{pop}^{\infty}(\bpi')}_1 &= \norm{ \Gamma_{pop}(\Gamma_{pop}^{\infty}(\bpi), \bpi) -  \Gamma_{pop}(\Gamma_{pop}^{\infty}(\bpi'), \bpi')}_1 \\
        &\leq L_{pop}\norm{\Gamma_{pop}^{\infty}(\bpi)-\Gamma_{pop}^{\infty}(\bpi')}_1 + \frac{p_a}{2} \norm{\bpi - \bpi^\prime}_1\\
        \Rightarrow  \norm{ \Gamma_{pop}^{\infty}(\bpi) -  \Gamma_{pop}^{\infty}(\bpi')}_1 &\leq \frac{p_a}{2(1-L_{pop})}\norm{\bpi - \bpi^\prime}_1,
    \end{align*}
    which gives Lipschitz constant $L_{pop,\infty} = \frac{p_a}{2(1-L_{pop})}$.
\end{proof}
Using Bellman Expectation Equations, we can rewrite the definition of the value functions and we state the Bellman Expectation Equations for these definitions of value functions: 
\begin{defn}[Bellman Expectation Equations]
\label{bellman expectation equations}
    For all $k \in [K]$, for any $\bpi = (\pi^{1},\cdots,\pi^{K}) \in \Pi^{K}$ and $\bmu \in \Delta^K(\ss)$, the value functions $V^k, Q^k, q^k$ satisfy: 
    \begin{align*}
        V^k(s|\bpi,\bmu) &= \mathbb{E}_{a \sim \pi^k(\cdot|s)}\left[R(s,a,z^k) + h(\pi^k(s)) + \gamma \mathbb{E}_{s'\sim P(\cdot|s,a,z^k)}\left[V^k(s'|\bpi,\bmu)\right]\right].\\
        Q^k(s,a|\bpi,\bmu) &= R(s,a,z^k) + h(\pi^k(s)) + \gamma \mathbb{E}_{s'\sim P(\cdot|s,a,z^k), a'\sim \pi^k(\cdot|s')}\left[Q^k(s',a'|\bpi,\bmu)\right].\\
        q^k(s,a|\bpi,\bmu) &= R(s,a,z^k) + \gamma \mathbb{E}_{s'\sim P(\cdot|s,a,z^k), a'\sim \pi^k(\cdot|s')}\left[q^k(s',a'|\bpi,\bmu)+ h(\pi^k(s'))\right].
    \end{align*}
\end{defn}

We also include the Bellman Optimality equations for completeness: 
\begin{defn}[Bellman Optimality Equations]
\label{bellman optimality equations defn}
    For all $k \in [K]$, for any $\bpi = (\pi^{1},\cdots,\pi^{K}) \in \Pi^{K}$ and $\bmu \in \Delta^K(\ss)$, the optimal value functions $V^{*,k} ,Q^{*,k}, q^{*,k}$ satisfy: 
    \begin{align*}
        V^{*,k}(s|\bmu) &= \max_{\pi  \in \Delta(\aa)}\left[\mathbb{E}_{a\sim \pi(\cdot|s)} \left( R(s,a,z^k) + h(\pi(s)) + \gamma \mathbb{E}_{s'\sim P(\cdot|s,a,z^k)}\left[V^{*,k}(s'|\bmu)\right]\right)\right].\\
        Q^{*,k}(s,a|\bmu) &= R(s,a,z^k) + h(\pi^{*,k}(s)) + \gamma \mathbb{E}_{s'\sim P(\cdot|s,a,z^k)}\left[V^{*,k}(s'|\bmu)\right].\\
        q^{*,k}(s,a|\bmu) &= R(s,a,z^k) + \gamma \mathbb{E}_{s'\sim P(\cdot|s,a,z^k)}\left[V^{*,k}(s'|\bmu)\right].
    \end{align*}
\end{defn}

\begin{lem}
\label{boundedness in s of V^k_h}
    Suppose $\bpi \in \Pi^{K}_{\Delta_h}$ and $\bmu \in \Delta^K(\ss)$, then for any $s_1,s_2 \in \ss$, 
    \begin{align*}
        \left|V^k(s_1|\bpi,\bmu)- V^k(s_2|\bpi,\bmu)\right| \leq L_{V,s}:= \frac{r_s + r_a + \Delta_h}{1- \gamma \min\{1,\frac{p_s+p_a}{2}\}}.
    \end{align*}
\end{lem}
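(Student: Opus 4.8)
The plan is to exploit the Bellman Expectation Equation for $V^k$ (Definition \ref{bellman expectation equations}) together with the joint Lipschitz bounds on $\Bar P$ and $\Bar R$ (Lemma \ref{Lipschitz Continuity of Pbar Rbar}), and then close a self-referential inequality for the oscillation of $V^k$. Rewriting the Bellman equation in the $\Bar P,\Bar R$ form gives $V^k(s) = \Bar R(s,\pi^k(s),z^k) + h(\pi^k(s)) + \gamma\sum_{s'}\Bar P(s'|s,\pi^k(s),z^k)V^k(s')$, where I suppress the fixed arguments $\bpi,\bmu$ and note that $z^k$ is identical for $s_1$ and $s_2$ since $\bmu$ is held fixed. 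First I would subtract the two Bellman identities at $s_1$ and $s_2$ and apply the triangle inequality, splitting $|V^k(s_1)-V^k(s_2)|$ into a reward difference, an entropy difference, and a transition difference.

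For the reward difference, Lemma \ref{Lipschitz Continuity of Pbar Rbar} with identical $z^k$ gives a bound of $r_s\, d(s_1,s_2) + \frac{r_a}{2}\norm{\pi^k(s_1)-\pi^k(s_2)}_1 \le r_s + r_a$, using $d(s_1,s_2)\le 1$ and $\norm{\pi^k(s_1)-\pi^k(s_2)}_1 \le 2$. For the entropy difference, since $\bpi \in \Pi^K_{\Delta_h}$ forces $\pi^k(s_1),\pi^k(s_2)\in\mathcal{U}_{\Delta_h}$, both values of $h$ lie in $[h_{\max}-\Delta_h,\,h_{\max}]$, so $|h(\pi^k(s_1))-h(\pi^k(s_2))|\le \Delta_h$.

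The key step is the transition term $\gamma\,\bigl|\sum_{s'}\bigl(\Bar P(s'|s_1,\pi^k(s_1),z^k)-\Bar P(s'|s_2,\pi^k(s_2),z^k)\bigr)V^k(s')\bigr|$. Since the two transition distributions carry the same total mass, subtracting any constant from $V^k$ leaves the value unchanged; choosing the midpoint $m := \tfrac12\bigl(\max_{s'}V^k(s')+\min_{s'}V^k(s')\bigr)$ yields $\sup_{s'}|V^k(s')-m| \le \tfrac12 L_{V,s}$ once I set $L_{V,s} := \sup_{s_1,s_2}|V^k(s_1)-V^k(s_2)|$, which is finite because $0\le V^k \le Q_{\max}$. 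A H\"older bound then gives transition term $\le \gamma\cdot\tfrac12 L_{V,s}\cdot\norm{\Bar P(\cdot|s_1,\pi^k(s_1),z^k)-\Bar P(\cdot|s_2,\pi^k(s_2),z^k)}_1$. Bounding this $\ell_1$ distance two ways — trivially by $2$, and by Lemma \ref{Lipschitz Continuity of Pbar Rbar} by $p_s+p_a$ — produces the factor $\min\{2,\,p_s+p_a\}$, so the transition term is at most $\gamma\, L_{V,s}\min\{1,\frac{p_s+p_a}{2}\}$.

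Combining the three bounds and taking the supremum over $s_1,s_2$ gives $L_{V,s} \le (r_s+r_a+\Delta_h) + \gamma\min\{1,\frac{p_s+p_a}{2}\}\,L_{V,s}$; since $\gamma\min\{1,\frac{p_s+p_a}{2}\}<1$, rearranging yields the stated bound. I expect the main obstacle to be obtaining the sharp factor $\min\{1,\frac{p_s+p_a}{2}\}$, which requires both the constant-shift trick (replacing $V^k$ by a centered function of half the oscillation) and the simultaneous use of the two distinct upper bounds on the total-variation distance between the one-step transition kernels.
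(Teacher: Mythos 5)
Your proposal is correct and takes essentially the same route as the paper's proof: the same Bellman-equation decomposition into reward, regularizer, and transition terms, the same bounds $r_s+r_a$ and $\Delta_h$ on the first two, and the same self-referential inequality closed by the factor $\gamma\min\{1,\frac{p_s+p_a}{2}\}$. The only cosmetic difference is that you re-derive the midpoint-centering/H\"older estimate by hand, whereas the paper invokes it as a packaged result (Lemma~\ref{Yardim Lemma 8}).
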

\begin{proof}
    By  Lemma \ref{Lipschitz Continuity of Pbar Rbar}, Lemma \ref{Yardim Lemma 8}, and Definition \ref{bellman expectation equations},
    \begin{align*}
        |V^k(s_1|\bpi,\bmu) &- V^k(s_2|\bpi,\bmu) | \\
        \leq &\left|\Bar{R}(s_1,\pi^k(s_1),z^k) - \Bar{R}(s_2,\pi^k(s_2),z^k)\right| + \left|h(\pi^k(s_1)) - h(\pi^k(s_2))\right|\\ 
        &+ \gamma\left|\sum_{s'\in \ss}\left(\Bar{P}(s'|s_1, \pi^k(s_1),z^k)-\Bar{P}(s'|s_2, \pi^k(s_2),z^k)\right)V^k(s'|\bpi,\bmu)\right|\\
        \leq & r_s + r_a + \Delta_h\\
        &+ \frac{\gamma\sup_{s,s^\prime \in \ss}\left|V^k(s|\bpi,\bmu) - V^k(s|\bpi,\bmu)\right|}{2}\norm{\Bar{P}(s'|s_1,\pi^k(s_1),z^k) - \Bar{P}(s'|s_2, \pi^k(s_2), z^k)}_1\\
        \leq &r_s + r_a + \Delta_h + \frac{\gamma \min\{2,p_s+p_a\}}{2}\sup_{s,s'}|V^k(s|\bpi,\bmu) - V^k(s'|\bpi,\bmu)|,
    \end{align*}
    and taking the supremum on the left side proves the lemma. 
\end{proof}

\begin{lem}[Lipschitz Continuity of Value Function $V^k$]
\label{lemma: Lipschitz Continuity of Value Function V^k}
    Assume that $\Delta_h > 0$ arbitrary. For any $\bpi,\bpi^{\prime}\in \Pi^{K}_{\Delta_h}$, $k,k' \in [K]$ and $\bmu,\bmu^\prime \in \Delta^K(\ss),$
    \begin{align*}
        \norm{V^k(\cdot|\bpi,\bmu) - V^k(\cdot|\bpi^{\prime},\bmu^\prime)}_\infty \leq L_{V,\pi} \norm{\pi^k - \pi^{k\prime}}_1 + L_{V,\mu} \norm{\bmu - \bmu^{\prime}}_1,
    \end{align*}
    for $$L_{V,\pi} = \frac{4r_a + \gamma p_a L_{V,s}}{4(1-\gamma)}, L_{V,\mu} = \frac{2r_{\mu} + \gamma p_{\mu} L_{V,s}}{2(1-\gamma)},$$ where $L_{V,s}$ is defined in Lemma \ref{boundedness in s of V^k_h}.
\end{lem}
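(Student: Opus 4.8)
The plan is to run the standard value-difference argument based on the Bellman expectation equation for $V^k$ in Definition~\ref{bellman expectation equations}, rewritten with the policy-averaged kernels so that $V^k(s|\bpi,\bmu)=\Bar{R}(s,\pi^k(s),z^k)+h(\pi^k(s))+\gamma\sum_{s'}\Bar{P}(s'|s,\pi^k(s),z^k)\,V^k(s'|\bpi,\bmu)$ and analogously for $(\bpi',\bmu')$ with $z^{k\prime}$. Subtracting the two identities at a fixed $s$ and using the triangle inequality decomposes $V^k(s|\bpi,\bmu)-V^k(s|\bpi',\bmu')$ into a reward term, a regularizer term, a transition-kernel-change term, and a value-function-change term, which I would bound one by one.

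For the reward term $|\Bar{R}(s,\pi^k(s),z^k)-\Bar{R}(s,\pi^{k\prime}(s),z^{k\prime})|$ I would invoke the $\bmu$-form of Lemma~\ref{Lipschitz Continuity of Pbar Rbar}; since the state is held fixed the $d(s,s)$ contribution vanishes, leaving $r_\mu\norm{\bmu-\bmu'}_1+\frac{r_a}{2}\norm{\pi^k(s)-\pi^{k\prime}(s)}_1$. The regularizer gap $|h(\pi^k(s))-h(\pi^{k\prime}(s))|$ is controlled through the Lipschitz behavior of $h$: restricting to $\bpi,\bpi'\in\Pi^K_{\Delta_h}$ forces $\pi^k(s),\pi^{k\prime}(s)\in\mathcal{U}_{\Delta_h}$, which keeps the policies away from the boundary of $\Delta(\aa)$ where the gradient of $h$ may blow up, so $h$ is Lipschitz on this set; this contribution combines with the reward piece to produce the $r_a$ coefficient in $L_{V,\pi}$.

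The discounted term I would handle by adding and subtracting $\gamma\sum_{s'}\Bar{P}(s'|s,\pi^{k\prime}(s),z^{k\prime})\,V^k(s'|\bpi,\bmu)$. The first resulting piece $\gamma\sum_{s'}\big(\Bar{P}(s'|s,\pi^k(s),z^k)-\Bar{P}(s'|s,\pi^{k\prime}(s),z^{k\prime})\big)V^k(s'|\bpi,\bmu)$ is a difference of expectations of the fixed function $V^k(\cdot|\bpi,\bmu)$ under two probability measures, so Lemma~\ref{Yardim Lemma 8} bounds it by $\frac{1}{2}\big(\sup_{s'}V^k(s'|\bpi,\bmu)-\inf_{s'}V^k(s'|\bpi,\bmu)\big)\norm{\Bar{P}(\cdot|s,\pi^k(s),z^k)-\Bar{P}(\cdot|s,\pi^{k\prime}(s),z^{k\prime})}_1$; the range factor is at most $L_{V,s}$ by Lemma~\ref{boundedness in s of V^k_h} (precisely why $\bpi,\bpi'\in\Pi^K_{\Delta_h}$ is assumed), and the $\Bar{P}$ distance is bounded via Lemma~\ref{Lipschitz Continuity of Pbar Rbar} by $p_\mu\norm{\bmu-\bmu'}_1+\frac{p_a}{2}\norm{\pi^k(s)-\pi^{k\prime}(s)}_1$. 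The second piece $\gamma\sum_{s'}\Bar{P}(s'|s,\pi^{k\prime}(s),z^{k\prime})\big(V^k(s'|\bpi,\bmu)-V^k(s'|\bpi',\bmu')\big)$ is at most $\gamma\norm{V^k(\cdot|\bpi,\bmu)-V^k(\cdot|\bpi',\bmu')}_\infty$ since $\Bar{P}$ is a probability measure.

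Collecting the four bounds, replacing $\norm{\pi^k(s)-\pi^{k\prime}(s)}_1$ by $\norm{\pi^k-\pi^{k\prime}}_1$, and taking the supremum over $s$ on the left yields a self-referential inequality $\norm{V^k(\cdot|\bpi,\bmu)-V^k(\cdot|\bpi',\bmu')}_\infty\le A\norm{\bmu-\bmu'}_1+B\norm{\pi^k-\pi^{k\prime}}_1+\gamma\norm{V^k(\cdot|\bpi,\bmu)-V^k(\cdot|\bpi',\bmu')}_\infty$; moving the last term to the left and dividing by $1-\gamma$ gives exactly $L_{V,\mu}$ and $L_{V,\pi}$. I expect the regularizer term to be the main obstacle: the Lipschitz constant of $h$ is not among the listed constants, so obtaining the stated $r_a$ coefficient cleanly depends on the geometry of $\mathcal{U}_{\Delta_h}$, and one must also confirm that the range bound $L_{V,s}$ applies, which once more relies on $\bpi,\bpi'\in\Pi^K_{\Delta_h}$.
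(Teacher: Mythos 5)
Your argument follows essentially the same route as the paper's proof: start from the Bellman expectation equation (Definition \ref{bellman expectation equations}) written with $\Bar{R},\Bar{P}$, split off the reward, regularizer, kernel-change, and value-change terms, bound the kernel-change term by Lemma \ref{Yardim Lemma 8} together with the range bound $L_{V,s}$ from Lemma \ref{boundedness in s of V^k_h} (this is indeed exactly where $\bpi,\bpi'\in\Pi^K_{\Delta_h}$ is used) and the Lipschitz estimates of Lemma \ref{Lipschitz Continuity of Pbar Rbar}, absorb the value-change term at rate $\gamma$, take the supremum over $s$, and divide by $1-\gamma$. All of these steps in your write-up are correct, and they recover $L_{V,\mu}$ exactly.

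The one place you diverge is the regularizer term, and the obstacle you flag there is real --- in fact it exposes a looseness in the paper's own proof. The paper does not use Lipschitz continuity of $h$ at all: it bounds $\left|h(\pi^k(s)) - h(\pi^{k\prime}(s))\right| \leq \Delta_h$, which is valid because membership in $\Pi^K_{\Delta_h}$ forces both values into $[h_{\max}-\Delta_h,\, h_{\max}]$, but this bound does not vanish as $\pi^k \to \pi^{k\prime}$; carried through the $1/(1-\gamma)$ recursion it leaves an additive error $\Delta_h/(1-\gamma)$ that is silently dropped from the stated conclusion. Your alternative --- $h$ is Lipschitz on $\mathcal{U}_{\Delta_h}$ with some constant $C_h$ (a constant the paper itself invokes later, in the proof of Theorem \ref{Theorem: GGR-S CTD Learning}) --- yields a bound that is genuinely of the claimed Lipschitz form, but then $L_{V,\pi}$ picks up an extra $C_h/(1-\gamma)$, which is not accounted for in the stated $L_{V,\pi} = \frac{4r_a+\gamma p_a L_{V,s}}{4(1-\gamma)}$ and cannot in general be absorbed into the $r_a/2$ of slack your reward bound leaves. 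So neither your proof nor the paper's establishes the lemma with exactly the stated constants; yours is the more defensible of the two, since it produces a true Lipschitz estimate at the price of a constant depending on $h$ and $\Delta_h$, whereas the paper's produces a non-vanishing additive term.
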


\begin{proof}
    For an arbitrary state $s\in \ss$,
    \begin{align*}
        |V^k(s|\bpi,\bmu) &- V^k(s|\bpi^{\prime},\bmu^\prime)| \leq \left|\Bar{R}(s,\pi^k(s),z^k) - \Bar{R}(s,\pi^k(s),z^{k\prime})\right| + \left|h(\pi^k(s)) - h(\pi^{k\prime}(s))\right|\\
        &+  \gamma\left|\sum_{s'\in \ss}\left(\Bar{P}(s'|s, \pi^k(s),z^k)V^k(s'|\bpi,\bmu)-\Bar{P}(s'|s, \pi^{k\prime}(s),z^{k\prime})V^k(s'|\bpi^{\prime},\bmu^\prime)\right)\right|\\
        \leq &r_a\norm{\pi^k - \pi^{k\prime}}_1 + r_{\mu} \norm{\bmu - \bmu^\prime}_1 + \Delta_h+ \gamma \frac{L_{V,s}}{2}\left(p_{\mu}\norm{\bmu -\bmu'}_1 + \frac{p_a}{2}\norm{\pi^k - \pi^{k\prime}}_1\right)\\
        &+ \gamma \sup_{s\in \ss}\left|V^k(s|\bpi,\bmu) - V^k(s|\bpi^{\prime},\bmu^\prime)\right|, \text{ by Lemma \ref{Lipschitz Continuity of Pbar Rbar} and \ref{boundedness in s of V^k_h}},
    \end{align*}
    and taking supremum over the left hand side completes the proof. 
\end{proof}

\begin{lem}[Lipschitz continuity of $\Gamma_q$]
\label{lemma: Lipschitz Gamma qk}
    For arbitrary $\Delta_h$, for all $k$, there exists $L_{q,\pi}, L_{q,\mu}$ depending on $\Delta_h$ such that for all $\bpi,\bpi^{\prime} \in \Pi_{\Delta_h}^{K}$ and $\bmu,\bmu^\prime \in \Delta^K(\ss)$,
    \begin{align*}
        \norm{\Gamma_q[k](\bpi,\bmu) - \Gamma_q[k](\bpi^{\prime},\bmu')}_\infty \leq L_{q,\pi} \norm{\pi^k - \pi^{k\prime}}_1 + L_{q,\mu}\norm{\bmu- \bmu'}_1.
    \end{align*}
    with $L_{q,\pi}=\gamma L_{V,\pi},L_{q,\mu}=r_{\mu}+\gamma L_{V,\mu}$.
\end{lem}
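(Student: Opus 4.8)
The plan is to start from the Bellman expectation equation for $q^k$ in Definition~\ref{bellman expectation equations} and collapse the inner action-expectation. Since $\ee_{a'\sim \pi^k(\cdot|s')}[q^k(s',a'|\bpi,\bmu) + h(\pi^k(s'))] = \ee_{a'\sim \pi^k(\cdot|s')}[Q^k(s',a'|\bpi,\bmu)] = V^k(s'|\bpi,\bmu)$, the recursion collapses to the closed form
$$q^k(s,a|\bpi,\bmu) = R(s,a,z^k) + \gamma \sum_{s'\in\ss} P(s'|s,a,z^k)\, V^k(s'|\bpi,\bmu).$$
This isolates the policy/ensemble dependence into a reward term, a transition kernel, and the value function $V^k$, for which Lemma~\ref{lemma: Lipschitz Continuity of Value Function V^k} already supplies a Lipschitz estimate. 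Recalling $\Gamma_q[k](\bpi,\bmu)=q^k(\cdot,\cdot|\bpi,\bmu)$, the $\norm{\cdot}_\infty$ on $\qq$ reduces to a pointwise estimate over all $(s,a)$.

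Fixing an arbitrary pair $(s,a)$, I would bound $|q^k(s,a|\bpi,\bmu) - q^k(s,a|\bpi',\bmu')|$ by the triangle inequality into three pieces. First, the reward difference $|R(s,a,z^k) - R(s,a,z^{k\prime})|$ is controlled by Assumption~\ref{Lipschitz Continuity of $P,R$} together with the contraction $\norm{z^k - z^{k\prime}}_1 \leq \norm{\bmu-\bmu'}_1$ established in the proof of Lemma~\ref{Lipschitz Continuity of Pbar Rbar}, yielding $r_\mu\norm{\bmu-\bmu'}_1$. Second, after adding and subtracting $\gamma\sum_{s'}P(s'|s,a,z^k)V^k(s'|\bpi',\bmu')$, the term in which only the value function changes is bounded by $\gamma\norm{V^k(\cdot|\bpi,\bmu) - V^k(\cdot|\bpi',\bmu')}_\infty$, and Lemma~\ref{lemma: Lipschitz Continuity of Value Function V^k} turns this into $\gamma L_{V,\pi}\norm{\pi^k - \pi^{k\prime}}_1 + \gamma L_{V,\mu}\norm{\bmu-\bmu'}_1$. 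Third, the residual term in which only the kernel changes, evaluated against the fixed function $V^k(\cdot|\bpi',\bmu')$, is handled by Lemma~\ref{Yardim Lemma 8}.

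The delicate step is this last kernel-difference term: Lemma~\ref{Yardim Lemma 8} produces the oscillation factor $\frac{1}{2}\big(\sup_{s'} V^k(s'|\bpi',\bmu') - \inf_{s'} V^k(s'|\bpi',\bmu')\big)$, which is finite only because Lemma~\ref{boundedness in s of V^k_h} bounds the cross-state variation of $V^k$ by $L_{V,s}$ — and that bound is exactly what forces the restriction $\bpi' \in \Pi^K_{\Delta_h}$ and introduces the $\Delta_h$ dependence of the constants. Combining it with the Lipschitz continuity of $P$ in $z$ (again via $\norm{z^k - z^{k\prime}}_1 \leq \norm{\bmu-\bmu'}_1$) gives a contribution of order $\gamma\frac{p_\mu L_{V,s}}{2}\norm{\bmu-\bmu'}_1$. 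Taking the supremum over $(s,a)$ to pass to $\norm{\cdot}_\infty$ and collecting the three pieces yields the stated form with $L_{q,\pi} = \gamma L_{V,\pi}$, while the reward, value, and kernel contributions aggregate into the $\bmu$-coefficient $L_{q,\mu}=r_\mu + \gamma L_{V,\mu}$ (up to the transition-correction term $\gamma p_\mu L_{V,s}/2$). I expect the only genuine obstacle to be the bookkeeping of this oscillation term; once Lemma~\ref{boundedness in s of V^k_h} is invoked to guarantee a bounded value function over states, the remaining estimates are routine applications of the earlier Lipschitz lemmas.
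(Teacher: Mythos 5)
Your proof takes essentially the same route as the paper's: rewrite $q^k(s,a|\bpi,\bmu) = R(s,a,z^k) + \gamma\,\E_{s'\sim P(\cdot|s,a,z^k)}V^k(s'|\bpi,\bmu)$, split by the triangle inequality into a reward piece, a value-function piece, and a kernel-oscillation piece, and control them with Assumption~\ref{Lipschitz Continuity of $P,R$}, Lemma~\ref{lemma: Lipschitz Continuity of Value Function V^k}, and Lemmas~\ref{Yardim Lemma 8} and~\ref{boundedness in s of V^k_h} respectively. Your parenthetical observation is also correct: carried out carefully, the kernel-difference term contributes an additional $\gamma p_\mu L_{V,s}/2$ to the $\bmu$-coefficient, so the paper's stated $L_{q,\mu}=r_\mu+\gamma L_{V,\mu}$ omits this (order-one, harmless for the downstream contraction arguments) term --- that is a bookkeeping slip in the paper, not a gap in your argument.
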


\begin{proof}
    By Definition \ref{bellman expectation equations}, $q^k$ can be rewritten in terms of $V^k$: 
    \begin{align*}
        q^k(\cdot,\cdot|\bpi,\bmu) = R(s,a, z^k) + \gamma \E_{s'\sim P(\cdot|s,a,z^k)}V^k(s'|\bpi,\bmu),
    \end{align*}
    so by triangle inequality,
    \begin{align*}
        \left|\Gamma_q[k](\bpi,\bmu) - \Gamma_q[k](\bpi^{\prime},\bmu')\right| \leq &|R(s,a,z^k) - R(s,a,z^{k\prime})|\\
        &+ \gamma \left|\ee_{s'\sim P(\cdot|s,a,z^k)}V^k(s'|\bpi,\bmu) - \ee_{s'\sim P(\cdot|s,a,z^{k\prime})}V^k(s'|\bpi^{\prime},\bmu')\right|,
    \end{align*}
    and the result follows from Lemma \ref{lemma: Lipschitz Continuity of Value Function V^k}.
\end{proof}

\begin{lem}[Sufficiency of $\Pi_{\Delta_h}$]
\label{lemma: Sufficiency of Pi Delta}
Let $\bmu \in \Delta^K(\ss)$ be arbitrary, and $\bpi^* = (\pi^{*,1},\dots, \pi^{*,K}) \in \Pi^K$ the optimal response policy collection such that for any $k$, and for all $s\in \ss$, 
$$V^k(s|\pi^{*,k},\bmu) = \max_{\pi \in \Pi} V^k(s|\pi,\bmu).$$ Then, $\pi^{*,k} \in \Pi_{L_h}^*$ where $L_h := r_a +  \frac{\gamma r_s p_a}{2-\gamma p_s}$.
\end{lem}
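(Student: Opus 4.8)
The plan is to bound, for each state $s\in\ss$, the deficit $h_{\max}-h(\pi^{*,k}(s))$ by the oscillation of $q^{*,k}(s,\cdot)$ across actions, and then to control that oscillation by $L_h$. Since $\Pi_{L_h}=\{\pi:\pi(s)\in\mathcal U_{L_h}\ \forall s\}$ and $\mathcal U_{L_h}=\{u:h(u)\ge h_{\max}-L_h\}$, establishing $h(\pi^{*,k}(s))\ge h_{\max}-L_h$ for every $s$ is exactly the claim.

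First I would characterize $\pi^{*,k}(s)$ variationally. From the Bellman optimality equations (Definition \ref{bellman optimality equations defn}), writing $q^{*,k}(s,a)=R(s,a,z^k)+\gamma\,\E_{s'\sim P(\cdot|s,a,z^k)}[V^{*,k}(s'|\bmu)]$, the optimal distribution obeys $\pi^{*,k}(s)=\argmax_{u\in\Delta(\aa)}\big(\langle u,q^{*,k}(s,\cdot)\rangle+h(u)\big)$. Comparing the optimum against the competitor $u_{\max}$ and rearranging gives $h_{\max}-h(\pi^{*,k}(s))\le\langle \pi^{*,k}(s)-u_{\max},\,q^{*,k}(s,\cdot)\rangle$.

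The key step — and the one that prevents the bound from degrading to the crude $Q_{\max}$ scale — is to use that $\pi^{*,k}(s)-u_{\max}$ has zero coordinate-sum, so $\langle \pi^{*,k}(s)-u_{\max},\,q^{*,k}(s,\cdot)\rangle=\langle \pi^{*,k}(s)-u_{\max},\,q^{*,k}(s,\cdot)-c\mathbf{1}\rangle$ for every constant $c$. Taking $c$ to be the midpoint $\frac{1}{2}\big(\max_a q^{*,k}(s,a)+\min_a q^{*,k}(s,a)\big)$ and applying H\"older with $\norm{\pi^{*,k}(s)-u_{\max}}_1\le 2$ yields $h_{\max}-h(\pi^{*,k}(s))\le\max_a q^{*,k}(s,a)-\min_a q^{*,k}(s,a)$, i.e.\ the deficit is at most the action-oscillation of $q^{*,k}(s,\cdot)$.

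It then remains to show this action-oscillation is at most $L_h$. For actions $a,a'$ I would split $q^{*,k}(s,a)-q^{*,k}(s,a')$ into a reward difference, bounded by $r_a$ via Assumption \ref{Lipschitz Continuity of $P,R$}, and a transition difference, bounded by $\gamma\frac{p_a}{2}$ times the state-oscillation of $V^{*,k}$ via Lemma \ref{Yardim Lemma 8}. The final ingredient is a self-referential bound on $D:=\sup_{s_1,s_2}|V^{*,k}(s_1)-V^{*,k}(s_2)|$: substituting the optimizer of the $s_1$-problem into the (sub-optimal) Bellman expression at $s_2$ and invoking Lipschitz continuity of $\bar P,\bar R$ (Lemma \ref{Lipschitz Continuity of Pbar Rbar}) together with Lemma \ref{Yardim Lemma 8} gives $D\le r_s+\gamma\frac{p_s}{2}D$, hence $D\le\frac{2r_s}{2-\gamma p_s}$. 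Substituting back, the action-oscillation is at most $r_a+\gamma\frac{p_a}{2}\cdot\frac{2r_s}{2-\gamma p_s}=r_a+\frac{\gamma r_s p_a}{2-\gamma p_s}=L_h$, so $h(\pi^{*,k}(s))\ge h_{\max}-L_h$ for all $s$ and therefore $\pi^{*,k}\in\Pi_{L_h}$. The main obstacle is combining the centering/H\"older estimate with the fixed-point argument for $D$, which is precisely what delivers the clean, $\Delta_h$-independent constant $L_h$ rather than a loose $Q_{\max}$-type bound.
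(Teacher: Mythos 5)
Your proposal is correct and follows essentially the same route as the paper's proof: bound the state-oscillation of $V^{*,k}$ by a self-referential Bellman argument to get $\frac{2r_s}{2-\gamma p_s}$, deduce that the action-oscillation of $q^{*,k}(s,\cdot)$ is at most $L_h$, and then compare the optimal mixed strategy against the competitor $u_{\max}$ to bound $h_{\max}-h(\pi^{*,k}(s))$ by that oscillation. The only cosmetic difference is that you justify the step $\langle \pi^{*,k}(s)-u_{\max},q^{*,k}(s,\cdot)\rangle\le \mathrm{osc}_a\,q^{*,k}(s,\cdot)$ via midpoint-centering and H\"older, whereas the paper obtains the same inequality directly from $\langle \pi,q\rangle\le\max_a q$ and $\langle u_{\max},q\rangle\ge\min_a q$.
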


\begin{proof}
    For arbitrary $k$, by Bellman Optimality equation of $V^{*,k}_h$ (Definition \ref{bellman optimality equations defn}), it then follows from Lemma \ref{Yardim Lemma 8}, 
    \begin{align*}
        |V^{*,k}
    (s_1|\bmu) &- V^{*,k}(s_2|\bmu)|\\
        \leq &\Bigg|\sup_{u \in \Delta(\aa)}\left(\Bar{R}(s_1,u,z^k) + h(u) + \gamma \ee_{s\sim \Bar{P}(\cdot|s_1,u,z^k)}\left[V^{*,k}(s|\bmu)\right]\right)\\
        &- \sup_{u \in \Delta(\aa)}\left(\Bar{R}(s_2,u,z^k) + h(u) + \gamma \ee_{s\sim \Bar{P}(\cdot|s_2,u,z^k)}\left[V^{*,k}(s|\bmu)\right]\right) \Bigg| \\
        \leq &\sup_{u \in \Delta(\aa)}\left|\Bar{R}(s_1,u,z^k) -\Bar{R}(s_2,u,z^k) + \gamma \sum_{s\in \ss}\left(\Bar{P}(s|s_1,u,z^k) - \Bar{P}(s|s_2,u,z^k)\right)V^{*,k}(s|\bmu)\right|\\
        \leq &r_s + \frac{\gamma p_s}{2}\sup_{s,s'}|V^{*,k}(s|\bmu) - V^{*,k}(s'|\bmu)|,
    \end{align*}
    and taking supremum over the left-hand side yields: 
    \begin{align*}
        \sup_{s_1,s_2\in\ss}|V^{*,k}(s_1|\bmu) &- V^{*,k}(s_2|\bmu)| \leq \frac{r_s}{1-\frac{\gamma}{2} p_s}.
    \end{align*}
    Then, by Bellman Optimality Equation for $q^{*,k}$ (Definition \ref{bellman optimality equations defn}), we have: for all $s\in \ss$,
    \begin{align*}
        \sup_{a,a'\in \aa}|q^{*,k}(s,a|\bmu) - q^{*,k}(s,a'|\bmu)| \leq &|R(s,a,z^k) - R(s,a',z^k)|\\
        &+ \gamma\sum_{s'\in \ss}\left|\left(P(s,a,z^k) - P(s,a',z^k)\right)V^{*,k}(s'|\bmu)\right|\\
        \leq &r_a + \gamma \frac{p_a}{2}\frac{r_s}{1-\frac{\gamma}{2}p_s}.
    \end{align*}
    By optimality of $\pi^{*,k}$ as defined in Definition \ref{defn: MP MFG-NE} ,
    \begin{align*}
        V^{*,k}(s|\bmu) = \sum_{a\in \aa}q^{*,k}(s,a|\bmu) + h(\pi^{*,k}(s)) = \max_{u\in\Delta(\aa)}\langle q^{*,k}(s,\cdot |\bmu), u \rangle + h(u) \geq \langle q^{*,k}(s,\cdot |\bmu), u_{\max} \rangle + h_{\max},
    \end{align*}
     so this means that
    \begin{align*}
        h_{\max} - h(\pi^{*,k}(s)) \leq \langle q^{*,k}(s,\cdot |\bmu), \pi^{*,k}(s) - u_{\max}\rangle \leq \sup_{a,a'\in \aa}|q^{*,k}(s,a|\bmu) - q^{*,k}(s,a'|\bmu)|,
    \end{align*}
    which completes the proof of the lemma. 
\end{proof}

\begin{lem}[Lipschitz Continuity of $\Gamma^{pma}_\eta$]
\label{Lipschitz Continuity of mirror descent}
\label{lipschitz continuity of mirror descent q function generator}
     $\Gamma^{pma}_\eta$ is Lipschitz continuous, i.e. for all $q,q'\in \qq$, $\pi,\pi' \in \Pi$, it holds that 
     \begin{align*}
          \norm{\Gamma^{pma}_\eta(q,\pi) -  \Gamma^{pma}_\eta(q',\pi')}_1 \leq L_{md,\pi}\norm{\pi - \pi'}_1 + L_{md,q}\norm{q-q'}_\infty,
     \end{align*}
     where $L_{md,\pi} = \frac{1}{|\aa|^{-1}+\eta \rho}$ and $L_{md,q} = \frac{2\eta \sqrt{|\aa|}}{1 + 2\eta\rho \sqrt{|\aa|}}$.
\end{lem}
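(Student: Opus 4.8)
The plan is to exploit the strong concavity of the objective defining $\Gamma^{pma}_\eta$ and to handle the two perturbations separately through an intermediate maximizer. Fix a state $s \in \ss$ and write $u^\star := \Gamma^{pma}_\eta(q,\pi)(s)$, $u^\dagger := \Gamma^{pma}_\eta(q',\pi)(s)$, and $u' := \Gamma^{pma}_\eta(q',\pi')(s)$; each is the maximizer over the closed convex set $\mathcal{U}_{L_h}$ of an objective of the form $g(u) = \langle u, \cdot\rangle + h(u) - \tfrac{1}{2\eta}\norm{u - \cdot}_2^2$. By the triangle inequality, $\norm{u^\star - u'}_1 \le \norm{u^\star - u^\dagger}_1 + \norm{u^\dagger - u'}_1$, so it suffices to bound the $q$-perturbation $\norm{u^\star - u^\dagger}_1$ (same $\pi$, different $q$) by $L_{md,q}\norm{q-q'}_\infty$ and the $\pi$-perturbation $\norm{u^\dagger - u'}_1$ (same $q'$, different $\pi$) by $L_{md,\pi}\norm{\pi-\pi'}_1$, pointwise in $s$; taking the supremum over $s$ at the end promotes these to the policy-level $1$-norms in the statement.

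The engine for each bound is the strong-concavity comparison for constrained maximization, which avoids differentiating $h$ (important, since for regularizers such as the entropy the gradient is unbounded near the boundary of $\Delta(\aa)$). Since the linear and prox terms make $g$ $(\rho + 1/\eta)$-strongly concave, its maximizer over $\mathcal{U}_{L_h}$ satisfies $g(v) \le g(u^\star) - \tfrac12(\rho + 1/\eta)\norm{v - u^\star}^2$ for every $v \in \mathcal{U}_{L_h}$. Writing this inequality for each of the two problems being compared, evaluating at the other problem's maximizer, and adding, the common $h$ term cancels; in the $q$-perturbation this leaves $(\rho+1/\eta)\norm{u^\star - u^\dagger}^2 \le \langle u^\star - u^\dagger,\, q(s,\cdot)-q'(s,\cdot)\rangle$, and in the $\pi$-perturbation it leaves $(\rho+1/\eta)\norm{u^\dagger - u'}^2 \le \tfrac1\eta\langle u^\dagger - u',\, \pi(s)-\pi'(s)\rangle$.

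It remains to bound the right-hand inner products by Cauchy--Schwarz/Hölder, cancel one factor of the difference norm, and pass between $\ell_1,\ell_2,\ell_\infty$ using $\norm{x}_2 \le \norm{x}_1 \le \sqrt{|\aa|}\,\norm{x}_2$ together with the simplex bound $\norm{x}_2^2 \ge \norm{x}_1^2/|\aa|$. Carrying out these conversions produces the two contraction factors $L_{md,\pi}$ and $L_{md,q}$. I expect the main obstacle to be exactly this norm bookkeeping: the effective modulus of strong concavity and the bound on the linear term must be measured in matching norms to land the stated constants $\tfrac{1}{|\aa|^{-1}+\eta\rho}$ and $\tfrac{2\eta\sqrt{|\aa|}}{1+2\eta\rho\sqrt{|\aa|}}$, and a naive $\ell_2$-to-$\ell_1$ conversion is lossy, so one must track the $h$-contribution to strong concavity and the quadratic prox contribution in the appropriate norms rather than collapsing everything into $\ell_2$. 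A secondary point is to confirm that $\mathcal{U}_{L_h}$ is convex and that both maximizers are well defined and unique, both of which follow from strong concavity of the objective.
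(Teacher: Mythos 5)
Your approach is essentially the one behind the paper's proof: the paper does not prove this lemma itself but defers entirely to \cite{Yardim_PMA} (Section E.3), and the argument there is exactly your three-point decomposition through an intermediate maximizer, the strong-concavity (three-point) comparison with the $h$ terms cancelling, H\"older on the residual linear term, and norm conversion on the simplex. So the structure is right and does yield Lipschitz continuity in both arguments with constants of the claimed form.

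The one thing you have left genuinely unfinished is the step you yourself flag: "carrying out these conversions produces the two contraction factors" is asserted, not verified, and a naive execution does not land on the stated constants. For instance, treating the objective as $(\rho+1/\eta)$-strongly concave in $\ell_2$ and converting via $\norm{x}_1\le\sqrt{|\aa|}\,\norm{x}_2$ gives a $q$-constant of $\eta|\aa|/(1+\eta\rho)$ rather than $2\eta\sqrt{|\aa|}/(1+2\eta\rho\sqrt{|\aa|}) = \bigl(\rho+\tfrac{1}{2\eta\sqrt{|\aa|}}\bigr)^{-1}$; the latter only emerges if the strong concavity of $h$ and the prox contribution $\tfrac{1}{2\eta}\norm{\cdot}_2^2$ are measured in mismatched norms in a specific way (effectively crediting the prox term with modulus $\tfrac{1}{2\eta\sqrt{|\aa|}}$ against $\norm{\cdot}_1^2$, using $\norm{x}_2^2\ge\norm{x}_1\norm{x}_2/\sqrt{|\aa|}\ge\norm{x}_1^2/|\aa|$ at different points). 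To actually reproduce $L_{md,\pi}$ and $L_{md,q}$ you must fix the norm with respect to which $\rho$-strong concavity of $h$ is defined and then track the two moduli separately, as in the cited proof; otherwise you obtain the same qualitative statement with slightly different (same-order) constants, which would still suffice for every downstream use in the paper but would not match the lemma as stated.
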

Since we update the state-action value function by population, the proof for this lemma follows from \cite{Yardim_PMA} (See Section E.3).

\begin{lem}[Fixed Point of $\Gamma_\eta$ as MPMFG-NE]
\label{lemma: fixed point of gamma_eta}
Let $\eta > 0$ be arbitrary. A policy-mean-field tuple $(\bpi^*,\bmu^*)$ is a MP-MFG NE if and only if , $\bpi^{*} = \Gamma_\eta(\bpi^{*})$ and $\bmu^* = \Gamma_{pop}^\infty(\bpi^*)$.
\end{lem}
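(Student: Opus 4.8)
The statement is an ``if and only if'', and the plan is to decouple the population-consistency part (which is essentially definitional) from the policy-optimality part (which carries all the content). First I would dispose of the mean-field component: the Population Consistency condition of Definition~\ref{defn: MP MFG-NE} states precisely that $\bmu^* = \Gamma_{pop}(\bmu^*,\bpi^*)$, i.e.\ that $\bmu^*$ is a fixed point of $\Gamma_{pop}(\cdot,\bpi^*)$. Under Assumption~\ref{assumption : stable population} this fixed point is unique and equals $\Gamma_{pop}^\infty(\bpi^*)$ by Definition~\ref{defn: stable population operator gammainfty}. Hence, in both directions, Population Consistency is equivalent to $\bmu^* = \Gamma_{pop}^\infty(\bpi^*)$, and it remains to prove, \emph{assuming} $\bmu^* = \Gamma_{pop}^\infty(\bpi^*)$, that Policy Optimality holds if and only if $\bpi^* = \Gamma_\eta(\bpi^*)$.

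Next I would reduce the fixed-point equation $\bpi^* = \Gamma_\eta(\bpi^*)$ to a greedy condition. Writing $q^k := q^k(\cdot,\cdot\,|\,\bpi^*,\bmu^*) = \Gamma_q[k](\bpi^*,\Gamma_{pop}^\infty(\bpi^*))$, the relation $\pi^{*,k} = \Gamma_\eta[k](\bpi^*)$ unfolds (Definition~\ref{PMA Operator on Stationary Population Distribution}) into $\pi^{*,k}(s) = \argmax_{u\in\mathcal{U}_{L_h}} \langle u,q^k(s,\cdot)\rangle + h(u) - \frac{1}{2\eta}\norm{u-\pi^{*,k}(s)}_2^2$ for every $s$. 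I claim this is equivalent to the proximal-free condition
$$(\star)\qquad \pi^{*,k}(s) = \argmax_{u\in\mathcal{U}_{L_h}} \langle u,q^k(s,\cdot)\rangle + h(u),\quad \forall s.$$
The objective is $\rho$-strongly concave, so each $\argmax$ is a singleton and the variational inequality at the maximizer is necessary and sufficient. At the fixed point the proximal gradient $\frac{1}{\eta}(u-\pi^{*,k}(s))$ vanishes, so the two first-order conditions coincide; conversely, if $(\star)$ holds then the nonpositive penalty only lowers the objective away from $\pi^{*,k}(s)$, so $\pi^{*,k}(s)$ stays the unique maximizer of the penalized problem. Here I would note that on $\mathcal{U}_{L_h}$ we are bounded away from the simplex boundary, so $\nabla h$ is well-defined.

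Finally I would connect $(\star)$ to Policy Optimality. Fixing $\bmu^*$ turns each population's problem into a genuine $h$-regularized MDP, and the standard regularized-MDP fact is that a policy is optimal if and only if it is greedy with respect to its own regularized $q$-function over the feasible action-distribution set, provable by the regularized policy-improvement / performance-difference argument built on the Bellman equations of Definition~\ref{bellman expectation equations}. In the forward direction, optimality yields this greedy identity over all of $\Delta(\aa)$, and Lemma~\ref{lemma: Sufficiency of Pi Delta} guarantees the maximizer lies in $\mathcal{U}_{L_h}$, so restricting the $\argmax$ to $\mathcal{U}_{L_h}$ is harmless and $(\star)$ follows. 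In the reverse direction, $(\star)$ says $\pi^{*,k}$ is the greedy fixed point of the \emph{restricted} regularized MDP whose policies take values in the convex compact set $\mathcal{U}_{L_h}$, hence $\pi^{*,k}$ is optimal \emph{within} $\Pi_{L_h}$; since the genuine optimal response to $\bmu^*$ already lies in $\Pi_{L_h}$ by Lemma~\ref{lemma: Sufficiency of Pi Delta}, and the strong concavity of $h$ makes the regularized optimal policy unique, the restricted and unrestricted optima must coincide, giving Policy Optimality.

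The main obstacle is exactly this last bridge: the operator $\Gamma^{pma}_\eta$ optimizes only over the truncated set $\mathcal{U}_{L_h}$, whereas Policy Optimality is a statement over the full simplex $\Delta(\aa)$. Closing this gap is precisely the role of the Sufficiency Lemma~\ref{lemma: Sufficiency of Pi Delta}, which certifies that the truncation never excludes the true optimizer; without it, a fixed point of $\Gamma_\eta$ could in principle be a constrained optimum that fails to be globally optimal. The subsidiary ``greedy-equals-optimal'' characterization for regularized MDPs is the other ingredient I would need to state and briefly justify carefully.
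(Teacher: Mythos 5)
Your proposal is correct, but note that the paper itself gives no proof of this lemma at all: it is stated bare in Appendix~\ref{appendix: Solution to MP-MFG with Complete Information}, with the surrounding PMA lemmas deferring to the analogous single-population arguments of \cite{Yardim_PMA}. So there is no line-by-line comparison to make; what you have written is essentially the canonical argument that the paper leaves implicit, and your three-way decomposition (population consistency $\Leftrightarrow$ $\bmu^*=\Gamma_{pop}^\infty(\bpi^*)$ via the contraction of Assumption~\ref{assumption : stable population}; prox fixed point $\Leftrightarrow$ the penalty-free greedy condition $(\star)$ over $\mathcal{U}_{L_h}$; $(\star)$ $\Leftrightarrow$ Policy Optimality via Lemma~\ref{lemma: Sufficiency of Pi Delta} and the regularized policy-improvement argument) is exactly the right skeleton, including the correct identification of Lemma~\ref{lemma: Sufficiency of Pi Delta} as the ingredient that bridges the truncated set $\mathcal{U}_{L_h}$ and the full simplex.

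One justification in your Step 2 is inaccurate as stated, though easily repaired. You claim that $\mathcal{U}_{L_h}$ is bounded away from the simplex boundary so that $\nabla h$ exists there; for the entropy regularizer this can fail (a distribution supported on $|\aa|-1$ actions has entropy $\log(|\aa|-1)$, which exceeds $h_{\max}-L_h$ whenever $L_h \geq \log\frac{|\aa|}{|\aa|-1}$, yet $\nabla h$ blows up at such points). The fix is to avoid gradients entirely: in the direction ``fixed point $\Rightarrow (\star)$'', if some $u^\dagger \in \mathcal{U}_{L_h}$ had $\langle u^\dagger,q\rangle + h(u^\dagger) > \langle \pi^{*,k}(s),q\rangle + h(\pi^{*,k}(s))$, then by concavity of $u \mapsto \langle u,q\rangle + h(u)$ and convexity of $\mathcal{U}_{L_h}$, the point $u_\lambda = \lambda u^\dagger + (1-\lambda)\pi^{*,k}(s)$ gains at least $\lambda\bigl(\langle u^\dagger,q\rangle + h(u^\dagger) - \langle \pi^{*,k}(s),q\rangle - h(\pi^{*,k}(s))\bigr)$ in the linear-plus-$h$ part while the proximal penalty costs only $\frac{\lambda^2}{2\eta}\norm{u^\dagger-\pi^{*,k}(s)}_2^2$, so for small $\lambda$ the penalized objective strictly improves, contradicting the fixed-point property; the converse direction needs no smoothness at all, as you observed. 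Your reverse direction in Step 3 can also be streamlined: once $\pi^{*,k}$ is optimal within $\Pi_{L_h}$ and the unrestricted optimizer lies in $\Pi_{L_h}$, the value sandwich $V^k(\pi^{*,k},\bmu^*) \geq \max_{\pi\in\Pi_{L_h}} V^k(\pi,\bmu^*) \geq \max_{\pi\in\Pi} V^k(\pi,\bmu^*) \geq V^k(\pi^{*,k},\bmu^*)$ already gives Policy Optimality, so the uniqueness appeal is not needed.
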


\begin{lem}[Lipschitz Continuity of $\Gamma_\eta$]
\label{Lipschitz Continuity of Gamma_eta}
For any $\eta>0$, the operator $\Gamma_\eta: \Pi^K \to \Pi^K$ is Lipschitz continuous with constant $L_{\eta_\eta}$ on $(\Pi,\norm{\cdot}_1)$, where $L_\eta = L_{md,\pi} + L_{md,q}(L_{q,\pi}+L_{q,\mu}L_{pop,\infty})$.
\end{lem}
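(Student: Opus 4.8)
The plan is to unfold the definition $\Gamma_\eta[k](\bpi) = \Gamma_\eta^{pma}\bigl(\Gamma_q[k](\bpi,\Gamma^\infty_{pop}(\bpi)),\pi^k\bigr)$ and control it as a composition of three maps whose Lipschitz constants are already in hand. Since the profile norm is $\norm{\bpi-\bpi'}_1 = \max_{k\in[K]}\norm{\pi^k-\pi^{k\prime}}_1$, it suffices to prove $\norm{\Gamma_\eta[k](\bpi)-\Gamma_\eta[k](\bpi')}_1 \leq L_\eta\norm{\bpi-\bpi'}_1$ for an arbitrary fixed $k$; taking the maximum over $k$ on the left then yields the claim for $\Gamma_\eta$ directly, with no loss.

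First I would apply the Lipschitz continuity of $\Gamma^{pma}_\eta$ (Lemma~\ref{Lipschitz Continuity of mirror descent}) with arguments $q = \Gamma_q[k](\bpi,\Gamma^\infty_{pop}(\bpi))$, $\pi = \pi^k$ and their primed analogues, producing
\[
\norm{\Gamma_\eta[k](\bpi)-\Gamma_\eta[k](\bpi')}_1 \leq L_{md,\pi}\norm{\pi^k-\pi^{k\prime}}_1 + L_{md,q}\norm{\Gamma_q[k](\bpi,\Gamma^\infty_{pop}(\bpi)) - \Gamma_q[k](\bpi',\Gamma^\infty_{pop}(\bpi'))}_\infty .
\]
Next I would bound the $\qq$-term via the Lipschitz continuity of $\Gamma_q$ (Lemma~\ref{lemma: Lipschitz Gamma qk}), splitting it into $L_{q,\pi}\norm{\pi^k-\pi^{k\prime}}_1 + L_{q,\mu}\norm{\Gamma^\infty_{pop}(\bpi)-\Gamma^\infty_{pop}(\bpi')}_1$, and finally bound the stable-population difference by $L_{pop,\infty}\norm{\bpi-\bpi'}_1$ through the Lipschitz continuity of $\Gamma^\infty_{pop}$ (Lemma~\ref{Lipschitz contuity gamma_infty_pop}). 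Collecting the three estimates and using $\norm{\pi^k-\pi^{k\prime}}_1 \leq \norm{\bpi-\bpi'}_1$ gives the coefficient $L_{md,\pi}+L_{md,q}\bigl(L_{q,\pi}+L_{q,\mu}L_{pop,\infty}\bigr) = L_\eta$, as asserted.

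The one point requiring care is that the $\Gamma_q$ estimate holds only on the restricted domain $\Pi^K_{\Delta_h}$; I would fix $\Delta_h = L_h$ and note that this is harmless in context, since $\Gamma^{pma}_\eta$ outputs into $\mathcal{U}_{L_h}$ by its $\argmax$ over $\mathcal{U}_{L_h}$, and the relevant iterates together with the MP-MFG NE policy all lie in $\Pi^K_{L_h}$ by Lemma~\ref{lemma: Sufficiency of Pi Delta}. Because the argument is purely a chain of three Lipschitz compositions, there is no genuine analytic obstacle: the only real work is bookkeeping across the three distinct norms involved (the $\infty$-norm on $\qq$, the lifted $1$-norm on policies, and the $\max_k$ profile norm) and confirming the domain restriction above so that each cited Lipschitz bound legitimately applies at every link of the chain.
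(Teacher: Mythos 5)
Your proposal is correct and takes essentially the same route as the paper's own proof: the identical chain of three Lipschitz estimates (Lemma~\ref{Lipschitz Continuity of mirror descent} for $\Gamma^{pma}_\eta$, then Lemma~\ref{lemma: Lipschitz Gamma qk} for $\Gamma_q[k]$, then Lemma~\ref{Lipschitz contuity gamma_infty_pop} for $\Gamma^\infty_{pop}$), with the max over $k$ handled in the same way and the terms collected into $L_\eta = L_{md,\pi} + L_{md,q}(L_{q,\pi}+L_{q,\mu}L_{pop,\infty})$. Your closing remark about restricting to $\Pi^K_{L_h}$ so that the $\Gamma_q$ bound legitimately applies is a point the paper's proof passes over silently; it is a careful refinement of, not a departure from, the same argument.
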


\begin{proof}
By  Lemma \ref{Lipschitz Continuity of mirror descent},
\begin{align*}
    \norm{\Gamma_\eta(\bpi) - \Gamma_\eta(\bpi')}_1 &= \max_{ k\in [K]}\norm{\Gamma_\eta^k(\bpi) - \Gamma_\eta^k(\bpi')}_1\\
    &\leq L_{md,\pi}\norm{\pi^k - \pi^{k\prime}}_1 + L_{md,q}\norm{\Gamma_q[k](\bpi,\Gamma^\infty_{pop}(\bpi)) - \Gamma_q[k](\bpi^{\prime},\Gamma^\infty_{pop}(\bpi'))}_\infty,
\end{align*}
where we take $k = \argmax \norm{\Gamma_\eta^k(\bpi) - \Gamma_\eta^k(\bpi')}_1$, and then by Lemma \ref{lemma: Lipschitz Gamma qk} and Lemma \ref{Lipschitz contuity gamma_infty_pop}, we obtain: 
\begin{align*}
     \norm{\Gamma_\eta(\bpi) - \Gamma_\eta(\bpi')}_1 &\leq L_{md,\pi}\norm{\bpi - \bpi'}_1 + L_{md,q}\left(L_{q,\pi} \norm{\pi^k - \pi^{k'}}_1 + L_{q,\mu}\norm{\Gamma^\infty_{pop}(\bpi)- \Gamma^\infty_{pop}(\bpi')}_1.\right) \\
     &\leq L_{md,\pi}\norm{\bpi - \bpi'}_1 + L_{md,q}(L_{q,\pi}+L_{q,\mu}L_{pop,\infty}) \norm{\bpi - \bpi'}_1\\
     &\leq \left(L_{md,\pi} + L_{md,q}(L_{q,\pi}+L_{q,\mu}L_{pop,\infty})\right)\norm{\bpi - \bpi'}_1,
\end{align*}
so the Lipschitz constant is given by $L_\eta = L_{md,\pi} + L_{md,q}(L_{q,\pi}+L_{q,\mu}L_{pop,\infty})$.

Furthermore, note that by Lemma \ref{lipschitz continuity of mirror descent q function generator}: 
\begin{align*}
    L_\eta =  \frac{1}{|\aa|^{-1}+\eta \rho} + \frac{2\eta \sqrt{|\aa|}}{1 + 2\eta\rho \sqrt{|\aa|}}(L_{q,\pi}+L_{q,\mu}L_{pop,\infty}) \leq \frac{L_{q,\pi}+L_{q,\mu}L_{pop,\infty}}{\rho} + \frac{1}{\eta \rho},
\end{align*}
which completes the proof of the lemma. 
\end{proof}
Note that in order for the mapping $\Gamma_\eta$ to be contractive, we need $L_\eta < 1$, which can be obtained with sufficiently large $\rho.$ Indeed, we can select $\eta$ such that as $\eta \to \infty$, $L_\eta \to ({L_{q,\pi}+L_{q,\mu}L_{pop,\infty}})/{\rho}$, and this means that if we pick a sufficiently large $\rho$ such that $ \rho > L_{q,\pi}+L_{q,\mu}L_{pop,\infty}$, we can always obtain $L_\eta<1.$ Conversely, if $\rho$ is sufficiently large, then we can obtain a contraction by setting the learning rate $\eta > (\rho - L_{q,\pi}+L_{q,\mu}L_{pop})^{-1}$.

\subsection{Proof for Proposition 1}
This propositionon 1 directly follows recursively from Lemma \ref{lemma: fixed point of gamma_eta} and \ref{Lipschitz Continuity of Gamma_eta}.

\section{The Simulator-Oracle-Based Learning}
\label{section: The Simulator-Oracle-Based Learning}

In this section, we consider the problem of learning the MP-MFG NE when the probability transition $P$ and reward $R$ are unknown. We need to simultaneously learn the MP-MFG system and the NE. 

We start with assuming the availability of a simulator which enables us to sample from $P$ and $R$ at any time $t$ and the aggregated impact $z_t^k$ from any arbitrary mean-field ensemble $\bmu$. Restrictive as it may seem, this assumption is reasonable in many circumstances such as online ad auction \cite{guo2023general}.

\begin{assump}[Simulator Oracle]
\label{Assumption: simulator oracle}
For any policy profile $\bpi$ and any arbitrary mean-field ensemble $\bmu_t$, for any population $k$, given the current state $s_t^k \in \ss$ at time $t$, we have access to a simulator which returns the next state $s^{k,l}_{t+1} \sim P(\cdot|s^{k,l}_t,\pi^k(s^{k,l}_t), z_t^k)$ and $r_t = R(s^{k,l}_t, \pi^k(s^{k,l}_t), z^k_t)$. Denote this simulator as $(s',r) = \mathcal{G}(s,\pi,z)$.  
\end{assump}

We remark that our simulator oracle is less restrictive than the simulator in \cite{guo2023general} which is enabled to generate the next mean field distribution whereas our version only generates the next state and reward. Instead, our definition of simulator is more comparable to the weak simulator as defined in \cite{guo2023general}.

Now, consider the scenario where there is an omniscient coordinator who knows the connectivity matrix $W_K$ and the mean-field ensemble $\bmu_t$ and the policy profile $\bpi_t$ at any time $t$. Then, at each time $t$, this coordinator will feed the policy profile $\bpi_t$ into the simulator $\mathcal{G}$, and recursively generate a $\epsilon_{pop}$-stable population under the policy profile $\bpi_t$ with an approximate population update operator $\hat{\Gamma}_{pop}$ (which approximates $\Gamma_{pop}$ in Definition \ref{defn: total population update}). Then, the coordinator can choose from the standard Q-learning algorithm, offline learning algorithms (fitted Q-learning, see e.g.,  \cite{Anahtarci2019FittedQI}), or deep Q-learning algorithms (for example, DQN in \cite{mnih2013playing}, DDQN in \cite{DDQN-2015}, Mellowmax Q-Learning in \cite{asadi2017alternative}, and Neural Q-learning in \cite{cai2020neural}) to find a $q$-function approximation $\hat{q}^k_t$ for each population $k$. Note that the coordinator can also choose an offline algorithm such as Fitted Q-Learning as suggested in \cite{Anahtarci2019FittedQI}, and we comment that this choice depends on the size of state and action space, as well as the availability of memory. 

Finally, they will perform one-step of the total population Policy Mirror Ascent as defined in Definition 
\ref{PMA Operator on Stationary Population Distribution} to update the policy profile $\bpi_t$ to $\bpi_{t+1}$. The coordinator will then repeat this process until they obtain the $\epsilon_\pi$-stable policy profile.

First, we present the algorithm for the operator $\hat{\Gamma}_{pop}$ which approximates $\Gamma_{pop}$ and generates a $\epsilon_{pop}$-stable mean-field ensemble, and denote the element-wise approximation of $\Gamma_{pop}[k]$ by $\hat{\Gamma}^k_{pop}$. 

\begin{algorithm}
    \caption{Approximate Population Update Operator $\hat{\Gamma}_{pop}$}
    \label{Algo: Stable Population hat-Gamma_pop}
    \begin{algorithmic}
        \Require Simulator $\mathcal{G}$. 
        \Require Number of samples $N$, mean-field ensemble $\bmu$, policy profile $\bpi$.
        \For{Population $k = 1, \dots ,K$}
             \State Compute $$z^k = \frac{1}{K}\left({\sum_{l=1}^K W_{kl}\cdot \mu^l}\right).$$
            \State Generate a dataset of $N$ i.i.d. samples $\{(s_i,a_i, s'_i)\}_{i=1}^N$ via simulator $\mathcal{G}$: $s'_i \sim P(\cdot|s_i, a_i, z^k).$
            \State Define $$\hat{P}(s'|s,a,z^k) = \frac{\sum_{i=1}^N \one\{(s_i,a_i,s_i') = (s,a,s')\}}{N(s,a)}, N(s,a) = \sum_{i=1}^N \one\{(s_i,a_i) = (s,a)\}$$

            \State Set $\mu^k_{new}(\cdot) = \sum_{s\in \ss}\sum_{a\in \aa}\mu^k(s)\pi^k(a|s)\hat{P}(\cdot|s,a,z^k)$.
        \EndFor
        \State Return $\bmu_{new} = (\mu^1_{new}, \dots, \mu^K_{new})$.
       
    \end{algorithmic}
     \label{complete information case}
\end{algorithm}
Before introducing the main algorithm, we present an error analysis of the approximate population update operator $\hat{\Gamma}_{pop}.$ For simplicity, we assume that $N$ is large enough that $N(s,a) > 0, \forall s\in \ss, a\in \aa.$

\begin{lem}[Error Analysis of $\hat{\Gamma}_{pop}$]
\label{Error Analysis of Gamma_pop_hat}
For any $(\epsilon, \delta) \in (0,1)^2$, with probability at least $1-\delta$,
\begin{align*}
    \norm{\hat{\Gamma}_{pop}(\bmu, \bpi) - \Gamma_{pop}(\bmu,\bpi)}_1 \leq \epsilon,
\end{align*}
if the sample size $$N \geq \Tilde O(\frac{S^2A}{\epsilon^2}),$$ where $S=|\ss|$ and $A= |\aa|$. 
\end{lem}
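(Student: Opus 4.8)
The plan is to reduce the operator-level error to the estimation error of the empirical kernel $\hat P$ at individual state--action pairs, and then invoke an $\ell_1$ concentration inequality for empirical distributions. By the definition of the norm on $\Delta^K(\ss)$, we have $\norm{\hat{\Gamma}_{pop}(\bmu,\bpi) - \Gamma_{pop}(\bmu,\bpi)}_1 = \max_{k\in[K]}\norm{\hat{\Gamma}_{pop}[k](\bmu,\bpi) - \Gamma_{pop}[k](\bmu,\bpi)}_1$, so it suffices to bound each population term. Fixing the maximizing $k$ and subtracting the two update formulas, the only difference is $\hat P$ versus $P$, so the triangle inequality gives
\begin{align*}
\norm{\hat{\Gamma}_{pop}[k](\bmu,\bpi) - \Gamma_{pop}[k](\bmu,\bpi)}_1 \leq \sum_{s\in\ss}\sum_{a\in\aa}\mu^k(s)\pi^k(a|s)\,\norm{\hat P(\cdot|s,a,z^k) - P(\cdot|s,a,z^k)}_1.
\end{align*}
Since the weights $\mu^k(s)\pi^k(a|s)$ sum to one, it is enough to ensure that every per-pair error $\norm{\hat P(\cdot|s,a,z^k) - P(\cdot|s,a,z^k)}_1$ is at most $\epsilon$.

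Next I would control each per-pair error by concentration. For fixed $(s,a,k)$, $\hat P(\cdot|s,a,z^k)$ is the empirical distribution over $\ss$ formed from the $N(s,a)$ i.i.d.\ samples $s_i' \sim P(\cdot|s,a,z^k)$. Applying the Bretagnolle--Huber--Carol (Weissman) $\ell_1$ deviation inequality, $\P(\norm{\hat P(\cdot|s,a,z^k) - P(\cdot|s,a,z^k)}_1 \geq t) \leq 2^{S} e^{-N(s,a)\,t^2/2}$, so with probability at least $1-\delta'$,
\begin{align*}
\norm{\hat P(\cdot|s,a,z^k) - P(\cdot|s,a,z^k)}_1 \leq \sqrt{\frac{2\left(S\log 2 + \log(1/\delta')\right)}{N(s,a)}}.
\end{align*}
The essential feature is the dimension-dependent $\sqrt{S/N(s,a)}$ rate: forcing the right-hand side below $\epsilon$ requires $N(s,a) = \Omega\!\left(S\log(1/\delta')/\epsilon^2\right)$.

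Finally I would take a union bound over all $KSA$ triples $(s,a,k)$ with $\delta' = \delta/(KSA)$, which contributes only a $\log(KSA/\delta)$ factor. To convert the per-pair requirement into a bound on the total sample size $N$, I would use that each of the $SA$ pairs must receive $\Omega(S\log(KSA/\delta)/\epsilon^2)$ samples; assuming (as in the generative-model setting) that the queries are spread over the $SA$ pairs so that $N(s,a) \gtrsim N/(SA)$, this yields $N = \Omega\!\left(S^2 A\log(KSA/\delta)/\epsilon^2\right) = \tilde O(S^2 A/\epsilon^2)$, matching the claim.

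The main obstacle is the factor $S$ inside the $\ell_1$ concentration bound: it is precisely this dimension dependence that upgrades the naive count of $SA$ visited pairs to the stated $S^2A$ budget, and obtaining it sharply (rather than incurring an extra $S$ from a crude per-coordinate Hoeffding plus a union bound over $\ss$) requires the vector-valued deviation inequality above. A secondary and more routine point is guaranteeing that every $(s,a)$ is visited $N(s,a) \gtrsim N/(SA)$ times; this is immediate if the algorithm queries each pair equally, and otherwise follows from a multiplicative Chernoff bound provided the sampling distribution over $(s,a)$ is bounded below.
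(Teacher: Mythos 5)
Your proposal is correct and follows the same overall structure as the paper's proof: reduce the operator-level error to the per-pair kernel errors $\norm{\hat P(\cdot|s,a,z^k) - P(\cdot|s,a,z^k)}_1$ via convexity of the weights $\mu^k(s)\pi^k(a|s)$, establish a $\sqrt{S/N(s,a)}$ per-pair rate, union bound, and then convert the per-pair requirement into the total budget $N = \tilde O(S^2A/\epsilon^2)$ by summing over the $SA$ pairs. The one substantive difference is the concentration step: you invoke the Weissman/Bretagnolle--Huber--Carol $\ell_1$ deviation inequality $\P(\norm{\hat P - P}_1 \geq t) \leq 2^S e^{-N(s,a)t^2/2}$ as a packaged result, whereas the paper derives the same $\tilde O(\sqrt{S/N(s,a)})$ rate from scratch by writing $\norm{\hat P - P}_1 = \max_{f \in [-1,1]^S}(\hat P - P)^\top f$, applying Hoeffding for fixed $f$, and covering $[-1,1]^S$ with an $\epsilon$-net of size $(1+2\sqrt S/\epsilon)^S$. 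The two routes are equivalent in the rate they deliver (the net argument is essentially a proof of the Weissman bound up to constants); yours is shorter and cleaner, the paper's is self-contained. Your explicit remark that the argument needs $N(s,a)\gtrsim N/(SA)$ is a point the paper also handles only by assumption, so no gap there; your union bound over $k$ as well as $(s,a)$ is slightly more careful than the paper's (which unions only over $(s,a)$) and costs nothing inside the $\tilde O$.
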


\begin{proof}
    First, we bound the model error in Algorithm \ref{Algo: Stable Population hat-Gamma_pop}, i.e. $\norm{\hat{P}(\cdot|s,a) - P(\cdot|s,a)}_1$ for all state-action pair $s\in \ss$, $a \in \aa$. 

    Notice that for all $s,a$, $$\norm{\hat{P}(\cdot|s,a) - P(\cdot|s,a)}_1 = \max_{f:\ss \to [-1,1]}(\hat{P}(\cdot|s,a) - P(\cdot|s,a))^\top f.$$

    Then, for a fixed function $f$, Hoeffding's inequality (\cite{AJKS}, Lemma A.1.) implies that for a fixed $s,a$, with probability at least $\frac{\delta}{SA}$,
    \begin{align*}
        |(\hat{P}(\cdot|s,a) - P(\cdot|s,a))^\top f| \geq \sqrt{\frac{2\ln(2SA/\delta)}{N(s,a)}}.
    \end{align*}
     Then by union bound, for all $s,a$
    \begin{align*}
        \P(|(\hat{P}(\cdot|s,a) - P(\cdot|s,a))^\top f| &\leq \sqrt{\frac{2\ln(2SA/\delta)}{N(s,a)}})\\ &= 1 - \P\left(\bigcup_{s,a}\left\{\left|\left(\hat{P}(\cdot | s,a) - P^*(\cdot | s,a)\right)^\top f\right| \geq  \sqrt{\frac{2\ln(2SA/\delta)}{N(s,a)}})\right\}\right)\\
          &\geq 1 - \sum_{s,a} \P\left(\left|\left(\hat{P}(\cdot | s,a) - P^*(\cdot | s,a)\right)^\top f\right| \geq  \sqrt{\frac{2\ln(2SA/\delta)}{N(s,a)}}\right)\\
          &\geq 1 - \sum_{s,a}\frac{\delta}{SA}\\
          &= 1- \delta.
    \end{align*}

    We can then use a standard $\epsilon$-net argument and get a covering $N_\epsilon$ with $|N_\epsilon| < (1+2\sqrt{S}/\epsilon)^S$ such that for any $f \in [-1,1]^S$, there exists a $f' \in N_\epsilon$ such that $\norm{f-f'}_2 \leq \epsilon$. This implies that 
    \begin{align*}
        \left|\left(\hat{P}(\cdot | s,a) - P^*(\cdot | s,a)\right)^\top f\right| \leq 2\epsilon + \left| \left(\hat{P}(\cdot | s,a) - P^*(\cdot | s,a)\right)^\top f'\right|.
    \end{align*}
    Thus, by Hoeffding's inquality and union bound, it follows that for all $s,a$ and $f$, 
    \begin{align*}
        \P\left(\left|\left(\hat{P}(\cdot | s,a) - P^*(\cdot | s,a)\right)^\top f\right| \leq \sqrt{\frac{2S\ln(2SA(1+2\sqrt{S}/\epsilon)/\delta)}{N(s,a)}} +2\epsilon\right) \geq 1-\delta,
    \end{align*}

    Finally, take $\epsilon = \frac{2}{N}$, where $N$ is the total number of samples, which means that $N \geq N(s,a)$. Therefore, we obtain: with probability at least $1-\delta$, for all $s,a$
    \begin{align*}
        \norm{\hat{P}(\cdot | s,a) - P^*(\cdot | s,a) }_1 &= \max_{f \in [-1,1]^S}\left|(\hat{P}(\cdot | s,a) - P^*(\cdot | s,a))^\top f\right|\\
        &\leq \sqrt{\frac{2S\ln(2SA(1+2\sqrt{S}/\epsilon)/\delta)}{N(s,a)}} +\frac{4}{N}\\
        &\leq O(\sqrt{\frac{2S\ln(2SA(1+\sqrt{S}N)/\delta)}{N(s,a)}})
        = \Tilde{O}\left(\sqrt{\frac{S}{N(s,a)}}\right),
    \end{align*}
    where $O(\cdot)$ ignores absolute constant, $\Tilde O(\cdot)$ ignores the log term. 

    Having bounded the model error, now we proceed to analyze the estimation error:
    \begin{align*}
        \P\left(\norm{\hat{\Gamma}_{pop}(\bmu, \bpi) - \Gamma_{pop}(\bmu,\bpi)}_1 \leq \epsilon\right) &= \P\left(\norm{\hat{\Gamma}^k_{pop}(\bmu, \pi^k) - \Gamma_{pop}(\bmu,\pi^k)}_1 \leq \epsilon\right), \text{ where $k$ is the maximizer}\\
        &= \P\left(\norm{\sum_{s\in \ss}\sum_{a\in \aa}\mu^k(s)\pi^k(a|s)\left(\hat{P}(\cdot|s,a,z^k) - P(\cdot|s,a,z^k)\right)}_1 \leq \epsilon \right)\\
        \text{ by triangle inequality, }&\geq \P\left(\sum_{s\in \ss}\sum_{a\in \aa}\mu^k(s)\pi^k(a|s)\norm{\hat{P}(\cdot|s,a,z^k) - P(\cdot|s,a,z^k)}_1 \leq \epsilon \right)\\
        &\geq \P\left(\max_{s,a}\norm{\hat{P}(\cdot|s,a,z^k) - P(\cdot|s,a,z^k)}_1 \leq \epsilon \right),
    \end{align*}
    Using the model error bound, we can obtain that when $N(s,a) \geq \Tilde O(\frac{S}{\epsilon^2}),$
    \begin{align*}
         \P\left(\norm{\hat{\Gamma}_{pop}(\bmu, \bpi) - \Gamma_{pop}(\bmu,\bpi)}_1\leq \epsilon\right) &\geq 1-\delta,
    \end{align*}
    and in order to get the total sample complexity, we can assume that for all $s,a$, $N(s,a) \geq \Tilde O(\frac{S}{\epsilon}).$, so we have: 
    \begin{align*}
        N = \sum_{s\in \ss} \sum_{a\in \aa} N(s,a) \geq \Tilde O\left(\frac{S^2A}{\epsilon^2}\right),
    \end{align*}
    and this implies a polynomial sample complexity. 
\end{proof}

Recall that in the complete information case, we have shown that under the assumption of $L_{pop,\mu} < 1$, the total population update operator $\Gamma_{pop}$ is a contraction. Now, we would like the approximator $\hat{\Gamma}_{pop}$ to have the same contractive property. 

\begin{lem}[Contractive Property of $\hat{\Gamma}_{pop}$]
\label{lemma: contractive property of hat{Gamma}_pop}
    For a fixed policy profile $\bpi$, the approximate population update operator $\hat{\Gamma}_{pop}$ is a contraction, i.e. 
    \begin{align*}
        \norm{\hat{\Gamma}_{pop}(\bmu, \bpi) - \hat{\Gamma}_{pop}(\bmu', \bpi)}_1 \leq \norm{\bmu - \bmu'}_1,
    \end{align*}
    when the model error of $\epsilon$ as defined in Lemma \ref{Error Analysis of Gamma_pop_hat} satisfies
    \begin{align*}
        \epsilon < \frac{(1-L_{pop,\mu})\norm{\bmu - \bmu'}_1}{2}.
    \end{align*}
\end{lem}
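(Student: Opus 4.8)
The plan is to realize $\hat{\Gamma}_{pop}(\cdot,\bpi)$ as a small perturbation of the true contraction $\Gamma_{pop}(\cdot,\bpi)$ and to absorb that perturbation into the contraction slack $1-L_{pop,\mu}$. Concretely, I would insert the exact operator $\Gamma_{pop}$ at both endpoints and apply the triangle inequality:
\begin{align*}
\norm{\hat{\Gamma}_{pop}(\bmu,\bpi) - \hat{\Gamma}_{pop}(\bmu',\bpi)}_1
&\leq \norm{\hat{\Gamma}_{pop}(\bmu,\bpi) - \Gamma_{pop}(\bmu,\bpi)}_1 \\
&\quad + \norm{\Gamma_{pop}(\bmu,\bpi) - \Gamma_{pop}(\bmu',\bpi)}_1 \\
&\quad + \norm{\Gamma_{pop}(\bmu',\bpi) - \hat{\Gamma}_{pop}(\bmu',\bpi)}_1.
\end{align*}

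Next I would bound each piece separately. The first and third terms are precisely the model-approximation error controlled by Lemma \ref{Error Analysis of Gamma_pop_hat}, so each is at most $\epsilon$. The middle term is handled by the Lipschitz estimate of Lemma \ref{Lipschitz continuity of Gamma_pop}: since $\bpi$ is held fixed, the policy contribution $\frac{p_a}{2}\norm{\bpi-\bpi'}_1$ vanishes and the term is at most $L_{pop,\mu}\norm{\bmu-\bmu'}_1$. Summing yields $\norm{\hat{\Gamma}_{pop}(\bmu,\bpi) - \hat{\Gamma}_{pop}(\bmu',\bpi)}_1 \leq 2\epsilon + L_{pop,\mu}\norm{\bmu-\bmu'}_1$. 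Substituting the hypothesis $\epsilon < \frac{(1-L_{pop,\mu})\norm{\bmu-\bmu'}_1}{2}$ forces $2\epsilon < (1-L_{pop,\mu})\norm{\bmu-\bmu'}_1$, so the right-hand side is strictly below $\norm{\bmu-\bmu'}_1$, which is exactly the claimed contraction.

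The step requiring the most care is the simultaneous application of Lemma \ref{Error Analysis of Gamma_pop_hat} at both endpoints. Because the empirical kernel $\hat{P}$ underlying $\hat{\Gamma}_{pop}(\bmu,\bpi)$ is estimated from transitions sampled at the aggregate impact $z^k$ induced by $\bmu$, whereas the kernel underlying $\hat{\Gamma}_{pop}(\bmu',\bpi)$ is estimated at the impact induced by $\bmu'$, these are two distinct estimation events rather than one. I would therefore invoke the error bound under a union bound, so that both $\epsilon$-deviations hold simultaneously with probability at least $1-2\delta$ (equivalently, replacing $\delta$ by $\delta/2$ in the sample-size requirement of Lemma \ref{Error Analysis of Gamma_pop_hat}, which only inflates the sample complexity by an absolute constant). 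The Lipschitz step is deterministic and introduces no further randomness, so no additional adjustment is needed there.
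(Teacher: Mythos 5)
Your proposal is correct and follows essentially the same route as the paper: the paper's proof is exactly this triangle-inequality decomposition through $\Gamma_{pop}$ at both endpoints, bounding the two outer terms by $\epsilon$ via Lemma \ref{Error Analysis of Gamma_pop_hat} and the middle term by $L_{pop,\mu}\norm{\bmu-\bmu'}_1$ via Lemma \ref{Lipschitz continuity of Gamma_pop}. Your closing observation about needing a union bound over the two distinct estimation events is a point of rigor the paper silently glosses over (it treats the $\epsilon$-deviations as if they held simultaneously), so it is a welcome refinement rather than a departure.
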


\begin{proof}
By Lemma \ref{Lipschitz continuity of Gamma_pop},
\begin{align*}
    \norm{\hat{\Gamma}_{pop}(\bmu, \bpi) - \hat{\Gamma}_{pop}(\bmu', \bpi)}_1  \leq &\norm{\hat{\Gamma}_{pop}(\bmu, \bpi) - {\Gamma}_{pop}(\bmu, \bpi)}_1\\
    &+ \norm{{\Gamma}_{pop}(\bmu, \bpi) - {\Gamma}_{pop}(\bmu', \bpi)}_1 + \norm{{\Gamma}_{pop}(\bmu', \bpi) - \hat{\Gamma}_{pop}(\bmu', \bpi)}_1\\
    \leq &2 \epsilon + L_{pop,\mu}\norm{\bmu - \bmu'}_1,
\end{align*}
which implies the upper bound for the model error $\epsilon$. 
\end{proof}

Assuming the total population update is a contractive mapping, for a fixed policy profile $\bpi_t$, we can iterate $\hat{\Gamma}_{pop}$ until ``almost" stable i.e. $\norm{\bmu_{new}- \hat{\bmu}_{t}}_1 \leq \epsilon_{pop}$ (See Algorithm \ref{algo:Simulator-based PMA Reinforcement Learning for MP-MFG NE} for more details). Denote the number of iterations required by $T^t_{pop}$. Then, using the error bound for single-step population update, we can develop the following lemma which provides an error bound for the stable population generator $\Gamma_{pop}^\infty$: 

\begin{lem}[Error Analysis for $\Gamma^\infty_{pop}$]
\label{lemma: error analysis for gamma_infty_pop}
    The $\epsilon_{pop}$-approximate stable population $\hat{\bmu}_{T_{pop}^t}$ generated recursively by $\hat{\Gamma}_{pop}$ satisfies, with probability $1-\delta$, 
    \begin{align*}
        \norm{\Gamma_{pop}^\infty(\bpi) - \hat{\bmu}_T}_1 \leq \frac{\epsilon_{pop}}{1- L_{pop}} + \epsilon_{pop},
    \end{align*}
    when the sample sizes for Algorithm \ref{Algo: Stable Population hat-Gamma_pop} are sufficiently large. 
\end{lem}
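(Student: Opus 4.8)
The plan is to treat the terminated iterate $\hat\bmu_T$ as an approximate fixed point of the \emph{true} population operator $\Gamma_{pop}(\cdot,\bpi)$, and then convert that fixed-point residual into a distance-to-equilibrium bound using the contractivity granted by Assumption~\ref{assumption : stable population}. Write $\bmu^\ast := \Gamma_{pop}^\infty(\bpi)$, so that $\Gamma_{pop}(\bmu^\ast,\bpi)=\bmu^\ast$ by Definition~\ref{defn: stable population operator gammainfty}. The termination rule of the recursion enforces $\norm{\hat\Gamma_{pop}(\hat\bmu_T,\bpi)-\hat\bmu_T}_1 \le \epsilon_{pop}$, and this is the only structural fact about $\hat\bmu_T$ that I would use.

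First I would bound the residual of $\hat\bmu_T$ under the \emph{true} operator by inserting $\hat\Gamma_{pop}(\hat\bmu_T,\bpi)$ and applying the triangle inequality:
\[
\norm{\Gamma_{pop}(\hat\bmu_T,\bpi)-\hat\bmu_T}_1 \le \norm{\Gamma_{pop}(\hat\bmu_T,\bpi)-\hat\Gamma_{pop}(\hat\bmu_T,\bpi)}_1 + \norm{\hat\Gamma_{pop}(\hat\bmu_T,\bpi)-\hat\bmu_T}_1 \le \epsilon + \epsilon_{pop},
\]
where the first term is the model error controlled by Lemma~\ref{Error Analysis of Gamma_pop_hat} (holding with probability $1-\delta$ once $N \ge \tilde{O}(S^2A/\epsilon^2)$, with $S=|\ss|$, $A=|\aa|$) and the second is the stopping tolerance. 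I would then convert this residual into a distance bound using $\bmu^\ast=\Gamma_{pop}(\bmu^\ast,\bpi)$, the triangle inequality, and the $L_{pop}$-Lipschitz continuity of $\Gamma_{pop}(\cdot,\bpi)$ from Lemma~\ref{Lipschitz continuity of Gamma_pop} (the policy term vanishing because $\bpi$ is held fixed):
\begin{align*}
\norm{\hat\bmu_T-\bmu^\ast}_1 &\le \norm{\hat\bmu_T-\Gamma_{pop}(\hat\bmu_T,\bpi)}_1 + \norm{\Gamma_{pop}(\hat\bmu_T,\bpi)-\Gamma_{pop}(\bmu^\ast,\bpi)}_1 \\
&\le (\epsilon+\epsilon_{pop}) + L_{pop}\norm{\hat\bmu_T-\bmu^\ast}_1.
\end{align*}
Rearranging, which is legitimate since $L_{pop}<1$, yields $\norm{\hat\bmu_T-\bmu^\ast}_1 \le (\epsilon+\epsilon_{pop})/(1-L_{pop})$. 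To recover the stated form I would finally pick the per-call sample size large enough that $\epsilon \le (1-L_{pop})\epsilon_{pop}$, so that $\epsilon/(1-L_{pop})\le \epsilon_{pop}$ and hence $\norm{\hat\bmu_T-\bmu^\ast}_1 \le \epsilon_{pop}/(1-L_{pop})+\epsilon_{pop}$; by Lemma~\ref{Error Analysis of Gamma_pop_hat} this corresponds to $N \ge \tilde{O}\big(S^2A/((1-L_{pop})\epsilon_{pop})^2\big)$, which is the meaning of ``sufficiently large'' sample sizes.

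The main obstacle is the probabilistic bookkeeping rather than the algebra. Since $\hat\Gamma_{pop}$ is re-randomized at each step of the recursion, the model-error event of Lemma~\ref{Error Analysis of Gamma_pop_hat} must be made to hold at the \emph{data-dependent} terminating iterate $T_{pop}^t$, not merely at a fixed point chosen in advance. I would handle this with a union bound over the iterations, replacing the per-step failure probability by $\delta/T_{pop}^t$; the recursion is guaranteed to terminate after finitely many steps by the contractivity of $\hat\Gamma_{pop}$ established in Lemma~\ref{lemma: contractive property of hat{Gamma}_pop} (which forces consecutive iterates to contract toward the approximate fixed point and thus eventually fall within $\epsilon_{pop}$). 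Because the sample size enters Lemma~\ref{Error Analysis of Gamma_pop_hat} only through a logarithm of the confidence level, this union bound inflates $N$ by at most a $\log T_{pop}^t$ factor, which is absorbed into the $\tilde{O}(\cdot)$ notation and leaves the final bound unchanged.
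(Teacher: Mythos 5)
Your proof is correct and reaches exactly the stated bound, but it takes a genuinely different route from the paper. The paper's proof pivots on the fixed point $\hat{\Gamma}^\infty_{pop}(\bpi)$ of the \emph{approximate} operator: it first asserts $\norm{\hat{\Gamma}^\infty_{pop}(\bpi) - \hat{\bmu}_T}_1 \leq \epsilon_{pop}$ from the stopping rule and Lemma \ref{lemma: contractive property of hat{Gamma}_pop}, then bounds $\norm{\Gamma^\infty_{pop}(\bpi) - \hat{\Gamma}^\infty_{pop}(\bpi)}_1 \leq \epsilon_{pop}/(1-L_{pop})$ by a fixed-point perturbation argument (inserting $\Gamma_{pop}(\hat{\Gamma}^\infty_{pop}(\bpi),\bpi)$ and using the $L_{pop}$-contraction of the true operator plus the model error), and finally combines the two by the triangle inequality. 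You instead treat the terminated iterate itself as an approximate fixed point of the \emph{true} operator: bound its residual $\norm{\Gamma_{pop}(\hat{\bmu}_T,\bpi) - \hat{\bmu}_T}_1 \leq \epsilon + \epsilon_{pop}$, and convert that into a distance via $(1-L_{pop})^{-1}$. Your decomposition buys something real: it never invokes the object $\hat{\Gamma}^\infty_{pop}(\bpi)$, whose existence and uniqueness are delicate because $\hat{\Gamma}_{pop}$ is re-randomized at every call and Lemma \ref{lemma: contractive property of hat{Gamma}_pop} only delivers non-expansiveness (constant $1$), not a strict contraction — so the paper's step 2 is the weakest link of its own argument. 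Your version needs the model-error event only at the single (data-dependent) point $\hat{\bmu}_T$, and you correctly flag the attendant union-bound over iterations, a bookkeeping issue the paper passes over silently. One small indexing caveat applies to you (and equally to the paper): the stopping rule of Algorithm \ref{algo:Simulator-based PMA Reinforcement Learning for MP-MFG NE} controls $\norm{\hat{\bmu}_T - \hat{\bmu}_{T-1}}_1 \leq \epsilon_{pop}$ where $\hat{\bmu}_T = \hat{\Gamma}_{pop}(\hat{\bmu}_{T-1},\bpi)$, i.e.\ the residual is certified at $\hat{\bmu}_{T-1}$, not at $\hat{\bmu}_T$; running your argument at $\hat{\bmu}_{T-1}$ and pushing forward one step through $\Gamma_{pop}$ (at the cost of one extra model-error term $\epsilon$ absorbed by the choice $\epsilon \leq (1-L_{pop})\epsilon_{pop}$) recovers the same final bound, so this is cosmetic rather than a gap.
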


\begin{proof}
For brevity, we use $T$ to replace $T^t_{pop}$, as the proof works for any policy profile $\bpi \in \Pi^K$. 

By Lemma \ref{Error Analysis of Gamma_pop_hat}, take all the sample sizes $N \geq \tilde O\left(\frac{4S^2A}{(1-L_{pop}\epsilon_{pop})^2}\right)$ so that with probability $1-\delta$, the model error $\epsilon < \frac{1}{2}{(1-L_{pop})\epsilon_{pop}}$, and then by Lemma \ref{lemma: contractive property of hat{Gamma}_pop} and the design of the algorithm, $\norm{\hat{\Gamma}_{pop}^\infty(\bpi) - \hat{\bmu}_{T}}_1 \leq \epsilon_{pop},$ so by Lemma \ref{Lipschitz continuity of Gamma_pop},
\begin{align*}
    \norm{\Gamma_{pop}^\infty(\bpi) - \hat{\bmu}_T}_1 &\leq \norm{\Gamma_{pop}^\infty(\bpi) - \hat{\Gamma}_{pop}^\infty(\bpi)}_1 + \norm{\hat{\Gamma}_{pop}^\infty(\bpi) - \hat{\bmu}_T}_1\\
    &\leq \norm{\Gamma^\infty_{pop}(\bpi) - \hat{\Gamma}_{pop}^\infty(\bpi)}_1 + \epsilon_{pop},
\end{align*}
where the first term is bounded by: 
\begin{align*}
    \norm{\Gamma^\infty_{pop}(\bpi) - \hat{\Gamma}_{pop}^\infty(\bpi)}_1 &= \norm{\Gamma_{pop}(\Gamma^\infty_{pop}(\bpi), \bpi) - \hat{\Gamma}_{pop}(\hat{\Gamma}^\infty_{pop}(\bpi),\bpi)}_1\\
    &\leq \norm{\Gamma_{pop}(\Gamma^\infty_{pop}(\bpi), \bpi) - \Gamma_{pop}(\hat{\Gamma}^\infty_{pop}(\bpi), \bpi)}_1 +  \norm{\Gamma_{pop}(\hat{\Gamma}^\infty_{pop}(\bpi), \bpi) - \hat{\Gamma}_{pop}(\hat{\Gamma}^\infty_{pop}(\bpi), \bpi)}_1\\
    &\leq L_{pop} \norm{\Gamma^\infty_{pop}(\bpi) - \hat{\Gamma}^\infty_{pop}(\bpi)}_1 + \epsilon_{pop},\\
    \Rightarrow \norm{\Gamma^\infty_{pop}(\bpi) - \hat{\Gamma}_{pop}^\infty(\bpi)}_1 &\leq \frac{\epsilon_{pop}}{1- L_{pop}}.
\end{align*}
This implies that: with probability $1-\delta$, 
\begin{align*}
    \norm{\Gamma_{pop}^\infty(\bpi) - \hat{\bmu}_T}_1 \leq \frac{\epsilon_{pop}}{1- L_{pop}} + \epsilon_{pop},
\end{align*}
which completes the proof.   
\end{proof}
Now, we present the learning algorithm based on a simulator (a generative model). 

\begin{algorithm}
\caption{Simulator-based PMA Reinforcement Learning for MP-MFG NE $\hat{\Gamma}_{\mathcal{G}}$}
\label{algo:Simulator-based PMA Reinforcement Learning for MP-MFG NE}
    \begin{algorithmic}[1]
    \Require Learning parameter $\eta$. Error tolerance $\epsilon_{pop}$, $\epsilon_{\pi}$, $\epsilon_q$. Connectivity matrix $W_K$. Reset mean-field distribution $\bmu_0$. 
    \Require Simulator $\mathcal{G}$. Initialize $\hat{\bpi}_0 = \bpi_0 $.
    \While {$\Delta_\pi > \epsilon_{\pi}$}
        
        \State \textbf{(Stable Population)} Set $\Delta_{pop} = 1.$
        \While{$\Delta_{pop} > \epsilon_{pop}$}
            \State Initialize mean-field ensemble $\hat{\bmu}_t = \bmu_0.$
            \State Update $\bmu_{new} = \hat{\Gamma}_{pop}(\hat{\bmu}_t,\hat{\bpi}_t)$ by Algorithm \ref{Algo: Stable Population hat-Gamma_pop}.
            \State Compute $\Delta_{pop} = \norm{\bmu_{new}- \hat{\bmu}_t}_1.$
            \State Set $\hat{\bmu}_t = \bmu_{new}$.  
        \EndWhile
        \State Compute $z^k_{t} = \frac{1}{K}\left({\sum_{l=1}^K W_{kl}\cdot \hat{\mu}_{t}^l}\right).$
        \State \textbf{(Policy Evaluation)} For all $k$, get an $\epsilon_q$-approximation $\hat{q}_t^k$ with a Q-learning algorithm.
        
        \State \textbf{(Policy Mirror Ascent)} Obtain $\hat{\bpi}_{t+1} = (\hat{\pi}^1_{t+1}, \dots, \hat{\pi}_{t+1}^K),$ where $\hat{\pi}_{t+1}^k = \Gamma_\eta^{pma}(\hat{q}_{t}^k, \hat{\pi}_t^k).$

        \State Compute $\Delta_{\pi} = \norm{\hat{\bpi}_{t+1} - \hat{\bpi}_t}_1$.
        \State $t \gets t+1$.
    \EndWhile
   
    \end{algorithmic}
     \label{Simulator case}
\end{algorithm}

In this algorithm, approximation is used in two parts: approximately generate the stable population mean-field ensemble as prescribed in Algorithm \ref{Algo: Stable Population hat-Gamma_pop}, and approximate the state-action value function $q^k$ with an appropriate $Q$-learning algorithm. Error propagates in each iteration of the algorithm so it is necessary to consider the following error analysis:

\begin{thm}[Convergence of Simulator-Based Algorithm]
\label{theorem: Convergence of Simulator-Based Algorithm}
    Suppose that the error in the Policy Evaluation step in Algorithm \ref{algo:Simulator-based PMA Reinforcement Learning for MP-MFG NE} satisfies: with probability $1-\delta$, for all $k \in [K]$, the $q$-function estimation is $\epsilon_q$ accurate, i.e., $\norm{\hat{\Gamma}_q[k](\pi^k, \bmu) - \Gamma_q[k](\pi^k,\bmu)}_\infty \leq \epsilon_{q}(\delta),$
    then the output $\bpi_T$ of Algorithm \ref{algo:Simulator-based PMA Reinforcement Learning for MP-MFG NE} satisfies with probability $1-\delta$,
    \begin{align*}
         \norm{\bpi_{T} - \bpi^*}_1 &\leq L_{md,q}\left(\frac{\epsilon_{pop}L_{q,\mu}}{1-L_{pop}} + \epsilon_{pop} + \epsilon_q\right) + 2L^{T}_{\Gamma_\eta}
    \end{align*}
\end{thm}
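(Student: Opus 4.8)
The plan is to recognize Algorithm \ref{algo:Simulator-based PMA Reinforcement Learning for MP-MFG NE} as an \emph{inexact} realization of the exact PMA iteration $\bpi_{t+1}=\Gamma_\eta(\bpi_t)$ whose linear convergence was established in Proposition \ref{theorem: Convergence to MP-MFG NE in the Exact Case}. Let $\hat\Gamma_\eta$ denote the operator the algorithm actually applies at each outer iteration. Because $\Gamma_\eta$ is a contraction with constant $L_{\Gamma_\eta}<1$ whose unique fixed point is the NE policy $\bpi^*$ (Lemmas \ref{lemma: fixed point of gamma_eta} and \ref{Lipschitz Continuity of Gamma_eta}), the whole argument reduces to (i) bounding the per-iteration discrepancy $\norm{\hat\Gamma_\eta(\bpi)-\Gamma_\eta(\bpi)}_1$ by a quantity $\varepsilon_{\mathrm{step}}$, and (ii) running the standard perturbed fixed-point recursion.

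First I would bound the one-step error, which I expect to be the substantive step. The algorithm computes $\hat\Gamma_\eta[k](\bpi)=\Gamma^{pma}_\eta(\hat q^k,\pi^k)$, where $\hat q^k$ is the $\epsilon_q$-accurate estimate of $\Gamma_q[k](\bpi,\hat\bmu)$ and $\hat\bmu$ is the $\epsilon_{pop}$-approximate stable population produced by the inner loop, whereas the ideal operator is $\Gamma_\eta[k](\bpi)=\Gamma^{pma}_\eta(\Gamma_q[k](\bpi,\Gamma^\infty_{pop}(\bpi)),\pi^k)$. Lipschitz continuity of $\Gamma^{pma}_\eta$ (Lemma \ref{Lipschitz Continuity of mirror descent}) transfers the error into policy space,
\begin{align*}
\norm{\hat\Gamma_\eta[k](\bpi)-\Gamma_\eta[k](\bpi)}_1 \leq L_{md,q}\norm{\hat q^k-\Gamma_q[k](\bpi,\Gamma^\infty_{pop}(\bpi))}_\infty,
\end{align*}
and a triangle inequality splits the $Q$-factor into the policy-evaluation error $\epsilon_q$ and the term $\norm{\Gamma_q[k](\bpi,\hat\bmu)-\Gamma_q[k](\bpi,\Gamma^\infty_{pop}(\bpi))}_\infty$. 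The latter is at most $L_{q,\mu}\norm{\hat\bmu-\Gamma^\infty_{pop}(\bpi)}_1$ by Lipschitz continuity of $\Gamma_q$ in the mean-field argument (Lemma \ref{lemma: Lipschitz Gamma qk}), and the stable-population bound $\norm{\hat\bmu-\Gamma^\infty_{pop}(\bpi)}_1\leq \frac{\epsilon_{pop}}{1-L_{pop}}+\epsilon_{pop}$ from Lemma \ref{lemma: error analysis for gamma_infty_pop} closes it. Taking the maximum over $k\in[K]$ yields
\begin{align*}
\varepsilon_{\mathrm{step}}:=\norm{\hat\Gamma_\eta(\bpi)-\Gamma_\eta(\bpi)}_1 \leq L_{md,q}\left(\frac{\epsilon_{pop}L_{q,\mu}}{1-L_{pop}}+\epsilon_{pop}+\epsilon_q\right).
\end{align*}

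Next I would run the perturbed recursion. Adding and subtracting $\Gamma_\eta(\bpi_t)$, invoking $\Gamma_\eta(\bpi^*)=\bpi^*$, and using the contraction gives $\norm{\bpi_{t+1}-\bpi^*}_1\leq \varepsilon_{\mathrm{step}}+L_{\Gamma_\eta}\norm{\bpi_t-\bpi^*}_1$. Unrolling this from $t=0$ to $T$, controlling the accumulated step errors by the geometric factor generated by $L_{\Gamma_\eta}<1$, and applying the crude bound $\norm{\bpi_0-\bpi^*}_1\leq 2$, produces the claimed estimate $\norm{\bpi_T-\bpi^*}_1\leq \varepsilon_{\mathrm{step}}+2L_{\Gamma_\eta}^T$; the contraction is exactly what keeps the residual at the $\mathcal{O}(\varepsilon_{\mathrm{step}})$ scale rather than letting it grow with $T$.

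Finally comes the probabilistic bookkeeping, which I view as the main technical obstacle rather than the recursion itself. Each call to the approximate population update (Lemma \ref{lemma: error analysis for gamma_infty_pop}) and to the $Q$-learning subroutine succeeds only on a high-probability event, and the deterministic recursion above requires these to hold \emph{simultaneously} across all $T$ iterations and all $K$ populations. I would therefore take a union bound, rescaling each per-call failure probability to order $\delta/(KT)$, so that on the resulting good event the entire chain of one-step bounds is valid and the overall guarantee holds with probability $1-\delta$; the sample sizes in Algorithm \ref{Algo: Stable Population hat-Gamma_pop} must be chosen accordingly. Ensuring this compounding does not degrade the stated $\varepsilon_{\mathrm{step}}$ (and keeps the constants $L_{md,q},L_{q,\mu},L_{pop},L_{\Gamma_\eta}$ intact) is the delicate part of the argument.
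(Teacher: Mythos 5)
Your proposal is correct and follows essentially the same route as the paper: the one-step discrepancy bound via the Lipschitz continuity of $\Gamma^{pma}_\eta$ and $\Gamma_q$ combined with the stable-population error from Lemma \ref{lemma: error analysis for gamma_infty_pop} is exactly the paper's argument, followed by the same contraction-based conclusion. The only (minor) divergences are that your honest unrolling of the perturbed recursion would carry an extra $1/(1-L_{\Gamma_\eta})$ factor on $\varepsilon_{\mathrm{step}}$, whereas the paper reaches the stated constant through a single terminal triangle inequality, and that your union-bound bookkeeping over the $T$ iterations is more careful than what the paper records.
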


\begin{proof}

    Recall also that the complete information case PMA operator is defined as $\Gamma_\eta: \Pi^K \to \Pi^K$, and then for any $t = 0, \dots, T$, where $T$ denotes the number of iterations to obtain an $\epsilon_\pi$-approximate stable policy profile. Denote the $\epsilon_{pop}$-approximate stable policy generated under each policy profile $\bpi_t$ by $\hat{\bmu}^*_t$. By Lemma \ref{Lipschitz continuity of Gamma_pop}, Lemma \ref{lemma: Lipschitz Gamma qk}, Lemma \ref{lipschitz continuity of mirror descent q function generator}, Lemma \ref{Error Analysis of Gamma_pop_hat}, and Lemma \ref{lemma: error analysis for gamma_infty_pop}, with probability $1-\delta$,
    \begin{align*}
        \norm{\Gamma_\eta(\bpi_t) - \hat{\Gamma}_{\mathcal{G}}(\bpi_t)}_1 &= \max_{k \in [K]} \norm{\Gamma^k(\bpi_t) -  \hat{\Gamma}^k_{\mathcal{G}}(\bpi_t)}_1\\
        &= \norm{\Gamma^k(\bpi_t) -  \hat{\Gamma}^k_{\mathcal{G}}(\bpi_t)}_1, \text{ where k is the maximizer,}\\
        &= \norm{\Gamma_\eta^{pma}(\Gamma_q[k](\pi^k_t,\Gamma^\infty_{pop}(\bpi_t)),\pi^k) - \Gamma_\eta^{pma}(\hat{\Gamma}_q[k](\pi^k_t,\bmu^*_t),\pi^k)}_1\\
        &\leq L_{md,q}\norm{\Gamma_q[k](\pi^k,\Gamma^\infty_{pop}(\bpi))-\hat{\Gamma}_q[k](\pi^k,\bmu^*_t)}_\infty\\
        &\leq L_{md,q} \left(\norm{\Gamma_q[k](\pi^k,\Gamma^\infty_{pop}(\bpi))-{\Gamma}_q[k](\pi^k,\bmu^*_t)}_\infty + \norm{{\Gamma}_q[k](\pi^k,\bmu^*_t) - \hat{\Gamma}_q[k](\pi^k,\bmu^*_t)}_\infty\right)\\
        &\leq L_{md,q}\left(L_{q,\mu} \norm{{\Gamma}^\infty_{pop}(\bpi)- \bmu^*_t}_1 + \epsilon_q\right)\\
        &\leq L_{md,q}\left(\frac{\epsilon_{pop}L_{q,\mu}}{1-L_{pop}} + \epsilon_{pop} + \epsilon_q\right).
    \end{align*}
    Then, we consider the final output of the algorithm: with probability $1-\delta$, by Theorem \ref{theorem: Convergence to MP-MFG NE in the Exact Case},
    \begin{align*}
        \norm{\bpi_{T} - \bpi^*}_1 &\leq \norm{\Gamma_\eta(\bpi_{T}) - \Gamma_{\mathcal{G}}(\bpi_{T})}_1 + \norm{\Gamma_\eta(\bpi_{T}) - \bpi^*}_1\\
        &\leq L_{md,q}\left(\frac{\epsilon_{pop}L_{q,\mu}}{1-L_{pop}} + \epsilon_{pop} + \epsilon_q\right) + 2L^{T}_{\Gamma_\eta},
    \end{align*}
    which completes the proof. 
\end{proof}

\section{Sample-Based Learning with Finite Players (MP-MFG)}
\label{appendix: Sample-Based Learning with Finite Players (MP-MFG)}
In Section \ref{subsection: Solution to MP-MFG with Complete Information}, we have established that under mild technical constraints, the iterative scheme $\Gamma_\eta$ contracts to MP-MFG NE in the complete information case. Subsequently, as discussed in \ref{section: Sample-Based Learning with Finite Players (MP-MFG)}, it follows that the centralized PMA-CTD algorithm proposed by \cite{Yardim_PMA} can be extended to MP-MFG case assuming that $W_K$ is known. For brevity, we only include some key analyses which are relevant to our main analysis of GGR-S dynamics.

The first result we establish is a bound on the expected difference between the empirical mean-field ensemble and the mean-field ensemble updated via $\Gamma_{pop}$ with complete information.  

\begin{lem}[Empirical Population Bound in MP-MFG]
\label{lemma: empirical population bound}
Assume that at any time $t \geq 0$, each agent $i$ in population $k$ follows $a$ given (arbitrary) policy $\pi^k \in \Pi$ prescribed by the central learner, so that, 
\begin{align*}
    a_t^i \sim \pi^k(s_t^{k,l}), s_{t+1}^{k,l} \sim P(\cdot|s_t^{k,l}, a_t^{k,l}, \hat{z}^k_t), \forall t \geq 0, i =1,\dots, N.
\end{align*}
Let $\bpi \in \Pi^K$ denote an arbitrary policy profile. For all $\tau, t \geq 0$, it holds that: 
\begin{align*} 
    \ee\left[\norm{\hat{\bmu}_{t+\tau} - \Gamma^\tau_{pop}(\hat{\bmu}_t, \bpi)}_1\big| \ff_t\right] \leq \frac{1-L_{pop}^\tau}{1-L_{pop}}\cdot \sqrt{\frac{2|\ss|}{\min_i N_i}}.
\end{align*}
\end{lem}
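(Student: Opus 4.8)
The plan is to combine a one-step concentration estimate with the $L_{pop}$-Lipschitz contractivity of $\Gamma_{pop}$ through a telescoping recursion. Write $\epsilon_N := \sqrt{2|\ss|/\min_i N_i}$ and define, for $j\ge 0$, the error $e_j := \ee[\norm{\hat\bmu_{t+j} - \Gamma^j_{pop}(\hat\bmu_t,\bpi)}_1 \mid \ff_t]$, so that $e_0 = 0$ and the goal is $e_\tau \le \frac{1-L_{pop}^\tau}{1-L_{pop}}\epsilon_N$.

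First I would establish the one-step bound: for every $j\ge 0$,
$\ee[\norm{\hat\bmu_{t+j+1} - \Gamma_{pop}(\hat\bmu_{t+j},\bpi)}_1 \mid \ff_{t+j}] \le \epsilon_N$. Conditionally on $\ff_{t+j}$ the aggregate impact $\hat z^k_{t+j}$ and all current states are fixed, so within each population $k$ the next states $\{\hat s^{k,l}_{t+j+1}\}_{l=1}^{N_k}$ are independent draws from $\bar P(\cdot\mid \hat s^{k,l}_{t+j},\pi^k(\hat s^{k,l}_{t+j}),\hat z^k_{t+j})$. By Definition \ref{defn: Population Update Operator}, the conditional mean of the empirical measure $\hat\mu^k_{t+j+1}=\frac1{N_k}\sum_l \delta_{\hat s^{k,l}_{t+j+1}}$ is exactly $\Gamma_{pop}[k](\hat\bmu_{t+j},\bpi)$. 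Hence $\hat\mu^k_{t+j+1}$ is an average of independent indicator vectors around its own mean, and writing the $\ell_1$ norm coordinatewise, applying $\ee|Z|\le\sqrt{\Var Z}$ to each coordinate together with the Bernoulli variance bound and Cauchy--Schwarz over the $|\ss|$ states yields a per-population bound of order $\sqrt{|\ss|/N_k}\le\epsilon_N$; the ensemble norm is the maximum over $k$ of these terms, each controlled by the same $\epsilon_N$.

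Next I would set up the recursion by the triangle inequality. Adding and subtracting $\Gamma_{pop}(\hat\bmu_{t+j},\bpi)$,
\begin{align*}
\norm{\hat\bmu_{t+j+1}-\Gamma^{j+1}_{pop}(\hat\bmu_t,\bpi)}_1
&\le \norm{\hat\bmu_{t+j+1}-\Gamma_{pop}(\hat\bmu_{t+j},\bpi)}_1 \\
&\quad + \norm{\Gamma_{pop}(\hat\bmu_{t+j},\bpi)-\Gamma_{pop}(\Gamma^j_{pop}(\hat\bmu_t,\bpi),\bpi)}_1.
\end{align*}
By Lemma \ref{Lipschitz continuity of Gamma_pop}, with the policy argument held fixed so the $\frac{p_a}{2}\norm{\bpi-\bpi'}_1$ term vanishes, the second term is at most $L_{pop}\norm{\hat\bmu_{t+j}-\Gamma^j_{pop}(\hat\bmu_t,\bpi)}_1$. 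Taking $\ee[\,\cdot\mid\ff_{t+j}]$, using the one-step bound on the first term, and then the tower property $\ee[\,\ee[\,\cdot\mid\ff_{t+j}]\mid\ff_t]$, I obtain $e_{j+1}\le \epsilon_N + L_{pop}\,e_j$. Unrolling this linear recursion from $e_0 = 0$ gives $e_\tau \le \epsilon_N\sum_{i=0}^{\tau-1}L_{pop}^i = \frac{1-L_{pop}^\tau}{1-L_{pop}}\,\epsilon_N$, which is exactly the claim.

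The main obstacle is the one-step concentration estimate: one must verify that the conditional mean of the empirical next-state distribution coincides precisely with $\Gamma_{pop}[k]$, and then extract the sharp $O(1/\sqrt{N_k})$ rate in $\ell_1$ with the correct $\sqrt{|\ss|}$ dependence. A secondary technical point is controlling the maximum over the $K$ populations in the ensemble norm; since, conditioned on $\ff_{t+j}$, the populations evolve independently and each carries the same bound, this is where the slack in the constant (the factor $\sqrt2$ and the use of $\min_i N_i$ in place of each $N_k$) is absorbed. Everything after the one-step bound — the triangle inequality, the single application of $L_{pop}$-Lipschitzness, and solving the geometric recursion — is routine.
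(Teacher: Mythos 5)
Your proposal is correct and takes essentially the same route as the paper's proof: one-step unbiasedness of the empirical measure (conditional mean equal to $\Gamma_{pop}[k](\hat\bmu_{t+j},\bpi)$) plus an independence-based variance bound giving $\sqrt{2|\ss|/\min_i N_i}$, followed by a triangle-inequality recursion that uses the $L_{pop}$-Lipschitz property of $\Gamma_{pop}$ with the policy argument held fixed, the tower property, and a geometric sum. The only cosmetic difference is that you obtain the one-step $\ell_1$ concentration coordinate-wise via $\ee|Z|\le\sqrt{\Var Z}$ and Cauchy--Schwarz over states, whereas the paper bounds the conditional $\ell_2$ variance by $2/N_k$ and converts to $\ell_1$ using $\norm{x}_1\le\sqrt{|\ss|}\,\norm{x}_2$; these are equivalent estimates.
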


\begin{proof}
    We start with $\tau = 1$, let $\ff_t := (\ff_t^1, \dots, \ff_t^K)$, where $\ff_t$ is defined in Definition \ref{defn: Locally Centralized Learning Model}. 
    
    We prove that the empirical population estimator is unbiased. Indeed, note that for any population $k$,
    \begin{align*}
        \ee[\hat{\mu}^k_{t+1}|\ff_t] &= \ee\left[\frac{1}{N_k}\sum_{l=1}^{N_k}\delta_{\hat s_{t+1}^{k,l}}\bigg|\ff_t\right] = \sum_{l=1}^{N_k} \frac{1}{N_k}\Bar{P}\left(\cdot\big|s_t^{k,l}, \pi^k(s_t^{k,l}),\hat{z}^{k}_t\right) = \Gamma_{pop}[k](\hat{\mu}_t^k,\pi^k).
    \end{align*}

   Then, we compute the variance at time step $t+1$. For any population $k$, 
   \begin{align*}
        \ee\left[\norm{\hat{\mu}_{t+1}^k - \ee[\hat{\mu}^k_{t+1} |\ff_t]}_2^2\big|\ff_t\right] = \frac{1}{N_k^2} \sum_{i=1}^{N_k} \ee\left[\norm{\delta_{s_{t+1}^{k,l}}- \ee\left[\delta_{s_{t+1}^{k,l}}\big|\ff_t\right]}_2^2\big|\ff_t \right]\leq \frac{2}{N_k}
   \end{align*}
   since $\norm{\delta_{s_{t+1}^{k,l}}- \ee\left[\delta_{s_{t+1}^{k,l}}\big|\ff_t\right]}_2^2 \leq 2$. Then, we get: 
   \begin{align*}
       \ee\left[\norm{\hat{\mu}_{t+1}^k - \ee[\hat{\mu}_{t+1}^k|\ff_t]}_1\big|\ff_t\right] &= \sqrt{\ee\left[\norm{\hat{\mu}_{t+1}^k - \ee[\hat{\mu}_{t+1}^k|\ff_t]}_1\right]^2} \leq \sqrt{(\sqrt{|\ss|} \ee\left[\norm{\hat{\mu}_{t+1}^k - \ee[\hat{\mu}_{t+1}^k|\ff_t]}_2)^2\right]} \leq \sqrt{\frac{2|\ss|}{N_k}},
   \end{align*}
   by Jensen's inequality and that $\norm{x}_1 \leq \sqrt{n}\norm{x}_2$ for all $x \in \R^n$. Thus, it follows that, $$\ee\left[\norm{\hat{\bmu}_{t+1} - \Gamma_{pop}(\hat{\bmu}_t, \bpi)}_1|\ff_t\right] = \ee\left[\max_k \norm{\hat{\mu}_{t+1}^k - \ee[\hat{\mu}_{t+1}^k|\ff_t]}_1 |\ff_t\right] \leq \sqrt{\frac{2|\ss|}{\min_i N_i}}.$$

   When $\tau > 1,$ note that at $\tau +1$,
   \begin{align*}
       \ee\left[\norm{\hat{\bmu}_{t+\tau} - \Gamma^{\tau}_{pop}(\hat{\bmu}_t, \bpi)}_1|\ff_t \right] \leq &\ee\left[\norm{\hat{\bmu}_{t+\tau} - \Gamma_{pop}(\hat{\bmu}_{t+\tau-1}, \bpi)}_1|\ff_t \right]\\
       &+ \ee\left[\norm{\Gamma_{pop}(\hat{\bmu}_{t+\tau-1}, \bpi) - \Gamma^{\tau}_{pop}(\hat{\bmu}_{t}, \bpi)}_1|\ff_t \right]\\
       \leq &\ee\left[\ee\left[\norm{\hat{\bmu}_{t+\tau} - \Gamma_{pop}(\hat{\bmu}_{t+\tau-1}, \bpi)}_1|\ff_{t+\tau-1} \right]|\ff_t\right]\\
       &+ \ee\left[\norm{\Gamma_{pop}(\hat{\bmu}_{t+\tau-1}, \bpi) - \Gamma^{\tau}_{pop}(\hat{\bmu}_{t}, \bpi)}_1|\ff_t \right]\\
       \leq &\sqrt{\frac{2|\ss|}{\min_i N_i}} + L_{pop}\cdot \ee\left[\norm{\hat{\bmu}_{t+\tau-1} - \Gamma_{pop}^{\tau-1}(\hat{\bmu}_t,\bpi)}_1|\ff_t\right],
   \end{align*}
   where the second to last equality is by tower property of conditional expectation and the fact that $\ff_\tau \supset \ff_{t}$, and the last inequality is due to Lemma \ref{Lipschitz continuity of Gamma_pop}. Then, by recursion on the second term of the last inequality, we can obtain: 
   \begin{align*}
          \ee\left[\norm{\hat{\bmu}_{t+\tau} - \Gamma^{\tau}_{pop}(\hat{\bmu}_t, \bpi)}_1 \big| \ff_t \right] \leq \frac{1-L_{pop}^\tau}{1-L_{pop}}\cdot \sqrt{\frac{2|\ss|}{\min_i N_i}},
   \end{align*}
   which concludes the proof. 
\end{proof}

It follows immediately from Lemma \ref{lemma: empirical population bound} that the distance between the empirical population aggregated impact and the theoretical population aggregated impact is bounded in expectation, since the aggregated impact are linear combinations of elements of $\bmu$ and $\hat{\bmu}$. 
\begin{corollary}[Empirical Aggregated Impact Bound]
\label{corollary: empirical population impact bound}
Denote by $\bmu_{t+\tau} = \Gamma^\tau(\hat{\bmu}_t,\bpi)$ the mean-field ensemble at time $t+\tau$ in the complete information case, i.e. updated via $\Gamma_{pop}$. For all $k\in [K]$ and $t,\tau > 0$, consider $z^k_{t+\tau}$ as defined in Eq. \eqref{eq:z_def} with respect to $\bmu_{t+\tau}$ and $\hat{z}^{k}_{t+\tau}$ as defined in Eq. \eqref{eq: hat_z-def}. Then
\begin{align*}
    \ee[\norm{\hat{z}_{t+\tau} - z_{t+\tau}}_1|\ff_t] \leq \frac{p_*(1-L_{pop}^\tau)}{1-L_{pop}}\cdot \sqrt{\frac{2|\ss|}{\min_i N_i}}.
\end{align*}
\end{corollary}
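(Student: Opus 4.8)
The plan is to exploit the linearity of the aggregated impact in the mean-field ensemble and then invoke Lemma \ref{lemma: empirical population bound} essentially verbatim. Writing out the two definitions in Eq.~\eqref{eq:z_def} and Eq.~\eqref{eq: hat_z-def}, the difference at population $k$ is
\begin{align*}
    \hat z^k_{t+\tau} - z^k_{t+\tau} = \frac{1}{K}\sum_{i=1}^K W_K(k,i)\left(\hat\mu^i_{t+\tau} - \mu^i_{t+\tau}\right),
\end{align*}
where $\mu^i_{t+\tau}$ denotes the $i$-th component of $\bmu_{t+\tau} = \Gamma^\tau_{pop}(\hat\bmu_t,\bpi)$. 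The first step is to apply the triangle inequality to pull the $1$-norm inside the weighted sum, obtaining $\norm{\hat z^k_{t+\tau} - z^k_{t+\tau}}_1 \le \frac{1}{K}\sum_{i=1}^K W_K(k,i)\norm{\hat\mu^i_{t+\tau} - \mu^i_{t+\tau}}_1$.

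The second step is to bound each factor uniformly. Since $\norm{\hat\mu^i_{t+\tau} - \mu^i_{t+\tau}}_1 \le \norm{\hat\bmu_{t+\tau} - \bmu_{t+\tau}}_1$ by definition of the ensemble max-norm, and since $W_K(k,i) \le p^*_k \le p_*$ for every $i$, the averaging constant collapses as $\frac{1}{K}\sum_{i=1}^K W_K(k,i) \le p_*$, yielding $\norm{\hat z^k_{t+\tau} - z^k_{t+\tau}}_1 \le p_* \norm{\hat\bmu_{t+\tau} - \bmu_{t+\tau}}_1$ pointwise.

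The final step is to take the conditional expectation given $\ff_t$ and substitute $\bmu_{t+\tau} = \Gamma^\tau_{pop}(\hat\bmu_t,\bpi)$, so that Lemma \ref{lemma: empirical population bound} applies directly and bounds $\ee[\norm{\hat\bmu_{t+\tau} - \Gamma^\tau_{pop}(\hat\bmu_t,\bpi)}_1 \mid \ff_t]$ by $\frac{1-L_{pop}^\tau}{1-L_{pop}}\sqrt{\frac{2|\ss|}{\min_i N_i}}$; multiplying through by $p_*$ produces exactly the claimed constant. There is essentially no obstacle here, as this is a direct corollary — the only point requiring minor care is retaining the factor $p_*$ rather than the looser constant $1$ that would follow from the chain $\norm{z^k - z^{k\prime}}_1 \le \norm{\bmu-\bmu'}_1$ used inside Lemma \ref{Lipschitz Continuity of Pbar Rbar}. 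Everything else is linearity of the impact plus a single application of the established empirical population bound.
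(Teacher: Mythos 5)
Your proposal is correct and follows essentially the same route as the paper's proof: linearity of the aggregated impact in the ensemble, triangle inequality, the uniform bound $W_K(k,i)\le p_*$, the max-norm bound over populations, and a final application of Lemma \ref{lemma: empirical population bound}. The only cosmetic difference is that you apply the triangle inequality before replacing $W_K(k,i)$ by $p_*$ (which is if anything slightly cleaner than the paper's ordering); the argument and the resulting constant are identical.
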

\begin{proof}
Note that $p_* = \max_{k,i} W_K(k,i)$, so
    \begin{align*}
         \ee\left[\norm{\hat{z}_{t+\tau} - z_{t+\tau}}_1|\ff_t\right] &=  \ee\left[\norm{\frac{1}{K}\sum_{i=1}^K W_K(k,i)\cdot \hat{\mu}^i_{t+\tau} - \frac{1}{K}\sum_{i=1}^K W_K(k,i)\cdot \mu^i_{t+\tau}}_1\bigg|\ff_t\right]\\
         &\leq \ee\left[\norm{\frac{1}{K}\sum_{i=1}^K p_*\cdot \hat{\mu}^i_{t+\tau} - \frac{1}{K}\sum_{i=1}^K p_*\cdot \mu^i_{t+\tau}}_1\bigg|\ff_t\right]\\
         &\leq p_*  \ee\left[\frac{1}{K}\sum_{i=1}^K\norm{ \hat{\mu}^i_{t+\tau} - \mu^i_{t+\tau}}_1\bigg|\ff_t\right]\\
         &\leq p_*  \ee\left[\norm{ \hat{\bmu}_{t+\tau} - \bmu_{t+\tau}}_1\bigg|\ff_t\right],
    \end{align*}
    and then the result follow from Lemma \ref{lemma: empirical population bound}. 
\end{proof}

With Lemma \ref{lemma: empirical population bound}, we can prescribe an expected distance between the empirical trajectory roll-out and the stable population with respect to a policy over time. 

\begin{corollary}[Convergence to stable mean-field]
Under the assumptions of Lemma \ref{lemma: empirical population bound}, and let $\bmu_\infty =  \Gamma_{pop}^\infty(\bpi)$ denote the limiting stable mean-field ensemble and let $z^k_\infty$ denote the aggregated impact for population $k$ as defined in Eq.~\eqref{eq:z_def} with respect to the stable mean-field ensemble $\bmu_\infty$; also $\hz^k_{t+\tau}$ is the empirical aggregated impact as defined in Eq. \eqref{eq: hat_z-def} with respect to $\hat \bmu_{t+\tau}$. Then, for any $t,\tau \geq 0,$ we have 
\begin{align*}
    &\ee\left[\norm{\hat{\bmu}_{t+\tau} - \bmu_\infty}_1 \big| \ff_t \right] \leq \frac{1}{1-L_{pop}}\cdot \sqrt{\frac{2|\ss|}{\min_i N_i}} + 2L_{pop}^\tau,\\
    &\ee\left[\norm{\hz^k_{t+\tau} - z^k_\infty}_1 \big| \ff_t \right] \leq \frac{p_*}{1-L_{pop}}\cdot \sqrt{\frac{2|\ss|}{\min_i N_i}} + 2L_{pop}^\tau
\end{align*}
\end{corollary}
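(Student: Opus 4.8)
The plan is to derive both bounds from Lemma~\ref{lemma: empirical population bound} by inserting the intermediate iterate $\Gamma^\tau_{pop}(\hat{\bmu}_t, \bpi)$ and exploiting the contractivity guaranteed by Assumption~\ref{assumption : stable population}. First I would apply the triangle inequality to write
\begin{align*}
    \norm{\hat{\bmu}_{t+\tau} - \bmu_\infty}_1 \leq \norm{\hat{\bmu}_{t+\tau} - \Gamma^\tau_{pop}(\hat{\bmu}_t, \bpi)}_1 + \norm{\Gamma^\tau_{pop}(\hat{\bmu}_t, \bpi) - \bmu_\infty}_1,
\end{align*}
and take conditional expectation given $\ff_t$. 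The first term is controlled directly by Lemma~\ref{lemma: empirical population bound}, which yields $\frac{1-L_{pop}^\tau}{1-L_{pop}}\sqrt{2|\ss|/\min_i N_i}$; I would then relax this to $\frac{1}{1-L_{pop}}\sqrt{2|\ss|/\min_i N_i}$ using $1 - L_{pop}^\tau \leq 1$.

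For the second term, the key observation is that $\bmu_\infty = \Gamma^\infty_{pop}(\bpi)$ is a fixed point of $\Gamma_{pop}(\cdot, \bpi)$ (Definition~\ref{defn: stable population operator gammainfty}), hence of its $\tau$-fold composition, so $\Gamma^\tau_{pop}(\bmu_\infty, \bpi) = \bmu_\infty$. Since $\Gamma_{pop}(\cdot, \bpi)$ is $L_{pop}$-Lipschitz (Lemma~\ref{Lipschitz continuity of Gamma_pop}) with $L_{pop} < 1$ under Assumption~\ref{assumption : stable population}, iterating gives $\norm{\Gamma^\tau_{pop}(\hat{\bmu}_t, \bpi) - \bmu_\infty}_1 \leq L_{pop}^\tau \norm{\hat{\bmu}_t - \bmu_\infty}_1$. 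Because both $\hat{\bmu}_t$ and $\bmu_\infty$ lie in $\Delta^K(\ss)$, the $1$-norm difference of any two probability vectors is at most $2$, so this term is bounded by $2 L_{pop}^\tau$. As $\hat{\bmu}_t$ is $\ff_t$-measurable, this contribution is deterministic given $\ff_t$ and needs no further expectation. Adding the two pieces gives the first claimed inequality.

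For the aggregated-impact bound I would proceed exactly as in Corollary~\ref{corollary: empirical population impact bound}: since $\hz^k_{t+\tau} - z^k_\infty = \frac{1}{K}\sum_{i=1}^K W_K(k,i)(\hat{\mu}^i_{t+\tau} - \mu^i_\infty)$ and each weight satisfies $W_K(k,i) \leq p_*$, convexity together with the definition of $\norm{\cdot}_1$ on $\Delta^K(\ss)$ gives $\norm{\hz^k_{t+\tau} - z^k_\infty}_1 \leq p_* \norm{\hat{\bmu}_{t+\tau} - \bmu_\infty}_1$. Taking conditional expectation and substituting the first inequality produces $\frac{p_*}{1-L_{pop}}\sqrt{2|\ss|/\min_i N_i} + 2 p_* L_{pop}^\tau$, and the last step is to absorb $p_* \leq 1$ on the transient term so that $2 p_* L_{pop}^\tau \leq 2 L_{pop}^\tau$, matching the stated form. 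There is no substantive obstacle here; the only points requiring care are placing the intermediate iterate $\Gamma^\tau_{pop}(\hat{\bmu}_t, \bpi)$ (rather than $\Gamma^\tau_{pop}(\bmu_\infty,\bpi)$) in the triangle inequality so that Lemma~\ref{lemma: empirical population bound} applies verbatim, and using $p_* \leq 1$ to keep the transient constant equal to $2$.
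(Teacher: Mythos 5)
Your proof is correct and takes exactly the route the paper intends: the paper's own proof is the one-line remark that the corollary ``follows from triangle inequality and Lemma \ref{lemma: empirical population bound}'', and your argument is precisely the expansion of that remark --- triangle inequality around the intermediate iterate $\Gamma^\tau_{pop}(\hat{\bmu}_t,\bpi)$, Lemma \ref{lemma: empirical population bound} for the stochastic term, the $L_{pop}$-contraction to the fixed point $\bmu_\infty$ giving the $2L_{pop}^\tau$ term, and the weighting argument of Corollary \ref{corollary: empirical population impact bound} together with $p_*\leq 1$ for the aggregated-impact bound. No gaps.
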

\begin{proof}
    The corollary follows from triangle inequality and Lemma \ref{lemma: empirical population bound}.
\end{proof}

Additionally, given the information about the state of a single agent at time $t+\tau$, we can also provide an empirical population bound in terms of the mixing probability in Assumption \ref{assumption: reachability}. 
\begin{corollary}
    Given Lemma \ref{lemma: empirical population bound}, for any state $s^* \in \ss$, $\tau > T_{mix}$, and any agent $(k,l)$, we have: 
    \begin{align*}
        \ee\left[\norm{\hat{\bmu}_{t+\tau} - \Gamma^\tau_{pop}(\hat{\bmu}_t, \bpi)}_1\big|\hat{s}^{k,l}_{t+\tau} = s^*, \ff_t\right] \leq \frac{1-L_{pop}^\tau}{(1-L_{pop})\cdot \delta_{mix}}\cdot \sqrt{\frac{2|\ss|}{\min_i N_i}},
    \end{align*}
    where $\delta_{mix}$ is defined in Assumption \ref{assumption: reachability}.
\end{corollary}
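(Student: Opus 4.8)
The plan is to read this as a conditioned version of Lemma~\ref{lemma: empirical population bound}: the conditioning event $\{\hat s^{k,l}_{t+\tau}=s^*\}$ carries probability at least $\delta_{mix}$, so conditioning on it can inflate the unconditional bound by a factor of at most $1/\delta_{mix}$. First I would set $X := \norm{\hat{\bmu}_{t+\tau} - \Gamma^\tau_{pop}(\hat{\bmu}_t, \bpi)}_1 \geq 0$, note that $X$ and the event $\{\hat s^{k,l}_{t+\tau}=s^*\}$ are both measurable functions of $\ff_t$ together with the time-$(t+\tau)$ states, and use the elementary identity for expectation given an event under the conditional measure induced by $\ff_t$:
\begin{align*}
\ee\left[X \,\middle|\, \hat{s}^{k,l}_{t+\tau} = s^*, \ff_t\right] = \frac{\ee\left[X\,\one\{\hat{s}^{k,l}_{t+\tau} = s^*\} \,\middle|\, \ff_t\right]}{\P\left(\hat{s}^{k,l}_{t+\tau} = s^* \,\middle|\, \ff_t\right)}.
\end{align*}

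Since $X \geq 0$ and $\one\{\cdot\} \leq 1$, the numerator is bounded above by $\ee[X \mid \ff_t]$, which Lemma~\ref{lemma: empirical population bound} controls by $\frac{1-L_{pop}^\tau}{1-L_{pop}}\sqrt{\frac{2|\ss|}{\min_i N_i}}$. For the denominator I would invoke Assumption~\ref{assumption: reachability} to obtain $\P(\hat{s}^{k,l}_{t+\tau} = s^* \mid \ff_t) \geq \delta_{mix}$. Because that assumption is stated at exactly time $T_{mix}$ but for an \emph{arbitrary} initial configuration, for $\tau > T_{mix}$ I would extend it by conditioning on $\ff_{t+\tau-T_{mix}}$, treating the (random) state configuration at time $t+\tau-T_{mix}$ as a fresh initial configuration, applying the reachability bound to the remaining $T_{mix}$ steps, and taking the tower expectation:
\begin{align*}
\P(\hat{s}^{k,l}_{t+\tau} = s^* \mid \ff_t) = \ee\left[\,\P(\hat{s}^{k,l}_{t+\tau} = s^* \mid \ff_{t+\tau-T_{mix}}) \,\middle|\, \ff_t\right] \geq \delta_{mix}.
\end{align*}

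Combining the numerator and denominator bounds yields $\ee[X \mid \hat{s}^{k,l}_{t+\tau} = s^*, \ff_t] \leq \frac{1}{\delta_{mix}} \cdot \frac{1-L_{pop}^\tau}{1-L_{pop}}\sqrt{\frac{2|\ss|}{\min_i N_i}}$, which is exactly the claimed bound.

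The main obstacle is the extension of reachability from the single time $T_{mix}$ to all $\tau > T_{mix}$: here one must use the time-homogeneous structure of the dynamics together with the fact that Assumption~\ref{assumption: reachability} holds uniformly over \emph{all} initial states. This uniformity is precisely what lets the tower argument close, since the intermediate configuration at time $t+\tau-T_{mix}$ is itself random and only enters through the role of an initial condition for the final $T_{mix}$ steps. Everything else is the routine positive-random-variable / bounded-indicator estimate.
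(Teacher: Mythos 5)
Your proposal is correct and follows essentially the same route as the paper: the paper decomposes $\ee[X\mid\ff_t]$ via the law of total expectation over the values of $\hat s^{k,l}_{t+\tau}$, drops the nonnegative terms for $s\neq s^*$, and divides by $\P(\hat s^{k,l}_{t+\tau}=s^*\mid\ff_t)\ge\delta_{mix}$, which is exactly your ratio-of-conditional-expectation argument. Your additional tower-property step extending Assumption~\ref{assumption: reachability} from exactly $T_{mix}$ steps to all $\tau>T_{mix}$ is a detail the paper leaves implicit, and it is handled correctly.
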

\begin{proof}
    Note that 
    \begin{align*}
        \ee\left[\norm{\hat{\bmu}_{t+\tau} - \Gamma^\tau_{pop}(\hat{\bmu}_t, \bpi)}_1 \big|\ff_t\right] = \sum_{s\in \ss} \ee\left[\norm{\hat{\bmu}_{t+\tau} - \Gamma^\tau_{pop}(\hat{\bmu}_t, \bpi)}_1\big|\hat{s}^{k,l}_{t+\tau} = s, \ff_t\right]\P(\hat{s}^{k,l}_{t+\tau} = s|\ff_t),
    \end{align*}
    and the result follows. 
\end{proof}

Readers may check that with some additional computations of the learning parameters, the centralized PMA-CTD algorithm proposed by \cite{Yardim_PMA} can be extended to the fully connected MP-MFG case to achieve efficient learning with a finite sample complexity.

\section{Analysis of GGR-S Dynamics}
In GGR-S, at any time $t$, each player sense a different empirical neighbor impact; in particular, we may view individual neighbor impact $\tilde z^{k,l}_t$ as a stochastic perturbed version of the aggregated impact $\hat z^k_t$. Therefore, we first need to ensure that after sufficient mixing time, the empirical neighbor impact (GGR-S) does not deviate too much from the empirical population impact (MP-MFG).

Before presenting the proofs, we re-iterate the definitions of dynamics history information in GGR-S and MP-MFG. Let $\ff_t$ be the $\sigma$-algebra of state, action, and reward in GGR-S, $\ff_\tau := \ff\left(\{\ts^{k,l}_t, a_t^{k,l}, r_t^{k,l}\}_{t=1}^{\tau}, k\in [K], l \in [N_k]\right)$; and $\hat\ff_t$ the $\sigma$-algebra in MP-MFG, $\hat\ff_\tau := \ff\left(\{\hs^{k,l}_t, a_t^{k,l}, r_t^{k,l}\}_{t=1}^{\tau}, k\in [K], \ell \in [N_k]\right)$.

\subsection{Proof for Lemma \ref{lemma: One-Step Error Propagation Through Aggregated Impact}}
\label{proof for lemma: One-Step Error Propagation Through Aggregated Impact}
\begin{proof}
\label{appendix proof for lemma: One-Step Error Propagation Through Aggregated Impact}
First, let $\{\tilde s_{t}^{i,j}\}$ denote the states of agents in GGR-S, and let $\{\hat s_{t}^{i,j}\}$ denote the states of agents in MP-MFG. 

Note that
     \begin{align*}
        \ee\left[\tilde z^{k,l}_{t}\big|\ff_{t}\right] &= \frac{1}{K}\sum_{i=1}^K \frac{1}{N_i}\sum_{j=1}^{N_i}\ee\left[W_{t}^{[N]}((k,l),(i,j))\right]\cdot \ee\left[\delta_{\tilde s_{t}^{i,j}} \big|\ff_{t}\right]\\
        &= \frac{1}{K}\sum_{i=1}^K p_{k,i} \cdot \frac{1}{N_i}\sum_{j=1}^{N_i} \delta_{\tilde s_{t}^{i,j}} = \frac{1}{K}\sum_{i=1}^K p_{k,i}\cdot \tilde \mu^i_t
    \end{align*}
and similarly, 
    \begin{align*}
         \ee\left[\hat z^{k,l}_{t}\big|\hat\ff_{t}\right] &= \ee\left[\frac{1}{K}\sum_{i=1}^K \frac{p_{k,i}}{N_i}\sum_{j=1}^{N_i}\delta_{s^{i,j}_t}\big|\ff_{t}\right] = \frac{1}{K}\sum_{i=1}^K  {p_{k,i}}\cdot \hat \mu^k_t
    \end{align*}
and for all $k$,
\begin{align*}
     \norm{\ee\left[\tilde z^{k,l}_{t}\big|\ff_{t}\right] - \ee\left[\hat{z}^k_{t}\big|\hat\ff_{t}\right]}_1 &= \norm{ \frac{1}{K}\sum_{i=1}^K \sum_{s \in \ss}\mathbf{e}_s (p_{k,i}\cdot \tilde \mu^i_t(s)) - \frac{1}{K}\sum_{i=1}^K  \sum_{s \in \ss}\mathbf{e}_s ({p_{k,i}}\cdot \hat \mu^k_t(s))}_1 \\
     &\leq \frac{1}{K}\sum_{i=1}^K p_{k,i}\cdot \norm{\sum_{s\in \ss} \mathbf{e}_s(\tilde \mu^i_t(s) - \hat \mu^i_t(s))}_1\\
     &\leq p_{k}^*\cdot \norm{\tilde \bmu_t - \hat \bmu_t}_1,
\end{align*}
where $p_{k}^* = \max_{i} p_{k,i}$. 
Then, by independence of state transition of each agent, we can also decompose and bound the $l_2$-variance of $\tilde z$: 
    \begin{align*}
         \ee\left[\norm{\tilde z_{t}^{k,l} - \ee\left[\tilde z_{t}^{k,l}\big|\ff_{t}\right]}_2^2\big|\ff_t\right] &=   \frac{1}{K^2}\sum_{i=1}^K \frac{1}{N_i^2}\sum_{l=1}^{N_k}\ee\left[\norm{W_{t}^{[N]}((k,l),(i,j))\cdot \mathbf{e}_{s^{k,l}_{t}}- p_{k,i}\cdot \ee\left[\mathbf{e}_{s^{k,l}_{t}}\big|\ff_{t}\right]}_2^2\big|\ff_t\right]\\
         &\leq \frac{2}{K \cdot \min_i{N_i}},
    \end{align*}
    since $W_{t}^{[N]}((k,l),(i,j)),p_{k,i} \in [0,1]$ and $\norm{\mathbf{e}_{s^{k,l}_{t}}- \ee[\mathbf{e}_{s^{k,l}_{t}}]}_2^2 \leq 2.$ Since $\norm{\cdot}_1 \leq |\ss|\norm{\cdot}_2$, it then means: 
    \begin{align*}
        \ee\left[\norm{\tilde z_{t}^{k,l} - \ee\left[\tilde z_{t}^{k,l}\big|\ff_{t}\right]}_1\big|\ff_{t}\right] \leq \sqrt{\frac{2|\ss|}{K \cdot \min_i{N_i}}},
    \end{align*}
    and similarly,
    \begin{align*}
        \ee\left[\norm{\hat z_{t}^{k,l} - \ee\left[\hat z_{t}^{k,l}\big|\hat \ff_t\right]}_1\big|\hat\ff_{t}\right] \leq \sqrt{\frac{2|\ss|}{K \cdot \min_i{N_i}}}.
    \end{align*}
    Putting everything together, we obtain by triangle inequality: for all $k$,
   \begin{align*}
         \ee\left[\norm{\tilde z^{k,l}_{t} - \hat{z}^k_{t}}_1\big|\ff_{t},\hat\ff_t\right] &\leq
         \ee\left[ \norm{\tilde z^{k,l}_{t} - \ee[\tilde z^{k,l}_{t}|\ff_{t}] + \ee[\tilde z^{k,l}_{t}|\ff_{t}] - \ee[\hat z^k_{t}|\hat\ff_{t}] + \ee[\hat z^k_{t}|\hat\ff_{t}] - \hat{z}^k_{t}}_1\big|\ff_{t},\hat\ff_t \right]\\
         &\leq  \ee\left[ \norm{\tilde z^{k,l}_{t} - \ee[\tilde z^{k,l}_{t}|\ff_t]}_1 + \norm{\ee[\tilde z^{k,l}_{t}|\ff_t] - \ee[\hat z^k_{t}|\hat\ff_t]}_1 + \norm{\ee[\hz^k_{t}|\hat\ff_t] - \hat{z}^k_{t}}_1\big|\ff_t,\hat\ff_t \right]\\
         &\leq 2\sqrt{\frac{2|\ss|}{K\min_i{N_i}}} + p_{k}^*\cdot \norm{\tilde \bmu_t - \hat \bmu_t}_1.
    \end{align*}
    Next, we bound the difference between $\tilde \mu^k_{t+1}$ and $\hat \mu^k_{t+1}$. Here, note that $\tz_t^{k,l}$ is random under $\ff_t$ while $\hat{z}_t$ is not. Therefore, let $\nu(\tz|s)$ denote the conditional probability of $\tz \in \zz_N$ given the state $s$. Then, it follows from law of total probability that  
    \begin{align*}
        \norm{\ee[\tilde \mu^k_{t+1}|\ff_t] - \ee[\hat \mu^k_{t+1}|\hat \ff_t]}_1 &= 
         \norm{\sum_{s\in \ss, \tilde z \in \zz_N}\Bar{P}\left(\cdot\big|s, \pi^k(s),\tilde z\right)\nu(\tilde z | s)\tilde \mu_t^k(s) - \sum_{s\in \ss}\Bar{P}\left(\cdot\big| s, \pi^k(s),\hat{z}^{k}_t\right)\hat \mu_t^k(s)}_1\\
        &\leq \norm{\sum_{s\in \ss, \tilde z \in \zz_N}\Bar{P}\left(\cdot\big|s, \pi^k(s),\tilde z\right)\nu(\tilde z | s)\tilde \mu_t^k(s) - \sum_{s\in \ss}\Bar{P}\left(\cdot\big| s, \pi^k(s),\hat{z}^{k}_t\right)\tilde \mu_t^k(s)}_1\\
        &+ \norm{\sum_{s\in \ss}\Bar{P}\left(\cdot\big|s, \pi^k(s),\hat{z}^{k}_t\right)\tilde \mu_t^k(s) - \sum_{s\in \ss}\Bar{P}\left(\cdot\big| s, \pi^k(s),\hat{z}^{k}_t\right)\hat \mu_t^k(s)}_1,
    \end{align*}
    where the first term can be bounded via triangle inequality: 
   \begin{align*}
        &\norm{\sum_{s\in \ss, \tilde z \in \zz_N}\Bar{P}\left(\cdot\big|s, \pi^k(s),\tilde{z}\right) \nu(\tilde z | s) \tilde \mu_t^k(s) - \sum_{s\in \ss}\Bar{P}\left(\cdot\big| s, \pi^k(s),\hat{z}^{k}_t\right) \tilde \mu_t^k(s)}_1 \\  &\leq \sum_{s\in \ss}\tilde \mu_t^k(s)\norm{\sum_{\tilde z \in \zz_N}\Bar{P}\left(\cdot\big|s, \pi^k(s),\tilde z\right) \nu(\tilde z | s)  - \Bar{P}\left(\cdot\big|s, \pi^k(s),\hat{z}^{k}_t\right)}_1\\
        &\leq  p_\mu \max_{l \in [N_k]}  \norm{\tilde{z}^{k,l}_t - \hat{z}^{k}_t}_1, 
    \end{align*}
    by the fact that $\ee[z] \leq \max z$ and Lemma \ref{Lipschitz Continuity of Pbar Rbar}. 
    
    The second term can be bounded, using Lemma \ref{Yardim Lemma 9}:
    \begin{align*}
        \norm{\sum_{s\in \ss}\Bar{P}\left(\cdot\big|s, \pi^k(s),\hat{z}^{k}_t\right)\tilde \mu_t^k(s) - \sum_{s\in \ss}\Bar{P}\left(\cdot\big| s, \pi^k(s),\hat{z}^{k}_t\right)\hat \mu_t^k(s)}_1 \leq  \frac{\lambda}{2}\norm{\tilde \mu_t^k - \hat \mu_t^k}_1,
    \end{align*}
    with 
    \begin{align*}
        \frac{\lambda}{2} = \frac{\sup_{s,s'\in \ss}\norm{\Bar{P}\left(\cdot\big|s, \pi^k(s),\hat{z}^{k}_t\right) - \Bar{P}\left(\cdot\big|s', \pi^k(s'),\hat{z}^{k}_t\right)}_1}{2} &\leq \frac{1}{2}\left(p_sd(s,s') + p_a\norm{\pi^k(s) - \pi^k(s')}_1\right) \leq \frac{p_s+2p_a}{2},
    \end{align*}
    since $\norm{\pi^k(s) - \pi^k(s')}_1 \leq 2$ by triangle inequality.  

    Thus, putting everything together, we obtain: 
    \begin{align*}
        \norm{\ee[\tilde \mu^k_{t+1}|\ff_t] - \ee[\hat \mu^k_{t+1}|\hat \ff_t]}_1 &\leq p_\mu \max_{l \in [N_k]} \norm{\tilde{z}^{k,l}_t - \hat{z}^{k}_t}_1 + \frac{p_s+2p_a}{2}\norm{\tilde \mu_t^k - \hat \mu_t^k}_1.
    \end{align*}
    Again, by independence of state evolution of each agent, we can decompose the $l_2$-variance: 
    \begin{align*}
        \ee\left[\norm{\tilde \mu_{t+1}^k - \ee\left[\tilde \mu_{t+1}^k\big|\ff_t\right]}_2^2\big|\ff_t\right] = \frac{1}{N_k^2} \sum_{l=1}^{N_k}\ee\left[\norm{\mathbf{e}_{s^{k,l}_{t+1}}- \ee[\mathbf{e}_{s^{k,l}_{t+1}}|\ff_t]}_2^2\big|\ff_t\right] \leq \frac{2}{N_k},
    \end{align*}
    and similarly, 
   $\ee\left[\norm{\hat \mu_{t+1}^k - \ee\left[\hat \mu_{t+1}^k\big|\ff_t\right]}_2^2\big|\ff_t\right] \leq \frac{2}{N_k}.$
    By Jensen's inequality and the fact that $\norm{\cdot}_1 \leq \sqrt{\ss}\norm{\cdot}_2$, we obtain, for all $k$: 
    \begin{align*}
        \ee\left[\norm{\tilde \mu_{t+1}^k - \ee\left[\tilde \mu_{t+1}^k\big|\ff_t\right]}_1\big|\ff_t\right]  \leq \sqrt{\frac{2|\ss|}{\min_i N_i}}, \quad \ee\left[\norm{\hat \mu_{t+1}^k - \ee\left[\hat \mu_{t+1}^k\big|\hat\ff_t\right]}_1\big|\hat\ff_t\right] \leq \sqrt{\frac{2|\ss|}{\min_i N_i}}.
    \end{align*}
    Then, we may bound, via triangle inequality: 
    \begin{align*}
         \ee\left[\norm{\tilde \mu^k_{t+1} - \hat{\mu}^k_{t+1}}_1\big|\ff_t,\hat\ff_t\right] &\leq
         \ee\left[ \norm{\tilde \mu^k_{t+1} - \ee[\tilde \mu^k_{t+1}|\ff_t] + \ee[\tilde \mu^k_{t+1}|\ff_t] - \ee[\hat \mu^k_{t+1}|\hat \ff_t] + \ee[\hat \mu^k_{t+1}|\hat \ff_t] - \hat{\mu}^k_{t+1}}_1 \big|\ff_t,\hat\ff_t \right]\\
         &\leq \ee\left[\norm{\tilde \mu^k_{t+1} - \ee[\tilde \mu^k_{t+1}|\ff_t]}_1\big|\ff_t,\hat\ff_t\right]+\ee\left[\norm{\ee[\tilde \mu^k_{t+1}|\ff_t] - \ee[\hat \mu^k_{t+1}|\hat \ff_t]}_1\big|\ff_t,\hat\ff_t\right]\\
         &+\ee\left[\norm{\hat \mu^k_{t+1} - \ee[\hat \mu^k_{t+1}|\hat \ff_t]}_1\big|\ff_t,\hat\ff_t\right]\\
         &\leq 2\sqrt{\frac{2|\ss|}{\min_i N_i}} + \ee\left[ p_\mu \max_{l \in [N_k]}\norm{\tilde{z}^{k,l}_t - \hat{z}^{k}_t}_1 + \frac{p_s+2p_a}{2}\norm{\tilde \mu_t^k - \hat \mu_t^k}_1\big|\ff_t,\hat\ff_t\right]\\
         &\leq 2\sqrt{\frac{2|\ss|}{\min_i N_i}} + 2p_\mu \sqrt{\frac{2|\ss|}{K\min_i{N_i}}} + (p_*p_\mu + \frac{1}{2}p_s + p_a) \norm{\tilde \bmu_t - \hat \bmu_t}_1\\
         &\leq (2+2p_\mu)\sqrt{\frac{2|\ss|}{\min_i N_i}} + (p_*p_\mu + \frac{1}{2}p_s + p_a) \norm{\tilde \bmu_t - \hat \bmu_t}_1,
    \end{align*}
    with $p_* = \max_k p_k^*$.
\end{proof}

\subsection{Proof for Proposition \ref{prop: Multi-Step Error Propagation Bound}}
\label{appendix: proof for prop: Multi-Step Error Propagation Bound}
\begin{proof}
    First, note that it follows from Lemma \ref{lemma: One-Step Error Propagation Through Aggregated Impact} that
    \begin{align*}
         \norm{\E[\tilde \mu^k_{t+\tau+1}|\ff_{t+\tau}] - \E[\hat \mu^k_{t+\tau+1}|\ff_{t+\tau}]}_1 \leq p_\mu \max_{l \in [N_k]} m\norm{\tilde{z}^{k,l}_{t+\tau} - \hat{z}^{k}_{t+\tau}}_1 + \frac{p_s+2p_a}{2}\norm{\tilde \mu_{t+\tau}^k - \hat \mu_{t+\tau}^k}_1,
    \end{align*}
    so, since $\ff_{t} \subset \ff_{t+\tau}$, by tower property, 
    \begin{align*}
        \E\left[\norm{\tilde \mu^k_{t+\tau+1} - \hat{\mu}^k_{t+\tau+1}}_1\big|\ff_{t}\right] &\leq   \E\left[\E\left[\norm{\tilde \mu^k_{t+\tau+1} - \hat{\mu}^k_{t+\tau+1}}_1\big|\ff_{t+\tau}\right]\big|\ff_t\right]\\
          &\leq 2(1+p_\mu)\sqrt{\frac{2|\ss|}{\min_i N_i}} + \tilde L_{pop} \E\left[\norm{\tilde \mu_{t+\tau}^k - \hat \mu_{t+\tau}^k}_1\big|\ff_{t}\right]\\
             &\leq \cdots, \text{recursion}\\
          &\leq \sum_{m=0}^{\tau} \tilde L_{pop}^m\left(2\sqrt{\frac{2|\ss|}{\min_i N_i}}
         + p_\mu \cdot 2\sqrt{\frac{2|\ss|}{K\cdot \min_i{N_i}}}\right) + \tilde L_{pop}^{\tau+1} \cdot  \E\left[\norm{\tilde \bmu_t - \hat \bmu_t}_1\big|\ff_{t}\right]\\
         &\leq (2+2p_\mu)\sqrt{\frac{2|\ss|}{\min_i N_i}} \cdot \sum_{m=0}^{\tau} \tilde L_{pop}^m + \tilde L_{pop}^{\tau+1} \cdot  \norm{\tilde \bmu_t - \hat \bmu_t}_1\\
         &\leq \frac{(2+2p_\mu)(1- \Tilde{L}_{pop}^{\tau + 1})}{1-\tilde L_{pop}}\sqrt{\frac{2|\ss|}{\min_i N_i}} + \tilde L_{pop}^{\tau+1} \cdot  \norm{\tilde \bmu_t - \hat \bmu_t}_1.
    \end{align*}
    Since the above inequality holds for all population $k$, we can conclude that
    \begin{align*}
        \E\left[\norm{\tilde \bmu_{t+\tau} - \hat{\bmu}_{t+\tau}}_1\big|\ff_{t}\right] \leq \frac{(2+2p_\mu)(1- \Tilde{L}_{pop}^{\tau })}{1-\tilde L_{pop}}\sqrt{\frac{2|\ss|}{\min_i N_i}} + \tilde L_{pop}^{\tau} \cdot  \norm{\tilde \bmu_t - \hat \bmu_t}_1,
    \end{align*}
    as desired.     
\end{proof}

\subsection{Proof for Lemma \ref{lemma: Empirical Population Bound in the Locally Centralized Case} and Its Corollaries}
\label{appendix: proof for Empirical Population Bound in GGR-S}
\begin{proof}
    Suppose at any $t \geq 0$, $\ff_t = \hat\ff_t $, which implies that $\norm{\tilde \bmu_t - \hat{\bmu}_t}_1 = 0.$ Thus, it follows by triangle inequality that
    \begin{align*}
        \E\left[\norm{\tilde \bmu_{t+\tau} - \Gamma_{pop}^\tau (\tilde \mu_t, \bpi)}_1\big|\ff_t\right] &\leq \E\left[\norm{\tilde \bmu_{t+\tau} - \hat{\bmu}_{t+\tau}}_1 +\norm{\hat{\bmu}_{t+\tau} - \Gamma_{pop}^\tau (\tilde \mu_t, \bpi)}_1\big|\ff_t\right]\\
        &\leq \frac{1-\tilde L_{pop}^{\tau}}{1-\tilde L_{pop}} \left(2\sqrt{\frac{2|\ss|}{\min_i N_i}}
         + p_\mu \cdot 2\sqrt{\frac{2|\ss|}{K\cdot \min_i{N_i}}}\right)  + \frac{1-L_{pop}^\tau}{1-L_{pop}}\cdot \sqrt{\frac{2|\ss|}{\min_i N_i}},
    \end{align*}
    by Lemma \ref{lemma: empirical population bound} and Proposition \ref{prop: Multi-Step Error Propagation Bound}. The second inequality in the Lemma follows in the same manner from triangle inequality, Lemma \ref{lemma: One-Step Error Propagation Through Aggregated Impact}, and Lemma \ref{corollary: empirical population impact bound}.
\end{proof}

Lemma \ref{lemma: Empirical Population Bound in the Locally Centralized Case} has immediate consequences on the convergence of the empirical mean-field ensemble to the stable mean-field ensemble, as well as a bound on the distance between the empirical neighbor impact and the aggregated impact in the complete information case. 
    
\begin{corollary}[Convergence to Stable Mean-Field in GGR-S]
\label{corollary: Convergence to Stable Mean-Field in the Locally Centralized Model}
    Under Lemma \ref{lemma: Empirical Population Bound in the Locally Centralized Case}, for any $t,\tau \geq 0$, it holds that
    \begin{align*}
        &\E\left[\norm{\tilde{\bmu}_{t+\tau} - \Gamma^\infty_{pop}(\bpi)}_1 \big|\ff_t\right] \leq \frac{(3+2p_\mu)(1- \tilde L_{pop}^{\tau})}{1- L_{pop}} \sqrt{\frac{2|\ss|}{\min_i N_i}} + 2L_{pop}^\tau.
    \end{align*}
\end{corollary}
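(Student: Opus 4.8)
The plan is to bound the target by a single triangle inequality that inserts the deterministic $\tau$-step population update $\Gamma^\tau_{pop}(\tilde\bmu_t,\bpi)$ as an intermediate reference point. One resulting piece is exactly the empirical concentration already established in Lemma \ref{lemma: Empirical Population Bound in the Locally Centralized Case}, and the other is controlled by the contraction of the mean-field dynamics toward its unique fixed point $\Gamma^\infty_{pop}(\bpi)$.

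First I would write, conditionally on $\ff_t$,
\begin{align*}
\E\left[\norm{\tilde{\bmu}_{t+\tau} - \Gamma^\infty_{pop}(\bpi)}_1 \big|\ff_t\right] &\leq \E\left[\norm{\tilde{\bmu}_{t+\tau} - \Gamma^\tau_{pop}(\tilde\bmu_t,\bpi)}_1 \big|\ff_t\right]\\
&\quad + \norm{\Gamma^\tau_{pop}(\tilde\bmu_t,\bpi) - \Gamma^\infty_{pop}(\bpi)}_1.
\end{align*}
The second term carries no expectation: since $\tilde\bmu_t$ is $\ff_t$-measurable and $\Gamma^\infty_{pop}(\bpi)$ is deterministic, $\Gamma^\tau_{pop}(\tilde\bmu_t,\bpi) - \Gamma^\infty_{pop}(\bpi)$ is a fixed vector given $\ff_t$. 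The first term is precisely the quantity bounded in Lemma \ref{lemma: Empirical Population Bound in the Locally Centralized Case}, supplying the $\frac{(3+2p_\mu)(1- \tilde L_{pop}^{\tau})}{1- L_{pop}}\sqrt{\frac{2|\ss|}{\min_i N_i}}$ contribution.

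Next I would dispatch the deterministic term by contraction. By Definition \ref{defn: stable population operator gammainfty}, $\Gamma^\infty_{pop}(\bpi)$ is the fixed point of $\Gamma_{pop}(\cdot,\bpi)$, so $\Gamma^\tau_{pop}(\Gamma^\infty_{pop}(\bpi),\bpi) = \Gamma^\infty_{pop}(\bpi)$; and under Assumption \ref{assumption : stable population} the map $\Gamma_{pop}(\cdot,\bpi)$ is an $L_{pop}$-contraction by Lemma \ref{Lipschitz continuity of Gamma_pop}. Iterating $\tau$ times yields
\begin{align*}
\norm{\Gamma^\tau_{pop}(\tilde\bmu_t,\bpi) - \Gamma^\infty_{pop}(\bpi)}_1 &= \norm{\Gamma^\tau_{pop}(\tilde\bmu_t,\bpi) - \Gamma^\tau_{pop}(\Gamma^\infty_{pop}(\bpi),\bpi)}_1\\
&\leq L_{pop}^\tau \norm{\tilde\bmu_t - \Gamma^\infty_{pop}(\bpi)}_1 \leq 2L_{pop}^\tau,
\end{align*}
where the final step uses that any two mean-field ensembles differ by at most $2$ in the $\norm{\cdot}_1$ metric, since each coordinate distribution lies in $\Delta(\ss)$ and the ensemble norm is the maximum over populations. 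Adding the two contributions reproduces the claimed bound exactly.

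This argument is essentially mechanical and has no genuinely hard step; the points requiring care are (i) observing that the intermediate term is $\ff_t$-measurable, so no tower property or extra expectation is needed, and (ii) invoking the $\tau$-fold contraction about the $\Gamma^\infty_{pop}$ fixed point rather than re-deriving geometric decay by hand. I would also note that the first contribution carries $\tilde L_{pop}$ in its numerator while the contraction term carries $L_{pop}$; since $\tilde L_{pop} \leq L_{pop} < 1$ both factors decay in $\tau$, and the stated form is the tightest one that follows immediately from the two cited results.
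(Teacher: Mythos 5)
Your proposal is correct and follows essentially the same route as the paper: the identical triangle-inequality decomposition through $\Gamma^\tau_{pop}(\tilde\bmu_t,\bpi)$, with the first term supplied by Lemma \ref{lemma: Empirical Population Bound in the Locally Centralized Case} and the second bounded by $2L_{pop}^\tau$ via the $L_{pop}$-contraction of $\Gamma_{pop}(\cdot,\bpi)$ about its fixed point. Your observation that the contraction term is $\ff_t$-measurable, so no tower-property argument is needed, is a minor stylistic simplification of the paper's recursion but not a different proof.
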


\begin{proof}
    The proof follows from triangle inequality and Lemma \ref{lemma: Empirical Population Bound in the Locally Centralized Case}. In particular, 
    \begin{align*}
        \E\left[\norm{\tilde{\bmu}_{t+\tau} - \Gamma^\infty_{pop}(\bpi)}_1 \big|\ff_t\right] \leq \E\left[\norm{\tilde{\bmu}_{t+\tau} - \Gamma^\tau_{pop}(\tilde \bmu_t, \bpi)}_1 \big|\ff_t\right] + \E\left[\norm{ \Gamma^\tau_{pop}(\tilde \bmu_t, \bpi) - \Gamma^\infty_{pop}(\bpi)}_1 \big|\ff_t\right],
    \end{align*}
    and in particular, since $\ff_t \subset \ff_{t+\tau-1}$ for all $\tau \geq 1$,
    \begin{align*}
        \E\left[\norm{ \Gamma^\tau_{pop}(\tilde \bmu_t, \bpi) - \Gamma^\infty_{pop}( \bpi)}_1 \big|\ff_t\right] &= \E\left[\norm{ \Gamma_{pop}(\Gamma^{\tau-1}_{pop}(\tilde \bmu_t, \bpi),\bpi) - \Gamma_{pop}(\Gamma^\infty_{pop}(\bpi),\bpi)}_1 \big|\ff_t\right]\\
        &\leq L_{pop}\cdot \E\left[\E\left[\norm{ \Gamma^{\tau-1}_{pop}(\tilde \bmu_t, \bpi) - \Gamma^\infty_{pop}(\bpi)}_1\big|\ff_{t+\tau-1}\right] \big|\ff_t\right] \text{ by Lemma \ref{Lipschitz continuity of Gamma_pop}},\\
        &\cdots, \text{by recursion}\\
        &\leq L^\tau_{pop}\cdot \E\left[\norm{ \tilde \bmu_t - \Gamma^\infty_{pop}(\bpi)}_1 \big|\ff_t\right]\\
        &\leq 2L^\tau_{pop},
    \end{align*}
    and the results follows from Lemma \ref{lemma: Empirical Population Bound in the Locally Centralized Case}.
\end{proof}

\begin{corollary}[Convergence of Empirical Neighbor Impact to Stable Aggregated Impact]
\label{corollary: Convergence of Empirical Neighbor Impact to Stable Aggregated Impact}
    Under Lemma \ref{lemma: Empirical Population Bound in the Locally Centralized Case}, let $\bmu_\infty =  \Gamma_{pop}^\infty(\bpi)$ denote the limiting stable mean-field ensemble and let $z^k_\infty = z^k_{t+\tau} = \frac{1}{K}\sum_{i=1}^K W_K(k,i)\cdot \mu^i_{\infty}$ denote the aggregated impact for population $k$ under stable mean-field ensemble. Then, for any $t,\tau \geq 0$, it holds that for all agent $(k,l)$,
    \begin{align*}
    \E\left[\norm{\tilde{z}^{k,l}_{t+\tau} - z^k_\infty}_1 \big|\ff_t\right] \leq \frac{p_*(3+2p_\mu)+2}{1- L_{pop}}\sqrt{\frac{2|\ss|}{\min_i N_i}} + 2L_{pop}^\tau.
    \end{align*}
\end{corollary}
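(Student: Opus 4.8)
The plan is to split $\tilde z^{k,l}_{t+\tau} - z^k_\infty$ into a zero-mean fluctuation term and a bias term, bound each, and then invoke the stable mean-field convergence estimate already established in Corollary \ref{corollary: Convergence to Stable Mean-Field in the Locally Centralized Model}. First, recall from the proof of Lemma \ref{lemma: One-Step Error Propagation Through Aggregated Impact} that the conditional mean of the neighbor impact is the $W_K$-weighted empirical ensemble, $\E[\tilde z^{k,l}_{t+\tau}\mid\ff_{t+\tau}] = \frac{1}{K}\sum_{i=1}^K p_{k,i}\,\tilde\mu^i_{t+\tau}$, whereas by definition $z^k_\infty = \frac{1}{K}\sum_{i=1}^K p_{k,i}\,\mu^i_\infty$ with $\mu^i_\infty = \Gamma^\infty_{pop}[i](\bpi)$. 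The triangle inequality then gives
\begin{align*}
\norm{\tilde z^{k,l}_{t+\tau} - z^k_\infty}_1 &\leq \norm{\tilde z^{k,l}_{t+\tau} - \E[\tilde z^{k,l}_{t+\tau}\mid\ff_{t+\tau}]}_1 \\
&\quad + \norm{\E[\tilde z^{k,l}_{t+\tau}\mid\ff_{t+\tau}] - z^k_\infty}_1,
\end{align*}
and I would control the two summands separately.

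For the second (bias) summand, the difference of the two weighted averages is deterministic given $\ff_{t+\tau}$, and exactly as in the aggregated-impact computation of Lemma \ref{lemma: One-Step Error Propagation Through Aggregated Impact} it is bounded by $p^*_k\norm{\tilde\bmu_{t+\tau}-\bmu_\infty}_1 \leq p_*\norm{\tilde\bmu_{t+\tau}-\bmu_\infty}_1$, using $p^*_k = \max_i p_{k,i}$ and $p_* = \max_k p^*_k$. For the first (fluctuation) summand I would reuse the independent-transition variance argument from that same lemma: decomposing the $\ell_2$-variance across the $N_i$ agents and using $W^{[N]}_t, p_{k,i}\in[0,1]$ together with $\norm{\cdot}_1 \leq \sqrt{|\ss|}\norm{\cdot}_2$ and Jensen's inequality yields
\[
\E\!\left[\norm{\tilde z^{k,l}_{t+\tau} - \E[\tilde z^{k,l}_{t+\tau}\mid\ff_{t+\tau}]}_1 \,\Big|\, \ff_{t+\tau}\right] \leq \sqrt{\frac{2|\ss|}{K\min_i N_i}} \leq \sqrt{\frac{2|\ss|}{\min_i N_i}}.
\]

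Finally I would take $\E[\,\cdot\mid\ff_t]$ of the combined bound; since $\ff_t \subset \ff_{t+\tau}$, the tower property applies, the constant fluctuation bound is unchanged, and the bias term becomes $p_*\,\E[\norm{\tilde\bmu_{t+\tau}-\bmu_\infty}_1 \mid \ff_t]$, which Corollary \ref{corollary: Convergence to Stable Mean-Field in the Locally Centralized Model} bounds by $p_*\big(\tfrac{(3+2p_\mu)(1-\tilde L_{pop}^\tau)}{1-L_{pop}}\sqrt{2|\ss|/\min_i N_i} + 2L_{pop}^\tau\big)$. There is no genuinely new estimate here; the only step to watch is the constant bookkeeping needed to match the stated coefficient. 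Concretely, one uses $(1-\tilde L_{pop}^\tau)\leq 1$, the bound $p_*\leq 1$ (entries of $W_K$ lie in $(0,1]$) to turn $2p_*L_{pop}^\tau$ into $2L_{pop}^\tau$, and $1-L_{pop}\leq 1$ so that the standalone fluctuation coefficient $1$ is absorbed as $1 \leq \tfrac{2}{1-L_{pop}}$; adding these consolidates the prefactor into $\tfrac{p_*(3+2p_\mu)+2}{1-L_{pop}}$, giving exactly the claimed inequality.
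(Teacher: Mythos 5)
Your proof is correct, and it takes a genuinely different decomposition from the paper's. The paper pivots through the \emph{coupled MP-MFG system}: it writes $\tilde z^{k,l}_{t+\tau} - z^k_\infty = (\tilde z^{k,l}_{t+\tau} - \hz^k_{t+\tau}) + (\hz^k_{t+\tau} - z^k_\infty)$, bounding the first difference by Lemma \ref{lemma: One-Step Error Propagation Through Aggregated Impact} combined with Proposition \ref{prop: Multi-Step Error Propagation Bound} (using the same-start condition $\norm{\tilde\bmu_t - \hat\bmu_t}_1 = 0$ from Lemma \ref{lemma: Empirical Population Bound in the Locally Centralized Case}), and the second by the MP-MFG-side results (Corollary \ref{corollary: empirical population impact bound} and the convergence-to-stable-mean-field corollary of Appendix C). You instead pivot through the conditional mean $\E[\tilde z^{k,l}_{t+\tau}\mid\ff_{t+\tau}] = \frac{1}{K}\sum_i p_{k,i}\tilde\mu^i_{t+\tau}$, splitting into a graph-re-sampling fluctuation (bounded by the variance computation inside Lemma \ref{lemma: One-Step Error Propagation Through Aggregated Impact}, valid at any time step) and a bias term $p_*\norm{\tilde\bmu_{t+\tau}-\bmu_\infty}_1$, which Corollary \ref{corollary: Convergence to Stable Mean-Field in the Locally Centralized Model} controls after conditioning on $\ff_t$ via the tower property. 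The two routes rest on the same underlying machinery — your Corollary \ref{corollary: Convergence to Stable Mean-Field in the Locally Centralized Model} is itself proved through the MP-MFG coupling — but yours stays entirely inside the GGR-S system at the top level and invokes a single downstream corollary, which makes the bookkeeping marginally cleaner: your pre-relaxation constants (a single $\sqrt{2|\ss|/(K\min_i N_i)}$ fluctuation term rather than the paper's factor of $2$, and $2p_*L_{pop}^\tau$ rather than $2L_{pop}^\tau$) are slightly tighter than the paper's intermediate expression, and both collapse to the identical stated bound.

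Your final consolidation is also sound: $(1-\tilde L_{pop}^\tau)\le 1$, $p_*\le 1$ (entries of $W_K$ lie in $(0,1]$), and $1 \le \tfrac{2}{1-L_{pop}}$ (since $0 < 1-L_{pop} \le 1$ under Assumption \ref{assumption : stable population}) together yield exactly the prefactor $\tfrac{p_*(3+2p_\mu)+2}{1-L_{pop}}$ and the term $2L_{pop}^\tau$, matching the claim.
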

\begin{proof}
    By triangle inequality, Lemma \ref{lemma: Empirical Population Bound in the Locally Centralized Case}, and Corollary \ref{corollary: empirical population impact bound},
    \begin{align*}
        \E\left[\norm{\tilde{z}^{k,l}_{t+\tau} - z^k_\infty}_1 \big|\ff_t\right] &\leq \E\left[\norm{\tilde{z}^{k,l}_{t+\tau} - \hz^k_{t+\tau}}_1 \big|\ff_t\right] + \E\left[\norm{\hz^{k}_{t+\tau} - z^k_\infty}_1 \big|\ff_t\right]\\
        &\leq \frac{p_*(2+2p_\mu)(1-\tilde L_{pop}^\tau)}{1-\tilde L_{pop}}\sqrt{\frac{2|\ss|}{\min_i N_i}} + 2\sqrt{\frac{2|\ss|}{K \cdot \min_i{N_i}}} + \frac{p^*(1-L_{pop}^\tau)}{1-L_{pop}} \sqrt{\frac{2|\ss|}{\min_i N_i}} + 2L_{pop}^\tau,
    \end{align*}
    and the results follows from simple algebraic manipulations. 
\end{proof}

\subsection{Proof for Proposition \ref{prop: Reachability Under Non-Degenerate Policies in the Locally Centralized Case} and Its Corollary}
\label{appendix: proof for Reachability Under Non-Degenerate Policies}
\begin{proof}
 Denote by $\P_W$ the probability measure under the randomness of re-sampling, that is, the conditional probability given everything except the re-sampling values. First, we bound the distance between the state visitation probabilities of the MP-MFG case and the GGR-S case: assume that the two systems start at the same initial states $\{s_0^{k,l}\}_{k \in [K],l\in[N_k]}$:
    \begin{align*}
        &\E\left[\norm{\P(\hat{s}^{k,l}_{\tmix} = \cdot) - \P_W(\tilde{s}^{k,l}_{\tmix}
        = \cdot)}_1\big|\{s_0^{k,l}\}_{k,l}\right]\\ 
        &= \E\left[\norm{\sum_{s \in \ss}P\left(\cdot|s, \pi^k(s), \hat{z}^{k}_{\tmix-1} \right)\hat{\mu}^k_{\tmix -1}(s) -  \sum_{s\in \ss, \tz \in \zz_N}P\left(\cdot|s, \pi^k(s), \tz \right)\nu_{T_{\mix} - 1}(\tz|s)\tilde{\mu}^{k}_{\tmix -1}(s)}_1\bigg|\{s_0^{k,l}\}_{k,l}\right]\\
        &\leq \E\left[\norm{\sum_{s\in \ss}P\left(\cdot|s, \pi^k(s), \hat{z}^{k}_{\tmix-1} \right)\left(\hat{\mu}^k_{\tmix -1}(s) - \tmu^k_{\tmix -1}(s)\right)}_1\bigg|\{s_0^{k,l}\}_{k,l}\right]\\
        &+\E\left[\sum_{s \in \ss}\tmu^k(s)  \norm{P\left(\cdot|s, \pi^k(s), \hz^{k}_{\tmix-1}\right) - \sum_{\tz \in \zz_N} P\left(\cdot|s, \pi^k(s), \tz \right)\nu(\tz|s)}_1\bigg|\{s_0^{k,l}\}_{k,l}\right]\\
        &\leq \E\left[\norm{\hat{\mu}^k_{\tmix -1} - \tmu^{k}_{\tmix -1}}_1\big|\{s_0^{k,l}\}_{k,l}\right] + p_\mu\E\left[\max_{l \in [N_k]}\norm{\hz^{k}_{\tmix-1} - \tz^{k,l}_{\tmix-1}}_1\big|\{s_0^{k,l}\}_{k,l}\right]\\
        &\leq (1+ p^*_k p_\mu)\E\left[\norm{\hat{\mu}^k_{\tmix -1} - \tmu^{k}_{\tmix -1}}_1\big|\{s_0^{k,l}\}_{k,l}\right] + 2p_\mu\sqrt{\frac{2|\ss|}{K \cdot \min_i{N_i}}}, \text{ by Lemma \ref{lemma: One-Step Error Propagation Through Aggregated Impact}}.
    \end{align*}
    Then, by Proposition \ref{prop: Multi-Step Error Propagation Bound}, since the starting states are the same, i.e. $\norm{\tilde{\bmu}_0 - \hat{\bmu}_0}_1 = 0,$ which means that
    \begin{align*}
        \E&\left[\norm{\P\left(\hat{s}^{k,l}_{\tmix} = \cdot|\{s_0^{k,l}\}_{k,l}\right) - \P_W(\tilde{s}^{k,l}_{\tmix}= \cdot|\{s_0^{k,l}\}_{k,l})}_1\right]\\
        &\leq \frac{(2+2p_\mu)(1- \Tilde{L}_{pop}^{T_\mix - 1})(1+p^*_k p_\mu)}{1-\tilde L_{pop}}\sqrt{\frac{2|\ss|}{\min_i N_i}}+2p_\mu\sqrt{\frac{2|\ss|}{K \cdot \min_i{N_i}}}\\
        &\leq \frac{(2+2p_\mu)(1+p^*_k p_\mu)}{1-\tilde L_{pop}}\sqrt{\frac{2|\ss|}{\min_i N_i}}+2p_\mu\sqrt{\frac{2|\ss|}{K \cdot \min_i{N_i}}}
    \end{align*}
    Then, by Markov Inequality, for any $\epsilon > 0$,
    \begin{align*}
        \P\left(\norm{\P\left(\hat{s}^{k,l}_{\tmix} = \cdot|\{s_0^{k,l}\}_{k,l}\right) - \P_W\left(\tilde{s}^{k,l}_{\tmix}= \cdot|\{s_0^{k,l}\}_{k,l}\right)}_1 \geq \epsilon\right) \leq \frac{\E\left[\norm{\P\left(\hat{s}^{k,l}_{\tmix} = \cdot|\{s_0^{k,l}\}_{k,l}\right) - \P_W(\tilde{s}^{k,l}_{\tmix}= \cdot|\{s_0^{k,l}\}_{k,l})}_1\right]}{\epsilon},
    \end{align*}
    which implies that with probability 
    \begin{align*}
       1 -  \frac{\E\left[\norm{\P\left(\hat{s}^{k,l}_{\tmix} = \cdot|\{s_0^{k,l}\}_{k,l}\right) - \P(\tilde{s}^{k,l}_{\tmix}= \cdot|\{s_0^{k,l}\}_{k,l})}_1\right]}{\epsilon} \geq 1- \frac{\frac{(2+2p_\mu)(1+p^*_k p_\mu)}{1-\tilde L_{pop}}\sqrt{\frac{2|\ss|}{\min_i N_i}}+2p_\mu\sqrt{\frac{2|\ss|}{K \cdot \min_i{N_i}}}}{\epsilon},
    \end{align*}
    we can guarantee that
    \begin{align*}
        \norm{\P\left(\hat{s}^{k,l}_{\tmix} = \cdot|\{s_0^{k,l}\}_{k,l}\right) - \P(\tilde{s}^{k,l}_{\tmix}= \cdot|\{s_0^{k,l}\}_{k,l})}_1 <  \epsilon,
    \end{align*}
    and this implies that for any $s' \in \ss$,
    \begin{align*}
       \P\left(\hat{s}^{k,l}_{\tmix} = s'|\{s_0^{k,l}\}_{k,l}\right) - \P_W\left(\tilde{s}^{k,l}_{\tmix}= s'|\{s_0^{k,l}\}_{k,l}\right) \leq  \norm{\P\left(\hat{s}^{k,l}_{\tmix} = \cdot|\{s_0^{k,l}\}_{k,l}\right) - \P_W\left(\tilde{s}^{k,l}_{\tmix}= \cdot|\{s_0^{k,l}\}_{k,l}\right)}_1 <  \epsilon,
    \end{align*}
    and together with Assumption \ref{assumption: reachability}, we have for any $s' \in \ss$,
    \begin{align*}
        \P_W\left(\tilde{s}^{k,l}_{\tmix}= s'|\{s_0^{k,l}\}_{k,l}\right) > \P\left(\hat{s}^{k,l}_{\tmix} = s'|\{s_0^{k,l}\}_{k,l}\right) - \epsilon > \delta_{\mix} - \epsilon.
    \end{align*}
    Without loss of generality, we may select $\epsilon = \frac{1}{2}\delta_\mix$. 

    Now, when we average over all possible re-sampling of $W$, we obtain: 
    \begin{align*}
        \P\left(\tilde{s}^{k,l}_{\tmix}= s'|\{s_0^{k,l}\}_{k,l}\right)
        &\geq \frac{1}{2}\delta_\mix \P\left(\P_W\left(\tilde{s}^{k,l}_{\tmix}= s'|\{s_0^{k,l}\}_{k,l}\right) > \frac{1}{2} \delta_\mix \right)\\
        &\geq \frac{1}{2}\delta_\mix\left(1- \frac{2}{\delta_\mix} \frac{(2+2p_\mu)(1+p^*_k p_\mu)}{1-\tilde L_{pop}}\sqrt{\frac{2|\ss|}{\min_i N_i}}+2p_\mu\sqrt{\frac{2|\ss|}{K \cdot \min_i{N_i}}}\right)\\
        &= \frac{1}{2}\delta_\mix - \left(\frac{(2+2p_\mu)(1+p^*_k p_\mu)}{1-\tilde L_{pop}}\sqrt{\frac{2|\ss|}{\min_i N_i}}+2p_\mu\sqrt{\frac{2|\ss|}{K \cdot \min_i{N_i}}}\right)\\
        &\geq  \frac{1}{2}\delta_\mix - \frac{(1+p^*_k p_\mu)(2+2p_\mu) +2p_\mu}{1-\tilde L_{pop}}\sqrt{\frac{2|\ss|}{\min_i N_i}}\\
         &:= \delta'_\mix,
         \end{align*} 
         as desired. 
\end{proof}

Using Proposition \ref{prop: Reachability Under Non-Degenerate Policies in the Locally Centralized Case}, we can bound the empirical neighbor impact and the limiting aggregated impact conditional on the state of a single agent $(k,l)$. 

\begin{corollary}
\label{appendix corollary: conditional bound on neighbor impact}
    Under the settings of Lemma \ref{lemma: Empirical Population Bound in the Locally Centralized Case} and Proposition \ref{prop: Reachability Under Non-Degenerate Policies in the Locally Centralized Case}, for any state $s \in \ss$, agent $(k,l)$, and any $\tau > T_\mix$, it follows that for $\tz^{k,l}_{t+\tau}$ as defined in Eq.~\eqref{eq: tilde_z-def} with respect to $\tilde \bmu_{t+\tau}$, and $z^k_{\infty}$ as defined in Eq. \eqref{eq:z_def} with respect to $\bmu_{\infty} := \Gamma_{pop}^\infty(\bpi)$,
    \begin{align*}
    \E\left[\norm{\tilde{z}^{k,l}_{t+\tau} - z^k_{\infty}}_1 \big|\ts^{k,l}_{t+\tau} = s, \ff_t\right] \leq \frac{p_*(3+2p_\mu)+2}{\delta'_\mix(1- L_{pop})}\sqrt{\frac{2|\ss|}{\min_i N_i}} + \frac{2L_{pop}^\tau}{\delta_\mix'}.
    \end{align*}
\end{corollary}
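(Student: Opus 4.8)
The plan is to follow exactly the template used for the analogous MP-MFG conditional bound (the corollary immediately following Lemma~\ref{lemma: empirical population bound}): convert the \emph{conditional} expectation into the \emph{unconditional} expectation already controlled by Corollary~\ref{corollary: Convergence of Empirical Neighbor Impact to Stable Aggregated Impact}, paying only a factor of $1/\delta'_{\mix}$ that arises from the reachability lower bound on the state-visitation probability of agent $(k,l)$.

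First I would write the total-expectation decomposition of the unconditional bound over the possible states of agent $(k,l)$ at time $t+\tau$:
\begin{align*}
  \E\left[\norm{\tz^{k,l}_{t+\tau} - z^k_\infty}_1\,\big|\,\ff_t\right]
  = \sum_{s'\in\ss}\E\left[\norm{\tz^{k,l}_{t+\tau} - z^k_\infty}_1\,\big|\,\ts^{k,l}_{t+\tau}=s',\ff_t\right]\P\!\left(\ts^{k,l}_{t+\tau}=s'\,\big|\,\ff_t\right).
\end{align*}
Since every summand is nonnegative, discarding all terms except the one for the fixed target state $s$ yields
\begin{align*}
  \E\left[\norm{\tz^{k,l}_{t+\tau} - z^k_\infty}_1\,\big|\,\ts^{k,l}_{t+\tau}=s,\ff_t\right]
  \leq \frac{\E\left[\norm{\tz^{k,l}_{t+\tau} - z^k_\infty}_1\,\big|\,\ff_t\right]}{\P\!\left(\ts^{k,l}_{t+\tau}=s\,\big|\,\ff_t\right)}.
\end{align*}
It then remains to lower bound the visitation probability in the denominator and to insert the unconditional estimate from Corollary~\ref{corollary: Convergence of Empirical Neighbor Impact to Stable Aggregated Impact}.

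For the denominator I would invoke Proposition~\ref{prop: Reachability Under Non-Degenerate Policies in the Locally Centralized Case}. Since $\tau > T_{\mix}$, I would condition on the joint configuration of all agents at time $t+\tau-T_{\mix}$ and use the Markov property of the full joint GGR-S chain. Proposition~\ref{prop: Reachability Under Non-Degenerate Policies in the Locally Centralized Case} gives a reachability bound that holds \emph{uniformly over all initial states}, so regardless of the realization at time $t+\tau-T_{\mix}$ we have $\P(\ts^{k,l}_{t+\tau}=s\mid\ff_{t+\tau-T_{\mix}}) > \delta'_{\mix}$; taking conditional expectation preserves this, giving $\P(\ts^{k,l}_{t+\tau}=s\mid\ff_t) > \delta'_{\mix}$. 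Dividing the bound of Corollary~\ref{corollary: Convergence of Empirical Neighbor Impact to Stable Aggregated Impact} by $\delta'_{\mix}$ then produces exactly
\begin{align*}
  \frac{1}{\delta'_{\mix}}\left(\frac{p_*(3+2p_\mu)+2}{1-L_{pop}}\sqrt{\frac{2|\ss|}{\min_i N_i}} + 2L_{pop}^\tau\right),
\end{align*}
which matches the claimed right-hand side.

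The only genuinely delicate point is justifying the uniform lower bound $\P(\ts^{k,l}_{t+\tau}=s\mid\ff_t) > \delta'_{\mix}$ for $\tau > T_{\mix}$ rather than merely $\tau = T_{\mix}$: the marginal process of agent $(k,l)$ is not itself Markov, because the neighbour impact $\tz$ couples all agents, so one must argue on the full joint chain and crucially exploit that the reachability estimate in Proposition~\ref{prop: Reachability Under Non-Degenerate Policies in the Locally Centralized Case} is stated for \emph{arbitrary} initial states (which lets the bound over the last $T_{\mix}$ steps be applied irrespective of where the chain sits at time $t+\tau-T_{\mix}$). Everything else is the same bookkeeping as in the MP-MFG analogue.
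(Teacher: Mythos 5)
Your proposal is correct and follows essentially the same route as the paper's proof: the law of total expectation over the state of agent $(k,l)$ at time $t+\tau$, dropping the nonnegative off-target terms to bound the conditional expectation by the unconditional one divided by the visitation probability, lower-bounding that probability by $\delta'_{\mix}$ via Proposition~\ref{prop: Reachability Under Non-Degenerate Policies in the Locally Centralized Case}, and finally inserting Corollary~\ref{corollary: Convergence of Empirical Neighbor Impact to Stable Aggregated Impact}. Your extra care in justifying the lower bound $\P(\ts^{k,l}_{t+\tau}=s\mid\ff_t) > \delta'_{\mix}$ for general $\tau > T_{\mix}$ (arguing on the joint chain from time $t+\tau - T_{\mix}$ and using that the reachability estimate is uniform over initial states) fills in a step the paper leaves implicit.
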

\begin{proof}
    The proof follows from law of total expectation. 
    \begin{align*}
         \E\left[\norm{\tilde{z}^{k,l}_{t+\tau} - z^k_{\infty}}_1 \big|\ff_t\right] = \sum_{s \in \ss} \E\left[\norm{\tilde{z}^{k,l}_{t+\tau} - z^k_{\infty}}_1 \big|\ts^{k,l}_{t+\tau} = s, \ff_t\right]\P(\ts^{k,l}_{t+\tau} = s|\ff_t),
    \end{align*}
    and since $\tau > T_\mix$, Proposition \ref{prop: Reachability Under Non-Degenerate Policies in the Locally Centralized Case} implies that $\P(\ts^{k,l}_{t+\tau} = s|\ff_t) > \delta_\mix'$ which means that for any $s\in \ss$, 
    \begin{align*}
        \E\left[\norm{\tilde{z}^{k,l}_{t+\tau} - z^k_{\infty}}_1 \big|\ts^{k,l}_{t+\tau} = s, \ff_t\right]\leq \frac{ \E\left[\norm{\tilde{z}^{k,l}_{t+\tau} - z^k_{\infty}}_1 \big|\ff_t\right]}{\P(\ts^{k,l}_{t+\tau} = s|\ff_t)} \leq \frac{p_*(3+2p_\mu)+2}{\delta'_\mix(1- L_{pop})}\sqrt{\frac{2|\ss|}{\min_i N_i}} + \frac{2L_{pop}^\tau}{\delta_\mix'},
    \end{align*}
    by Corollary \ref{corollary: Convergence of Empirical Neighbor Impact to Stable Aggregated Impact}.
\end{proof}

\subsection{Proof for Proposition \ref{Prop: Convergence of State Visitation Probabilities to Stable Mean-Field}}
\label{appendix: proof for Prop: Convergence of State Visitation Probabilities}
\begin{proof}
Denote by $z^k_\infty$ the aggregated impact for population $k$ under the stable mean-field ensemble $\bmu_\infty = \Gamma^\infty_{pop}(\bpi)$, i.e. $z^k_\infty = \frac{1}{K}\sum_{i=1}^K W_K(k,i)\cdot \mu^i_\infty$. Let $\bP_{\infty}^k$ denote the limiting probability transition of population $k$, i.e. $[\bP_{\infty}^k]_{s,s'} = \bar P(s'|s,\pi^k(s),z^k_\infty)$. Then, $\mu_\infty^k$ is the limiting distribution induced by $\bP_{\infty}^k$. Also, let $ \bP_{t}^k[s]$ denote the one-step transition for population $k$, i.e. $ [\bP_{t}^k]_{s,s'} = \P(s_t^{k,l} = s'|s^{k,l}_{t-1}=s)$. Lastly, define $M_\infty^k$ as the matrix with all rows equal to $\mu^k_\infty$. We remark that under a particular stable mean-field ensemble induced by a policy profile $\bpi$, the system is a Markov Chain. Now, the goal is to show that the state visitation of any agent of the GGR-S system under the policy profile $\bpi$ is converging to this limiting Markov Chain.  

The first two steps of the proof follow from \cite{Yardim_PMA} with some modifications. We present the modified conclusions. 
\begin{enumerate}
        \item Suppose Assumption \ref{assumption: reachability} holds, then Proposition \ref{prop: Reachability Under Non-Degenerate Policies in the Locally Centralized Case} implies that there exists a $T_\mix > 0$ such that for some $\delta_\mix'$, the matrix defined by $\bP_{(j)}^k = \prod_{t=(j-1)T_\mix + 1}^{jT_\mix} \bP_t^k$ satisfies for all $j$ that $\left[\bP_{(j)}^k\right]_{s,s'} > \delta_\mix'\mu^k_\infty(s')>0$. Then define the matrix $Q^k_{(j)}$ implicitly as $\bP_{(j)}^k = (1-\theta)M_\infty^k + \theta Q_{(j)}^k$ with $\theta:= 1-\delta_\mix'$. Then, it holds that for all $J > 0$,  
        \begin{align*}
            \prod_{j=1}^J \bP_{(j)}^k = (1-\theta^J)M_\infty^k + \theta^J \prod_{j=1}^J Q_{(j)}^k + \sum_{j=2}^J(1-\theta^{l-1})\theta^{J-j}M^k_\infty\left(\bP^k_{(j)} - (\bP_\infty^k)^{T_\mix}\right)\prod_{j'=j+1}^J Q^k_{(j')}.
        \end{align*}

        \item For some arbitrary column vector $c \in \Delta(\ss)$, $r< \tmix$, let $\bP_{t}^k[s]$ and  $\bP_{\infty}^k[s]$ denote the $s$-th row of the matrices $\bP_{t}^k$ and
        \begin{align*}
        \norm{c^\top \prod_{t=1}^{J\tmix + r}\bP_t^k - \mu^k_\infty}_1 &\leq 2\theta^J + \sum_{j=2}^J \theta^{J-j} \sum_{t=(j-1)\tmix +1}^{j\tmix}\sup_s \norm{\bP_{t}^k[s] -\bP_{\infty}^k[s]}_1 + \sum_{t= J\tmix +1}^{J\tmix + r} \sup_s \norm{\bP_{t}^k[s] -\bP_{\infty}^k[s]}_1\\
        \end{align*}
\end{enumerate}
    Lastly, we may use the above two results to prove the result. First, we need to bound the different between the one-step transition from state $s$ in trajectory and under the stable mean-field ensemble, i.e., $\sup_s \norm{\bP_{t}^k[s] -\bP_{\infty}^k[s]}_1$. Note that for any $s \in \ss$, 
    \begin{align*}
        \bP_{t}^k[s] -\bP_{\infty}^k[s] &= \P(\tilde s^{k,l}_{t+1} = \cdot\big|\tilde s^{k,l}_t = s) - P(\cdot|s,\pi^k(s), z^k_\infty)\\
        &= \sum_{\tz \in \zz_N}\left(\P\left(\tilde s^{k,l}_{t+1} = \cdot \big|\tilde{z}^{k,l}_t = \tz, \tilde s^{k,l}_t = s\right) - P(\cdot|s,\pi^k(s), z^k_\infty)\right)\nu(\tz|s)\\
        &= \sum_{\tz \in \zz_N} \left(P\left(\cdot|s, \pi^k(s), \tz\right) - P\left(\cdot|s, \pi^k(s), z^k_\infty\right)\right)\nu(\tz|s),
    \end{align*}
    where the second equality is due to law of total probability and the last equality uses Definition \ref{defn: Locally Centralized Learning Model}.

    Now, assuming that $t \geq \tmix$, it follows that
    \begin{align*}
        \norm{\bP_{t}^k[s] -\bP_{\infty}^k[s]}_1 \leq \sum_{\tz \in \zz_N} \norm{\left(\bar P\left(\cdot|s, \pi^k(s), \tz\right) - \bar P\left(\cdot|s, \pi^k(s), z^k_\infty\right)\right)}_1\nu(\tz|s) \leq \frac{p_\mu}{\delta'_\mix} \E\left[\max_{k,l}\norm{\tilde{z}_t^{k,l} - z_\infty^k}_1\big|\tilde s^{k,l}_t = s\right],
    \end{align*}

    where the last inequality follows from the law of total expectation and Lemma \ref{Lipschitz Continuity of Pbar Rbar}. Then, by Corollary \ref{corollary: Convergence of Empirical Neighbor Impact to Stable Aggregated Impact},
    \begin{align*}
         \sup_{s}\norm{\bP_{t}^k[s] -\bP_{\infty}^k[s]}_1 &\leq A\frac{p_\mu}{\delta'_\mix} \sqrt{\frac{2|\ss|}{\min_i N_i}} + \frac{2p_\mu}{\delta'_\mix}L_{pop}^t,
    \end{align*}
    where $A = \frac{p_*(3+2p_\mu)+2}{1- L_{pop}}.$
    
    From \cite{Yardim_PMA}, we know that for some arbitrary column vector $c \in \Delta(\ss)$, $r< \tmix$,
    \begin{align*}
        \norm{c^\top \prod_{t=1}^{J\tmix + r}\bP_t^k - \mu^k_\infty}_1 &\leq 2\theta^J + \sum_{j=2}^J \theta^{J-j} \sum_{t=(j-1)\tmix +1}^{j\tmix}\sup_s \norm{\bP_{t}^k[s] -\bP_{\infty}^k[s]}_1 + \sum_{t= J\tmix +1}^{J\tmix + r} \sup_s \norm{\bP_{t}^k[s] -\bP_{\infty}^k[s]}_1\\
        &\leq 2\theta^J + \left(\tmix  A\frac{p_\mu}{\delta'_\mix}\sqrt{\frac{2|\ss|}{\min_i N_i}} \right)\sum_{j=2}^J \theta^{J-j} +  \frac{2p_\mu}{\delta'_\mix}\sum_{j=2}^J \theta^{J-j}\sum_{t=(j-1)\tmix +1}^{j\tmix}L_{pop}^t\\
        &+ rA\frac{p_\mu}{\delta'_\mix}\sqrt{\frac{2|\ss|}{\min_i N_i}} + \frac{2p_\mu}{\delta'_\mix}\sum_{t= J\tmix+1}^{J\tmix + r} L_{pop}^t\\
        &\leq 2\theta^J + \left( A\frac{p_\mu}{\delta'_\mix}\sqrt{\frac{2|\ss|}{\min_i N_i}} \right)\left(\tmix \frac{1-\theta^{J-1}}{1-\theta}+ r\right)+\frac{2p_{\mu}}{\delta'_{\mix}}\sum_{j=2}^{J}\theta^{J-j}\frac{L_{pop}^{(j-1)T_{\mix}+1}\left(1-L_{pop}^{T_{\mix}}\right)}{1-L_{pop}}\\
        &+\frac{2p_{\mu}}{\delta'_{\mix}}\frac{L_{pop}^{JT_{\mix}+1}\left(1-L_{pop}^{r}\right)}{1-L_{pop}}\\
        &\leq 2\theta^{J}+2\left( A\frac{p_\mu}{(\delta'_\mix)^2}\sqrt{\frac{2|\ss|}{\min_i N_i}} \right)T_{\mix}+\frac{2p_{\mu}L_{pop}\theta^{J-1}}{\delta'_{\mix}}\frac{1-L_{pop}^{T_{\mix}}}{1-L_{pop}}\sum_{j=2}^{J}\left(\frac{L_{pop}^{T_{\mix}}}{\theta}\right)^{j-1}\\
        &+\frac{2p_{\mu}}{\delta'_{\mix}}\frac{L_{pop}^{JT_{\mix}+1}\left(1-L_{pop}^{r}\right)}{1-L_{pop}}.
    \end{align*}
    Now, when $L_{pop}^{\tmix} > \theta$, i.e. $L_{pop}<(1-{\delta^\prime}_{\mix})^{1/{T_{\mix}}}$,
        \begin{align*}
            &2\theta^{J}+\frac{2p_{\mu}L_{pop}\theta^{J-1}}{\delta'_{\mix}}\frac{1-L_{pop}^{T_{\mix}}}{1-L_{pop}}\sum_{j=2}^{J}\left(\frac{L_{pop}^{T_{\mix}}}{\theta}\right)^{j-1}+\frac{2p_{\mu}}{\delta'_{\mix}}\frac{L_{pop}^{JT_{\mix}+1}\left(1-L_{pop}^{r}\right)}{1-L_{pop}}\\
        &\leq 2\theta^{J}+\frac{2p_{\mu}L_{pop}\theta^{J-1}}{\delta'_{\mix}}\frac{1}{1-L_{pop}}\frac{\frac{L_{pop}^{T_{\mix}}}{\theta}}{1-\frac{L_{pop}^{T_{\mix}}}{\theta}}+\frac{2p_{\mu}}{\delta'_{\mix}}\frac{L_{pop}^{JT_{\mix}+1}}{1-L_{pop}}\\
        &= 2\theta^{J}+\frac{2p_{\mu}L_{pop}\theta^{J-1}}{\delta'_{\mix}}\frac{1}{1-L_{pop}}\frac{{L_{pop}^{T_\mix}}}{\theta-L_{pop}^{T_{\mix}}}+\frac{2p_{\mu}}{\delta'_{\mix}}\frac{L_{pop}^{JT_{\mix}+1}}{1-L_{pop}}\\
        &\leq \theta^{J}\left(2+\frac{2p_{\mu}L_{pop}}{\delta'_{\mix}(1-L_{pop})(\theta-L_{pop}^{T_{\mix}})}+\frac{2p_{\mu}L_{pop}}{\delta'_{\mix}(1-L_{pop})}\right)
        \end{align*}
        and similarly, when \(L_{pop}>(1-\delta'_{\mix})^{1/T_{\mix}}\),
        \begin{align*}
              &2\theta^{J}+\frac{2p_{\mu}L_{pop}\theta^{J-1}}{\delta'_{\mix}}\frac{1-L_{pop}^{T_{\mix}}}{1-L_{pop}}\sum_{j=2}^{J}\left(\frac{L_{pop}^{T_{\mix}}}{\theta}\right)^{j-1}+\frac{2p_{\mu}}{\delta'_{\mix}}\frac{L_{pop}^{JT_{\mix}+1}\left(1-L_{pop}^{r}\right)}{1-L_{pop}}\\
              & \leq 2\theta^{J}+\frac{2p_{\mu}L_{pop}\theta^{J-1}}{\delta'_{\mix}}\frac{1}{1-L_{pop}}\frac{(\frac{L_{pop}^{T_{\mix}}}{\theta})^{J}}{\frac{L_{pop}^{T_{\mix}}}{\theta}-1}+\frac{2p_{\mu}}{\delta'_{\mix}}\frac{L_{pop}^{JT_{\mix}+1}}{1-L_{pop}}\\
              &= L_{pop}^{T_{\mix}J}\left(2+\frac{2p_{\mu}L_{pop}}{\delta'_{\mix}(1-L_{pop})(L_{pop}^{T_{\mix}}-\theta)}+\frac{2p_{\mu}L_{pop}}{\delta'_{\mix}(1-L_{pop})}\right)
        \end{align*}
    Combining the two cases, we obtain 
    \begin{align*}
     \norm{c^\top \prod_{t=1}^{J\tmix + r}\bP_t^k - \mu^k_\infty}_1
       \leq \frac{2p_{\mu}T_{\mix}(p_*(3+2p_\mu)+2)}{{\delta'_{\mix}}^{2}(1-L_{pop})}  \sqrt{\frac{2|\mathcal{S}|}{\min_i N_i}}+C_{\mix}\rho_{\mix}^{T} 
    \end{align*}
    where \(\rho_{\mix}:=\max\{L_{pop},(1-\delta'_{\mix})^{1/T_{\mix}}\}\) and \(C_{\mix}:=\left(2+\frac{2p_{\mu}L_{pop}}{\delta'_{\mix}(1-L_{pop})|\theta-L_{pop}^{T_{\mix}}|}+\frac{2p_{\mu}L_{pop}}{\delta'_{\mix}(1-L_{pop})}\right)/(\rho_{\mix}^{T_{\mix}})\).
\end{proof}

\section{Learning GGR-S}
\label{appendix: learning GGR-S}
\begin{defn}[TD operators]
\label{appendix defn: TD operators}
     Let $\bpi \in \Pi^K$ and $\bmu_\infty := \Gamma_{pop}^\infty(\bpi)$ be the stable mean-field ensemble induced by the policy profile $\bpi$, and $z_\infty^k$ defined as in \eqref{eq:z_def} with respect to $\bpi_\infty$. We define the Bellman operator for population $k$, $T^{\pi^k}: \qq \to \qq$ as: $$(T^{\pi^k} Q^k)(s,a) := R(s,a,z_\infty^k) + h(\pi^k(s)) + \gamma\sum_{s',a'} P(s'|s,a,z_\infty^k)\pi(a'|s')Q^k(s',a'),$$for each $Q \in \qq$. Then the corresponding TD-learning operator under the stable mean-field ensemble is defined as: for population $k$ with policy $\pi^k \in \Pi$, $F^{\pi^k}(Q^k) := M^{\pi^k}(Q^k-T^{\pi^k} Q^k)$,
    where $M^{\pi^k} := \text{diag}(\{{\mu^k_\infty}(s) \pi^k(a|s)\}_{s,a}) \in \R^{|\ss||\aa|\times |\ss||\aa|}$ is a matrix denoting the state-action distribution induced by the policy $\pi$ at the limiting mean field distribution $\mu^k$.
\end{defn}

Note that $F$ is the temporal difference operator under the limiting stable mean-field ensemble to which we do not have access in sample-based learning. We aim to approximate $F$ with $\tilde F$ as defined in Definition \ref{defn: stochastic TD operator}. This approximation was first proposed in \cite{kotsalis2021simple} and then extended to the standard MFG case by \cite{Yardim_PMA}.

\subsection{CTD Learning in GGR-S}
\label{appendix: CTD learning in GGR-S}

First, we present the CTD learning algorithm which collects samples to perform a single update of the $Q^k$ function. 

\begin{algorithm}
\caption{GGR-S CTD Learning}
\label{algo:GGRS CTD learning}
    \begin{algorithmic}[1]
    \Require CTD learning rate $\{\beta_n\}_n$, CTD iteration $I_{ctd}$, mixing time $I_\mix$, and policy profile $\bpi$. 
    \State Set $\tilde Q^k_0(\cdot,\cdot) \gets Q_{\max}$, for all $k$.
        \For{$n \in 0,1,\dots I_{ctd}-1$}
            \For{$t \in 1, \dots, I_\mix$}
                \State Re-sample $W^{[N]}_t$ from $W_K$.
                \State \multiline{Compute for all $(k,l)$: $ \tilde{z}_t^{k,l} = \frac{1}{K}\sum_{i=1}^K\frac{1}{N_i}\left[\sum_{j=1}^{N_i} W_t^{[N]}(t^{k,l},t^{i,j}) \delta_{\ts_t^{i,j}}\right]$.}
                \State  \multiline{Simulate for all $(k,l)$, $a_t^{k,l} \sim \pi^k(\ts_t^{k,l})$, $\ts_{t+1}^{k,l} \sim P(\cdot|\ts_t^{k,l}, a_t^{k,l}, \Tilde{z}^{k,l}_t)$, $r_{t+1}^{k,l} = R(\cdot|\ts_t^{k,l}, a_t^{k,l}, \tz^{k,l}_t)$.} 
            \EndFor
            \State \multiline{Observe $\omega_{n}^{k} = (s_{t-2}^{k,1}, a_{t-2}^{k,1}, r_{t-2}^{k,1}, s_{t-1}^{k,1}, a_{t-1}^{k,1})$.}
            \State \multiline{CTD update: $\tilde Q^k_{n+1} = \tilde Q^k_n - \beta_n \tilde F^{\pi^{k}}(\tilde Q^k_n, \omega^k_{n})$, for all $k$.}
        \EndFor
    \State Return updated state-action value function $\Tilde{Q}_{I_{ctd}}$.
    \end{algorithmic}
\end{algorithm}

\begin{thm}[GGR-S CTD Learning]
\label{Theorem: GGR-S CTD Learning}
    In the GGR-S framework, suppose that Assumption \ref{assumption: reachability} holds and suppose the policy profile $\bpi$ satisfy Assumption \ref{assumption: non-degenerate policies}. Set learning rates $\beta_n = \frac{2}{\rho_F(t_0 + n - 1)}, \forall n \geq 0,$ and $I_{ctd} > \mathcal{O}(\epsilon^{-2})$, $I_\mix > \mathcal{O}(\log \epsilon^{-1})$. Let $Q_\infty^k := (\cdot, \cdot|\pi^k, \bmu_\infty)$. Then the output $\tilde Q^k_{I_{ctd}}$ of Algorithm \ref{algo:GGRS CTD learning} satisfies for any $k\in [K]$. 
    \begin{align*}
        \E\left[\norm{\tilde Q^k_{I_{ctd}} -Q_\infty^k}_\infty \right] \leq \epsilon + \mathcal{O}\left(\frac{1}{\sqrt{\min_{i}{N_i}}}\right). 
    \end{align*}
    where $t_0 = \frac{64(1+\gamma)^2}{\rho_F^2}$, $ I_{mix} > \frac{\log 1/(\rho_F)+ \log 20(1+\gamma)C_{\mix}}{\log 1/\rho_\mix}$ and $\rho_F = (1-\gamma)\delta_\mix' \zeta$.
\end{thm}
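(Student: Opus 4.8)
The plan is to treat Algorithm~\ref{algo:GGRS CTD learning} as an \emph{inexact} stochastic-approximation scheme for the root of the strongly monotone operator $F^{\pi^k}$ from Definition~\ref{appendix defn: TD operators}, following the CTD framework of \cite{kotsalis2021simple,Yardim_PMA} but carefully accounting for the additional bias injected by the re-sampling dynamics. First I would record that $Q^k_\infty$ is the unique zero of $F^{\pi^k}$, since $T^{\pi^k}$ is a $\gamma$-contraction under the stable mean field $\bmu_\infty$. Writing $T^{\pi^k}Q - T^{\pi^k}Q' = \gamma P^{\pi^k}(Q-Q')$ and using that the diagonal weighting $M^{\pi^k}$ has entries bounded below by $\delta'_\mix\zeta$ (the product of the reachability floor from Proposition~\ref{prop: Reachability Under Non-Degenerate Policies in the Locally Centralized Case} and the non-degeneracy floor $\zeta$ from Assumption~\ref{assumption: non-degenerate policies}), the standard TD computation gives $\langle M^{\pi^k}(I-\gamma P^{\pi^k})x,x\rangle \geq (1-\gamma)\|x\|_{M^{\pi^k}}^2 \geq \rho_F\norm{x}_2^2$ with $\rho_F = (1-\gamma)\delta'_\mix\zeta$ (using that $\mu^k_\infty$ is stationary for $P^{\pi^k}$), together with a matching Lipschitz bound and a uniform second-moment bound on $\tilde F^{\pi^k}$ from Definition~\ref{defn: stochastic TD operator} over $\{Q:\norm{Q}_\infty\le Q_{\max}\}$.

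The crux of the argument is the bias decomposition. The observation $\omega^k_n$ fed into update $n$ is the last one-step transition of an $I_\mix$-length roll-out of the finite-$N$ GGR-S dynamics under $\bpi$; conditioning on the filtration $\ff$ at the start of block $n$, I would write
\[ \E[\tilde F^{\pi^k}(Q,\omega^k_n)\mid\ff] = F^{\pi^k}(Q) + b_n(Q), \]
where $b_n(Q)$ collects two distinct effects. The first is that the state-action law of $\omega^k_n$ after $I_\mix$ steps is not exactly the stationary weighting $M^{\pi^k}$; this deviation is controlled in total variation by Proposition~\ref{Prop: Convergence of State Visitation Probabilities to Stable Mean-Field}, giving a reducible geometric term $C_\mix\rho_\mix^{I_\mix}$ plus an irreducible population floor of order $(\min_i N_i)^{-1/2}$. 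The second is that the empirical neighbor impact $\tz^{k,1}$ governing the realized transition and reward differs from $z^k_\infty$; invoking the Lipschitz continuity of $P,R$ (Assumption~\ref{Lipschitz Continuity of $P,R$}) together with the conditional neighbor-impact bound of Corollary~\ref{appendix corollary: conditional bound on neighbor impact} bounds this contribution by $\mathcal{O}((\min_i N_i)^{-1/2}) + \mathcal{O}(L_{pop}^{I_\mix})$. Combining, $\norm{b_n(Q)}_\infty \le \kappa_1 C_\mix\rho_\mix^{I_\mix} + \kappa_2(\min_i N_i)^{-1/2}$ uniformly over $\norm{Q}_\infty\le Q_{\max}$, with constants depending on $\gamma,\delta'_\mix,p_\mu$. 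I expect this separation of the reducible mixing bias from the irreducible population bias to be the main obstacle, precisely because the samples are neither i.i.d.\ nor drawn from a time-homogeneous chain with the correct stationary law.

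With strong monotonicity, Lipschitzness, the variance bound, and the bias bound in hand, I would run the standard CTD one-step expansion: setting $e_n := \E\norm{\tilde Q^k_n - Q^k_\infty}_2^2$ and expanding the update, the cross term with $F^{\pi^k}$ contracts at rate $1-2\beta_n\rho_F$ while the bias enters linearly through $\beta_n\norm{b_n}_\infty$. With $\beta_n = 2/(\rho_F(t_0+n-1))$ and $t_0 = 64(1+\gamma)^2/\rho_F^2$, the clean recursion of \cite{kotsalis2021simple} yields a statistical rate $\E\norm{\tilde Q^k_{I_{ctd}}-Q^k_\infty}_2 = \mathcal{O}(1/\sqrt{I_{ctd}})$ plus a persistent term proportional to $\sup_n\norm{b_n}_\infty/\rho_F$.

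Finally I would close by the prescribed parameter choices. Choosing $I_\mix > (\log(1/\rho_F)+\log(20(1+\gamma)C_\mix))/\log(1/\rho_\mix)$ forces the geometric contribution $\kappa_1 C_\mix\rho_\mix^{I_\mix}/\rho_F$ below the statistical budget, and $I_{ctd} > \mathcal{O}(\epsilon^{-2})$ makes $\mathcal{O}(1/\sqrt{I_{ctd}}) \le \epsilon$. The residual $\kappa_2(\min_i N_i)^{-1/2}/\rho_F$ is exactly the advertised $\mathcal{O}((\min_i N_i)^{-1/2})$, and passing from the $\ell_2$ to the $\ell_\infty$ bound costs only a dimension-dependent constant absorbed into the $\mathcal{O}$-notation, which completes the proof.
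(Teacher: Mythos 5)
Your proposal is correct and follows essentially the same route as the paper's proof: verify the CTD conditions of \cite{kotsalis2021simple} and \cite{Yardim_PMA} (strong monotonicity with modulus $\rho_F=(1-\gamma)\delta'_\mix\zeta$, Lipschitzness, bounded variance), bound the sampling bias by splitting it into the reducible mixing error controlled by Proposition \ref{Prop: Convergence of State Visitation Probabilities to Stable Mean-Field} (the $C_\mix\rho_\mix^{I_\mix}$ term) plus the irreducible population error in the neighbor impact via Corollary \ref{appendix corollary: conditional bound on neighbor impact} (the $\mathcal{O}((\min_i N_i)^{-1/2})$ term), then run the standard step-size recursion and pass from $\ell_2$ to $\ell_\infty$. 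The paper merely organizes the bias bound with slightly different algebra (a reward term plus two terms $A$ and $B$, where $B$ scales with $\norm{Q-Q^k_\infty}_2$ rather than being uniform), but the decomposition, the invoked lemmas, and the parameter choices are the same as yours.
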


\begin{proof}
    The main idea of this proof is based on Theorem D.2 and D.7 in \cite{Yardim_PMA} which generalizes the CTD result from \cite{kotsalis2021simple} to standard MFG settings. In order to use this theorem, we need to check all the five assumptions hold in GGR-S setting.
    
    First, we summarize several key results from \cite{kotsalis2021simple} and \cite{Yardim_PMA} which facilitate this proof. 
    \begin{enumerate}
        \item Under Assumption \ref{assumption: non-degenerate policies}, the set $\{u \in \Delta(\ss): u(\cdot) \geq \zeta \}$ is compact which implies that $\nabla h$ is continuous on this set. Thus, we denote the Lipschitz constant of the regularizer $h$ as $C_h$.
        \item The TD operator $T^\pi$ is contractive and Lipschitz continuous with Lipschitz constant $L_F = 1+\gamma$ with respect to the $\ell_2$ norm on $\qq$ for all policy $\pi \in \Pi$. 
        \item \cite{kotsalis2021simple} has proven that $F^\pi$ is generalized strongly monotone with modulus $\rho_F := (1-\gamma)\delta'_\mix \zeta$. In particular, for any $Q^k \in \qq$, 
        \begin{align*}
            \langle F^\pi(Q^k), Q^k-Q^*\rangle \geq \rho_F\norm{Q^k-Q^k_\infty}_2^2,
        \end{align*}
        where $Q^k_\infty$ is the true state-action value function of the policy profile $\bpi$ under the stable mean-field ensemble $\bmu_\infty = \Gamma_{pop}^\infty(\bpi)$, i.e. $Q^k_\infty = Q^k(\cdot, \cdot|\bpi, \bmu_\infty)$.  
        \item Lemma 16 in \cite{LanPMD} proves that the distance between $\tilde{F}^{\pi^k}$-updated $Q$-function and its expected value is bounded: for the $n$-th observation $\omega_n \in \Omega$ (after $n$ rounds of waiting $T_\mix$ steps), 
        \begin{align*}
            \E\left[\tilde F^{\pi^k}\left(\tilde{Q}^k, \omega_n\right) - \E\left[\tilde F^{\pi^k}\left(\tilde{Q}^k, \omega_n\right)|\ff_{(n-1)T_\mix}\right]\bigg|\ff_{(n-1)T_\mix}\right] \leq 4(1+\gamma)^2 \E\left[\norm{\tilde{Q}^k - Q^k_\infty}_2^2\right] + \frac{4(1+L_h)^2}{(1-\gamma)^2},
        \end{align*}
        where $Q^k_\infty = Q^k(\cdot, \cdot|\bpi, \bmu_\infty)$.
    \end{enumerate}

    The above results ensure that Assumption A1 - A4 in Theorem D.2 from \cite{Yardim_PMA} hold. 
    
    Then, we need to check the last assumption for our model - convergence of the stochastic TD update $\tilde F$ to $F$ (See also Lemma 4.1 in \cite{kotsalis2023simple}): for any policy profile $\bpi$, any $\tilde{Q}^k \in \qq$, and any population $k$, $t\geq 0$, 
\begin{align*}
    \norm{\ee\left[\tilde F^{\pi^k}\left(\tilde{Q}^k, \omega_{t+\tau}\right)\big| \ff_t\right] - F^{\pi^k}(\tilde{Q}^k)}_2 \leq \norm{\ee\left[\left(R(\tilde s_{t+\tau},a_{t+\tau},z^k_\infty) - R(\ts_{t+\tau}, a_{t+\tau}, \tz^k_{t+\tau})\right)\mathbf{e}_{\ts_{t+\tau}, a_{t+\tau}}\big|\ff_t\right]}_2\\
    +\norm{\ee\left[\tilde{Q}^k(\tilde s_{t+\tau},a_{t+\tau}) - R(\tilde s_{t+\tau},a_{t},z^k_\infty) - h(\pi^k(\tilde s_{t+\tau})) - \gamma \tilde{Q}^k(\tilde s_{t+\tau +1 },a_{t+\tau+1})\mathbf{e}_{\tilde s_{t+\tau},a_{t+\tau}}\big|\ff_t\right]- F^{\pi^k}(\tilde{Q}^k)}_2, 
\end{align*}
by triangle inequality. 

The first term can be bounded as
\begin{align*}
    &\norm{\ee\left[\left(R(\tilde s_{t+\tau},a_{t+\tau},z^k_\infty) - R(s_{t+\tau}, a_{t+\tau}, \tz^k_{t+\tau})\right)\mathbf{e}_{\ts_{t+\tau}, a_{t+\tau}}\big|\ff_t\right]}_2\\
    &\quad \leq \ee\left[\norm{\left(R(\tilde s_{t+\tau},a_{t+\tau},z^k_\infty) - R(\tilde s_{t+\tau}, a_{t+\tau}, \tz^k_{t+\tau})\right)\mathbf{e}_{\ts_{t+\tau}, a_{t+\tau}}}_1\big|\ff_t\right], \\
    &\quad \leq r_{\mu} \ee\left[\norm{z^k_\infty - \tz^k_{t+\tau}}_1\big|\ff_t\right]\\
    &\quad \leq  \frac{r_{\mu}p_*(3+2p_\mu)+2}{1- L_{pop}}\sqrt{\frac{2|\ss|}{\min_i N_i}} + 2r_\mu L_{pop}^\tau
\end{align*}
where the first inequality is due to Jensen's inequality and $\norm{\cdot}_2 \leq \norm{\cdot}_1$; the second inequality is due to Assumption \ref{Lipschitz Continuity of $P,R$}; and the last inequality is due to Corollary \ref{corollary: Convergence of Empirical Neighbor Impact to Stable Aggregated Impact}.

Next, note that in the second term: let $$g^k(s,a,s',a') =\tilde{Q}^k(s,a) - R(s,a,z^k_\infty) - h(\pi^k(s)) - \gamma \tilde{Q}^k(s',a'),$$ then we have:
\begin{align*}
    &\ee\left[\tilde{Q}^k(\tilde s_{t+\tau},a_{t+\tau}) - R(\tilde s_{t+\tau},a_{t+\tau},z^k_\infty) - h(\pi^k(\ts_{t+\tau})) - \gamma \tilde{Q}^k(\ts_{t+\tau +1 },a_{t+\tau +1 }) e_{\tilde s_{t+\tau},a_{t+\tau}}\big|\ff_t\right]- F^{\pi^k}(\tilde{Q}^k)\\
    &= \sum_{\substack{s,s'\in \ss\\a,a'\in \aa}}\sum_{\tz \in \zz_N} \mathbf{e}_{s,a} g^k(s,a,s',a') P(s'|s,a,\tz)\P(\ts_{t+\tau} = s|\ff_t)\P(\tz_{t+\tau}^k = \tz|\ts_{t+\tau} = s, \ff_t)\pi^k(a|s)\pi^k(a'|s')- F^{\pi^k}(\tilde{Q}^k)\\
    &=\underbrace{\sum_{\substack{s,s'\in \ss\\a,a'\in \aa}} \mathbf{e}_{s,a} g^k(s,a,s',a')\pi^k(a|s)\pi^k(a'|s')\P(\ts_{t+\tau} = s|\ff_t)\left(\sum_{\tz \in \zz_N}  P(s'|s,a,\tz)\P\left(\tz^{k,l}_{t+\tau} = \tz|\ts_{t+\tau}^{k,l}=s,\ff_t\right)- P(s'|s,a,z_\infty^k)\right)}_\text{$A$}\\
    &\quad + \underbrace{\sum_{\substack{s,s'\in \ss\\a,a'\in \aa}} \mathbf{e}_{s,a} g^k(s,a,s',a')P(s'|s,a,z_\infty^k)\P(\ts_{t+\tau} = s|\ff_t)\pi^k(a|s)\pi^k(a'|s')- F^{\pi^k}(\tilde{Q}^k)}_\text{$B$},
\end{align*}

In particular, we start with the first term: 
\begin{align*}
    \norm{\sum_{\tz \in \zz_N}  P(\cdot|s,a,\tz)\P\left(\tz^{k,l}_{t+\tau} = \tz|\ts_{t+\tau}^{k,l}=s,\ff_t\right)- P(\cdot|s,a,z_\infty^k)}_1 &\leq p_\mu \E\left[\max_{k,l} \norm{\tz^{k,l}_{t+\tau} - z_\infty^k}_1 \bigg|\ts_{t+\tau} = s, \ff_t \right]\\
    &\leq \frac{p_\mu(p_*(3+2p_\mu)+2)}{\delta'_{\mix}(1- L_{pop})}\sqrt{\frac{2|\ss|}{\min_i N_i}} + \frac{2p_\mu L_{pop}^\tau}{\delta_\mix'},
\end{align*}
where the last inequality follows from Corollary \ref{appendix corollary: conditional bound on neighbor impact}. Furthermore, since the algorithm output $\tilde Q^k \in \left[\frac{h_{\max}-L_h}{1-\gamma}, \frac{1+h_{\max}}{1-\gamma}\right]$ with $\tilde Q^k$ initialized to $\frac{1+h_{\max}}{1-\gamma}$, it means that for vectors $\mathbf{v}, \mathbf{v}' \in \ss\times \aa \times \ss \times \aa$, 
$$\norm{g^k(\mathbf{v}) - g^k(\mathbf{v}')}_1 \leq \frac{2(1+L_h)}{1-\gamma}.$$

Therefore, applying Lemma \ref{Yardim Lemma 9} to term $A$ gives
\begin{align*}
    \norm{A}_1 \leq \frac{1+ L_h}{\delta_\mix'(1-\gamma)}\left(\frac{p_\mu p_*(3+2p_\mu)+2p_\mu}{1- L_{pop}}\sqrt{\frac{2|\ss|}{\min_i N_i}} + 2p_\mu L_{pop}^\tau\right).
\end{align*}

Next, we consider the second term: by Definition \ref{appendix defn: TD operators}, $F^{\pi^k}(Q^k) := M^{\pi^k}(Q^k-T^{\pi^k} Q^k),$ and $$\sum_{\substack{s,s'\in \ss\\a,a'\in \aa}} \mathbf{e}_{s,a}g^k(s,a,s',a')P(s'|s,a,z_\infty^k)\P(\ts_{t+\tau} = s|\ff_t)\pi^k(a|s)\pi^k(a'|s') = M_{t+\tau}(I - \gamma P_\infty)(Q^k-Q^k_\infty),$$ with $M_{t+\tau} = \text{diag}(\P(\ts_{t+\tau} = s|\ff_t)\pi(a|s))$ and $[P_{\infty}^k]_{s,s'} = \bar P(s'|s,\pi^k(s),z^k_\infty)$. Thus, $
    B = (M_{t+\tau} - M^{\pi^k})(I - \gamma P_\infty)(Q-Q^k_\infty)$. Since $M_{t+\tau}$ and $M^{\pi^k}$ are diagonal matrices the max difference of the diagonal entries is the max eigenvalue, so $\lambda_{max}(M_{t+\tau} - M^{\pi^k}) \leq \norm{\P(s^{k,l}_{t}=\cdot)-\Gamma^{\infty}_{pop}[k](\pi^{k})}_1$.
    
    Putting everything together, we obtain: 
\begin{align*}
    \norm{B}_2 &\leq \norm{\P(s^{k,l}_{t}=\cdot)-\Gamma^{\infty}_{pop}[k](\pi^{k})}_1\norm{I -\gamma P_\infty}_2\norm{Q^k-Q_\infty^k}_2\\
    &\leq (1+\gamma)\left(C_\mix \rho_\mix^{T} + \frac{2p_{\mu}T_{\mix}(3p_*+2p_*p_\mu+2)}{{\delta'_{\mix}}^{2}(1-L_{pop})}  \sqrt{\frac{2|\mathcal{S}|}{\min_i N_i}}\right)\norm{Q^k-Q_\infty^k}_2,
\end{align*}
since $\norm{\cdot}_1 \leq \norm{\cdot}_1$ and $\norm{I-\gamma P_\infty }_1 \leq 1+\gamma$. 

Then, we can conclude that
\begin{align*}
    \norm{\ee\left[\tilde F^{\pi^k}\left(\tilde{Q}^k, \omega_{t+\tau}\right)\big| \ff_t\right] - F^{\pi^k}(\tilde{Q}^k)}_2 &\leq \frac{r_{\mu}(3p_*+2p_*p_\mu+2)}{1- L_{pop}}\sqrt{\frac{2|\ss|}{\min_i N_i}} + 2r_\mu L_{pop}^\tau\\
    &\quad + \frac{1+ L_h}{\delta_\mix'(1-\gamma)}\left(\frac{p_\mu (3p_*+2p_*p_\mu+2))}{1- L_{pop}}\sqrt{\frac{2|\ss|}{\min_i N_i}} + 2p_\mu L_{pop}^\tau\right)\\
    &\quad + (1+\gamma)\left(C_\mix \rho_\mix^{T} + \frac{2p_{\mu}T_{\mix}(3p_*+2p_*p_\mu+2)}{{\delta'_{\mix}}^{2}(1-
    L_{pop})}  \sqrt{\frac{2|\mathcal{S}|}{\min_i N_i}}\right)\norm{Q^k-Q_\infty^k}_2\\
    &\leq \frac{(1+ L_h)(r_\mu+p_\mu) (3p_*+2p_*p_\mu+2))}{(1- L_{pop})\delta_\mix'(1-\gamma)}\sqrt{\frac{2|\ss|}{\min_i N_i}} + \frac{2(1+ L_h)(r_\mu+p_\mu)}{\delta_\mix'(1-\gamma)} L_{pop}^\tau\\
    &\quad + (1+\gamma)\left(C_\mix \rho_\mix^{T} + \frac{2p_{\mu}T_{\mix}(3p_*+2p_*p_\mu+2)}{{\delta'_{\mix}}^{2}(1-
    L_{pop})}  \sqrt{\frac{2|\mathcal{S}|}{\min_i N_i}}\right)\norm{Q^k-Q_\infty^k}_2.
\end{align*}

Therefore, Corollary 3.9 in \cite{kotsalis2023simple} and Theorem D.2 in \cite{Yardim_PMA} imply that when $t_0 := \frac{64(1+\gamma)^2}{\rho_F^2}$, $I_{mix} > \frac{\log 1/\rho_F+ \log 20(1+\gamma)C_{\mix}}{\log 1/\rho_\mix}$, the output $\tilde Q_{n}$ of Algorithm \ref{algo:GGRS CTD learning} satisfies:
\begin{align*}
      \E\left[\norm{\tilde Q_{n}^k -Q_\infty^k}^2_2 \right] &\leq \frac{2(t_0 + 1)(t_0 + 2)\norm{ Q_{\max}^k -Q_\infty^k}^{2}_2}{(n + t_0)(n+t_0+1)} + \frac{12n\left(\frac{8(1+L_h)^2}{(1-\gamma)^2}+ 4C_{pop}^2\right)}{\rho_F^2(n+t_0)(n+t_0+1)} + \frac{200C_{pop}^2}{\rho_F^2}
\end{align*}
where \begin{align*}
    C_{pop} &=\frac{(\delta_\mix'+4T_\mix)(1+L_h)(r_\mu + p_{\mu})(3p_*+2p_*p_\mu+2)}{{\delta'_{\mix}}^{2}(1-\gamma)(1-
    L_{pop})}  \sqrt{\frac{2|\mathcal{S}|}{\min_i N_i}} + \frac{(1+ L_h)2(p_\mu+r_{\mu}) }{\delta_\mix'(1-\gamma)}L_{pop}^{I_{\mix}}
\end{align*}
and note that $C_{pop} = \mathcal{O}\left(\frac{1}{\sqrt{\min_{i}{N_i}}}\right)$. 

Finally, using the fact that $\norm{\cdot}_\infty \leq \norm{\cdot}_2$ and taking square root, we obtain
\begin{align*}
    \E\left[\norm{\tilde Q^k_{n} -Q_\infty^k}_\infty \right] &\leq \E\left[\norm{\tilde Q_{n}^k -Q_\infty^k}_2 \right]\\ 
    &\leq \frac{2(t_0 + 2)\norm{ Q_{\max}^k -Q_\infty^k}_2}{\sqrt{(n + t_0)(n+t_0+1)}} + \frac{{8(1+L_h)}\sqrt{n}/{(1-\gamma)}}{\rho_F\sqrt{(n+t_0)(n+t_0+1)}}  + \frac{20C_{pop}}{\rho_F}\\
    &\leq \frac{4(t_0 + 2)(1+L_h)/(1-\gamma)}{\sqrt{(n + t_0)(n+t_0+1)}} + \frac{8(1+L_h)\sqrt{n}/(1-\gamma)}{\rho_F\sqrt{(n+t_0)(n+t_0+1)}}  + \frac{20C_{pop}}{\rho_F},
\end{align*}

since $\norm{\tilde Q_{n}^k -Q_\infty^k}_2 \leq \frac{2(1+L_h)}{1-\gamma}$. For brevity, we write the error bound for the output of the algorithm $\tilde Q^k_{I_{ctd}}$ as:  
\begin{align*}
     \E\left[\norm{\tilde Q^k_{I_{ctd}} -Q_\infty^k}_\infty \right] \leq  \frac{C_1}{\sqrt{(I_{ctd} + t_0)(I_{ctd}+t_0+1)}} + \frac{C_2\sqrt{I_{ctd}}}{\sqrt{(I_{ctd}+t_0)(I_{ctd}+t_0+1)}} + \frac{C_{pop,1}}{\sqrt{\min_i N_i}} + C_{pop,2}L_{pop}^{I_{\mix}}
\end{align*}
with 
\begin{align*}
    &C_1 = \frac{4(t_0 + 2)(1+L_h)}{(1-\gamma)},\\
    &C_2 = \frac{8(1+L_h)}{(1-\gamma)\rho_F} = \frac{8(1+L_h)}{(1-\gamma)^2\delta_\mix' \zeta},\\
    &C_{pop,1} = \frac{20(\delta_\mix'+4T_\mix)(1+L_h)(r_\mu + p_{\mu})(3p_*+2p_*p_\mu+2)\sqrt{2|\ss|}}{{\delta'_{\mix}}^{3}(1-\gamma)^2(1-
    L_{pop})},\\
    &C_{pop,2} = \frac{40(1+ L_h)(p_\mu+r_{\mu}) }{{\delta_\mix'}^2(1-\gamma)^2\delta'_\mix \zeta}.
\end{align*}
Then, the conclusion holds when $I_{\mix} > \mathcal{O}(\log \epsilon^{-1})$ and $I_{ctd} > \mathcal{O}(\epsilon ^{-2})$. 
\end{proof}

We remark that Theorem \ref{Theorem: GGR-S CTD Learning} proves that the output of the CTD learning algorithm well approximates the true $Q^k$ function under the limiting mean-field ensemble induced by any policy profile. 

\subsection{Proof for Theorem \ref{Theorem: GGR-S PMA-CTD Learning}}
\label{appendix: proof for theorem GGRS PMACTD learning}
\begin{proof}
    Under the same settings of Theorem \ref{Theorem: GGR-S CTD Learning}, we consider the $m$-th PMA policy update iteration: the current policy profile is $\bpi_m$, and denote the $q$-function output by the algorithm as $\tilde q^k_m$ for population $k$; and denote the true $q$-function under the stable mean-field ensemble $\Gamma_{pop}^\infty(\bpi_m)$ as $q^k_m$. Then, by Theorem \ref{Theorem: GGR-S CTD Learning}, with probability 1, 
    \begin{align*}
          &\E\left[\norm{\tilde q^k_{m} -q_m^k}_\infty\big|\pi_m \right] = \E\left[\norm{\tilde Q^k_{I_{ctd}} -Q_\infty^k}_\infty \right]\\ &\leq \frac{C_1}{\sqrt{(I_{ctd} + t_0)(I_{ctd}+t_0+1)}} + \frac{C_2\sqrt{I_{ctd}}}{\sqrt{(I_{ctd}+t_0)(I_{ctd}+t_0+1)}} + \frac{C_{pop,1}}{\sqrt{\min_i N_i}} + C_{pop,2}L_{pop}^{I_{\mix}}
    \end{align*}
    Denote the optimal policy profile by $\bpi^*$. It holds with probability $1$ that
    \begin{align*}
        \E\left[\norm{\bpi_{m+1} - \bpi^*}|\bpi_m\right] &= \E\left[\norm{\Gamma_\eta^{pma}(\tilde q^k_m, \bpi_m) - \bpi^*}_1 |\bpi_k \right]\\
        & \leq \E\left[\norm{\Gamma_\eta^{pma}(q^k_m, \bpi_m) - \bpi^*}_1 |\bpi_k \right] + \E\left[\norm{\Gamma_\eta^{pma}(\tilde q^k_m, \bpi_m) - \Gamma_\eta^{pma}(q^k_m, \bpi_m) }_1 |\bpi_k \right], \text{ triangle inequality}\\
        &\leq \E\left[\norm{\Gamma_\eta(\bpi_m) - \bpi^*}_1 |\bpi_k \right] + \E\left[\norm{\Gamma_\eta^{pma}(\tilde q^k_m, \bpi_m) - \Gamma_\eta^{pma}(q^k_m, \bpi_m) }_1 |\bpi_k \right]\\
        &\leq L_{\eta}\norm{\bpi_m - \bpi^*}_1 + L_{md,q}\E\left[\norm{\tilde q^k_{m} -q_m^k}_\infty\big|\pi_m \right],
    \end{align*}
    and it follows by the law of iterated expectations, 
    \begin{align*}
        \E\left[\norm{\bpi_{m+1} - \bpi^*}\right] \leq L_\eta\E[\norm{\bpi_m - \bpi^*}_1] + L_{md,q}\E\left[\norm{\tilde q^k_{m} -q_m^k}_\infty\right],
    \end{align*}
    and recursively, with $L_\eta < 1$, $I_{mix} > \mathcal{O}(\log \epsilon^{-1})$, $I_{ctd} > \mathcal{O}(\epsilon ^{-2})$, and $M > \mathcal{O}(\log(\epsilon^-1))$ we obtain:
    \begin{align*}
         \E\left[\norm{\bpi_{M} - \bpi^*}\right] \leq \epsilon + \mathcal{O}\left(\frac{1}{\sqrt{\min_i N_i}}\right).
    \end{align*}
    This completes the proof.     
\end{proof}

\section{Numerical Results}
In this section, we conduct numerical experiments on an toy epidemic model (similar to the SIR model proposed by \cite{cui&koeppl}) to illustrate the effectiveness of our proposed algorithms. 

\subsection{Model Setting}

We consider three population of agents:
\begin{itemize}
    \item population 1: people who recovers faster and least susceptible;
    \item population 2: people who are more susceptible;
    \item population 3: people who are most susceptible.
\end{itemize}
Then, we formulate the spread of virus among these three populations as a MP-MFG: 
\begin{itemize}
    \item State space: $s \in \ss = \{H,S\}$. $H$ represents the state of healthy and $S$ represents the state of sick. 
    \item Action space: $a \in \aa = \{Y,N\}$. Y represents an agent chooses to wear mask and N represents an agent chooses not to wear mask.
    \item Strength of Connectivity: $W$ matrix:
    \begin{align*}
        W = \begin{bmatrix}
          0.5 & 0.4 & 0.5\\
        0.4& 0.6 & 0.3\\
        0.5 & 0.3 & 0.7\\
        \end{bmatrix}
    \end{align*}
    where the $(i,j)$th entry denotes the weight of the impact of the mean field state of population $j$ on population $i$. We set the matrix to be symmetric given the definition of Graphon game.

    \item Reward function:
     \begin{align*}
         r^{k}(s,a,z^{k}) =  -2*k\cdot \one\{s=S\}(s) - \one\{a=Y\}(a)-\one\{s=S\}(s)\cdot \one\{a=N\}(a)\cdot 0.5,k=1,2,3
     \end{align*}
    \item Probability transition: 
    \begin{align*}
        P(s_{t+1} = H | s_t = S) &= 0.3\\
        P(s_{t+1} = S | s_t = H, a_t = Y) &= 0.8 \cdot  z_t(S)+0.1\\
        P(s_{t+1} = S | s_t = H, a_t = N) &= 0.55 \cdot z_t(S)+0.3
    \end{align*}
\end{itemize}
\subsection{The Complete Information Case}
We implement the PMA procedure outlined in Section \ref{subsection: Solution to MP-MFG with Complete Information}; and the results of the learned policy profiles are presented in Figure \ref{Exact_result}. Generally, the more vulnerable a type of people are to the virus, the more likely they will wear a mask. 
\begin{figure}[H]
    \centering
    \includegraphics[scale=0.25]{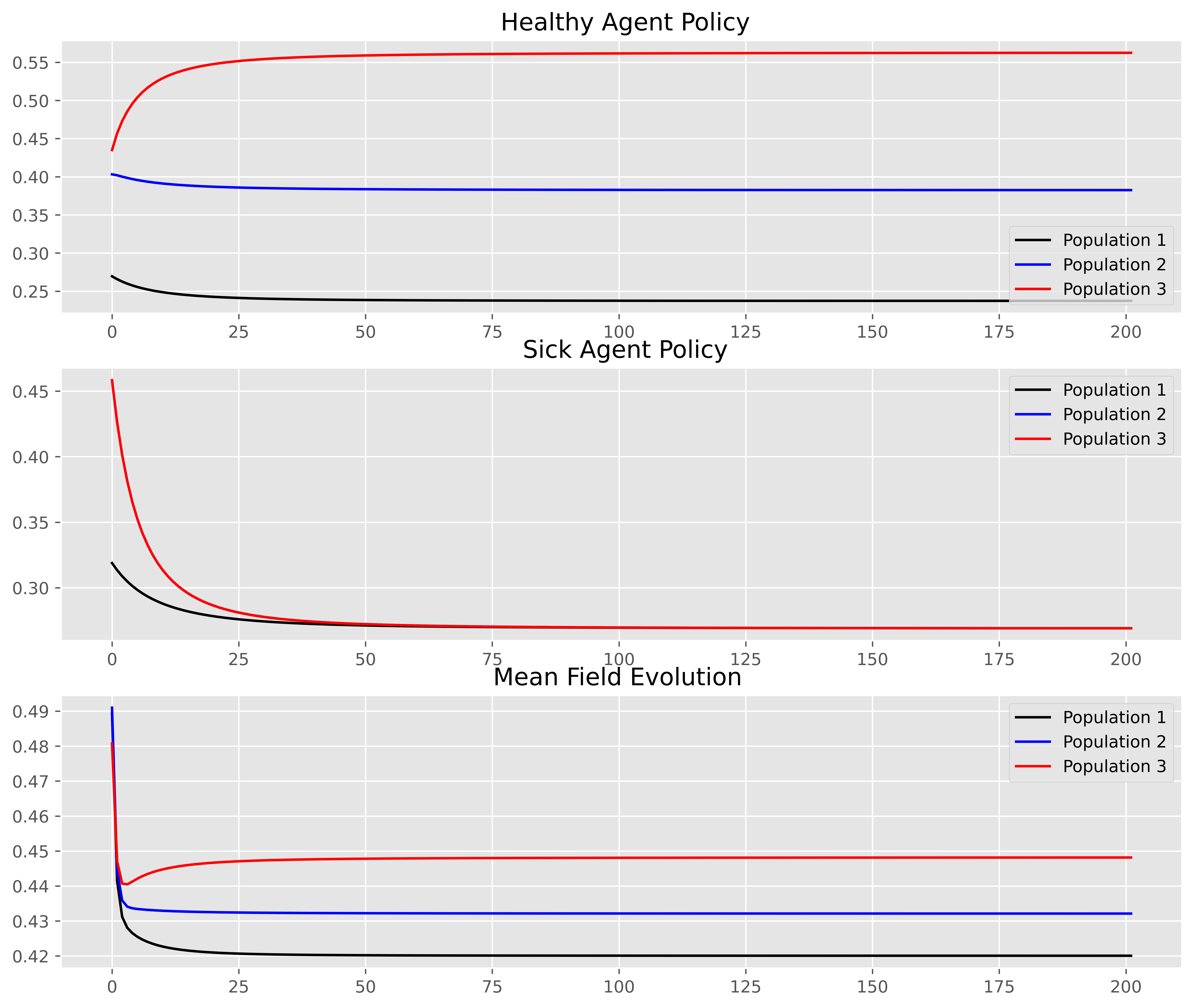}
    \caption{Result of Exact Case}
    \label{Exact_result}
\end{figure}

\subsection{Metrics}

We use three metrics to compare the output of the algorithm to that of the complete information case.
\begin{itemize}
    \item \textbf{Average Discounted Regularized Reward}.
    
    We consider the expected value of the value function for each population \(V^k(\bpi,\bmu_\infty^{\bpi})\) under the stable mean-field induced by the policy profile $\bpi$, which we denote by $\bmu_\infty^{\bpi} = \Gamma_{pop}^\infty(\bpi)$. 
    \begin{align*}
        \Bar{V}(\bpi,\bmu_\infty^{\bpi})= \E_{s\sim \bmu_\infty^{\bpi}}[V(s|\bpi,\bmu_\infty^{\bpi})]
    \end{align*}
    \item \textbf{Exploitability.}
    
    The exploitability is generally defined as 
\begin{align*}
    &\mathcal{E}(\bpi) = \max_{\bpi'}\Bar{V}(\bpi',\bmu_{\infty}^{\pi'})-\Bar{V}(\bpi,\bmu_\infty^{\bpi})\label{defn: exploitability}
\end{align*}

This measures how much an agent can gain by further optimizing the policy.
Since we do not have access to the exact best response policy, we approximate the exploitability metric. We freeze the current mean field and run $m$ iterations of policy improvement, then we can use this value to compute the exploitability.
    \item  \textbf{Policy Convergence}.
    
    Suppose in the \(m^{th}\) iteration the policy for all population is \(\bpi_m=(\pi^{1}_m,\pi^{2}_m,\pi^{3}_m)\), then we use
    \begin{align*}
        \norm{\bpi_{m}-\bpi^{*}}_1= \max_{i\in \{1,2,3\}} \norm{\pi^{i}_m-\pi^{*,i}}_1
    \end{align*}
    where \(\pi^{i}_m\) represents the policy of the \(i^{th}\) population during the $m$-th iteration, and $\bpi^*$ is the optimal policy profile. 
\end{itemize}

 \subsection{The Simulator-Oracle-Based Learning}
In this section we implement simulator-based PMA procedure outlined in Algorithm \ref{algo:Simulator-based PMA Reinforcement Learning for MP-MFG NE}, and the results converge quickly to the complete information case. Since the central learner has no access to  real transitions,  it will query the simulator to estimate the transition probabilities.

\begin{figure}[H]
    \centering
\includegraphics[scale=0.25]{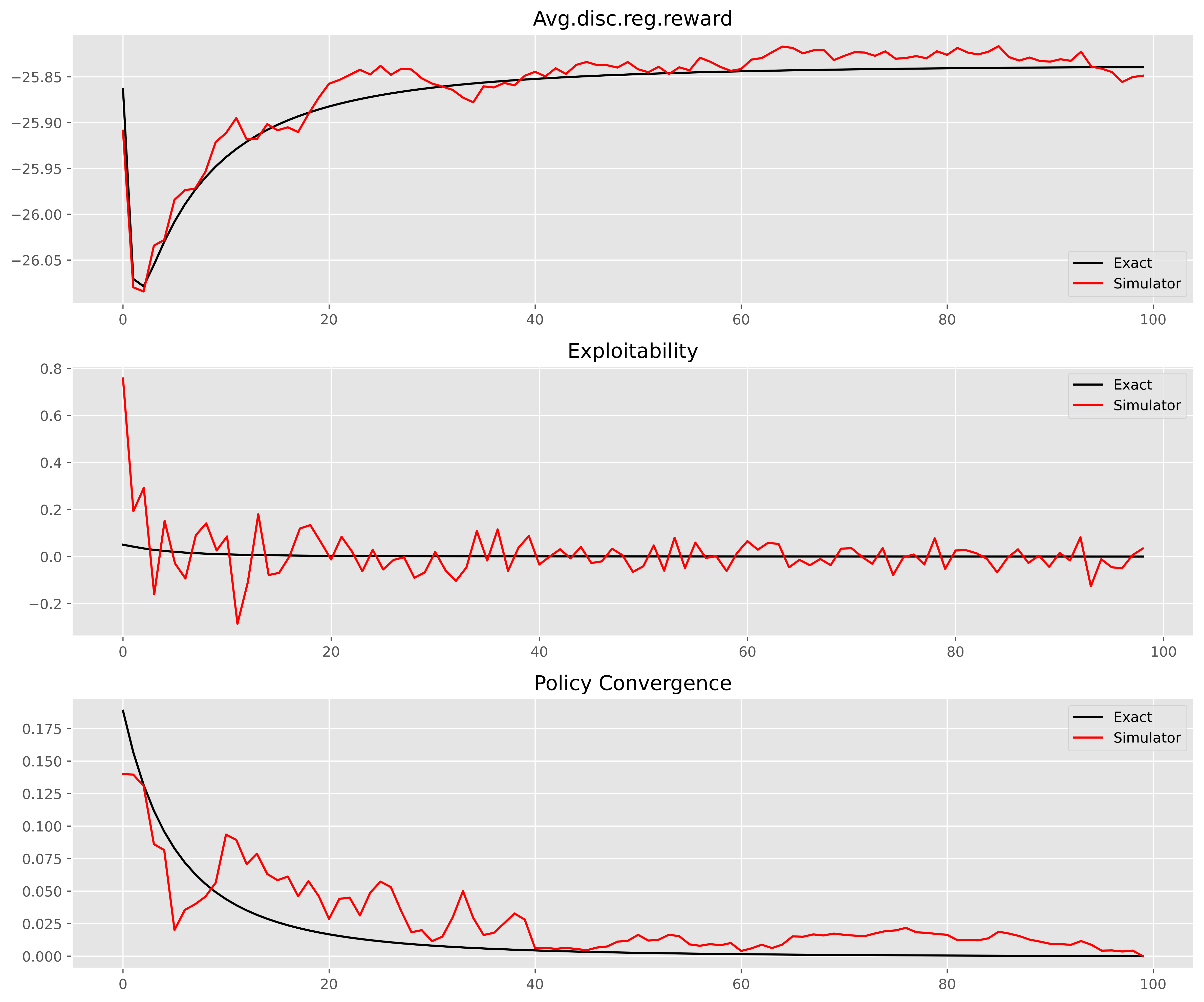}
    \caption{Simulator Result}
    \label{Sim_Result}
\end{figure}
We also specify our choice for hyperparameters in the algorithm.

\begin{longtable}{ccp{8cm}} 
\caption{Hyperparameters}\\
\hline
Hyperparameter & Value &  Comment  \\
\hline
\endfirsthead
\multicolumn{3}{c}
{\tablename\ \thetable\ -- \textit{Continued from previous page}} \\
\hline
 Hyperparameter &  Value & Comment \\
\hline
\endhead
\hline
\multicolumn{3}{r}{\textit{Continued on next page}} \\
\endfoot
\hline
\endlastfoot

        $\lambda$ & 1.0 & $\lambda$ is the scale of the regularizer, where the regularizer is defined as $h_{\lambda}(\pi)=\lambda \cdot \sum_{s\in \mathcal{S}}-\pi(s)\log(\pi(s))$.
        We tested $\lambda$ in \{0.01,0.1,1,10\} and found that 0.1 gave
reasonably stable learning that progressed sufficiently quickly.
Further optimizing this hyper-parameter may lead to better results. \\
        \hline
        N & 100 & In each iteration we use the simulator to roll out N samples to estimate the transition. We tested N in \{10,50,100,500\} and N=100 can yield a satisfactory result. \\
     
         \hline 
        $\bmu_0$ & $\mu_0^{i}(S) = 0.5 \forall i$ & The initial mean field. We set the initial state distribution be the same for each population. \\
         \hline 
    
        $\epsilon_{\pi}$ & 0.002 & The tolerance for the iteration. \\
\end{longtable}

\subsection{GGR-S PMA-CTD learning}
In this section, we implement the GGRS PMA-CTD learning method outlined in Algorithm \ref{algo:GGRS learning}.
To compare the population size's effect on the learning result, we tested the algorithm on different population sizes with $N = 50$ and $N = 500$. It is clear that larger population sizes ensure better convergence, which confirms our theoretical results. 
\begin{figure}[H]
    \centering
\includegraphics[scale=0.25]{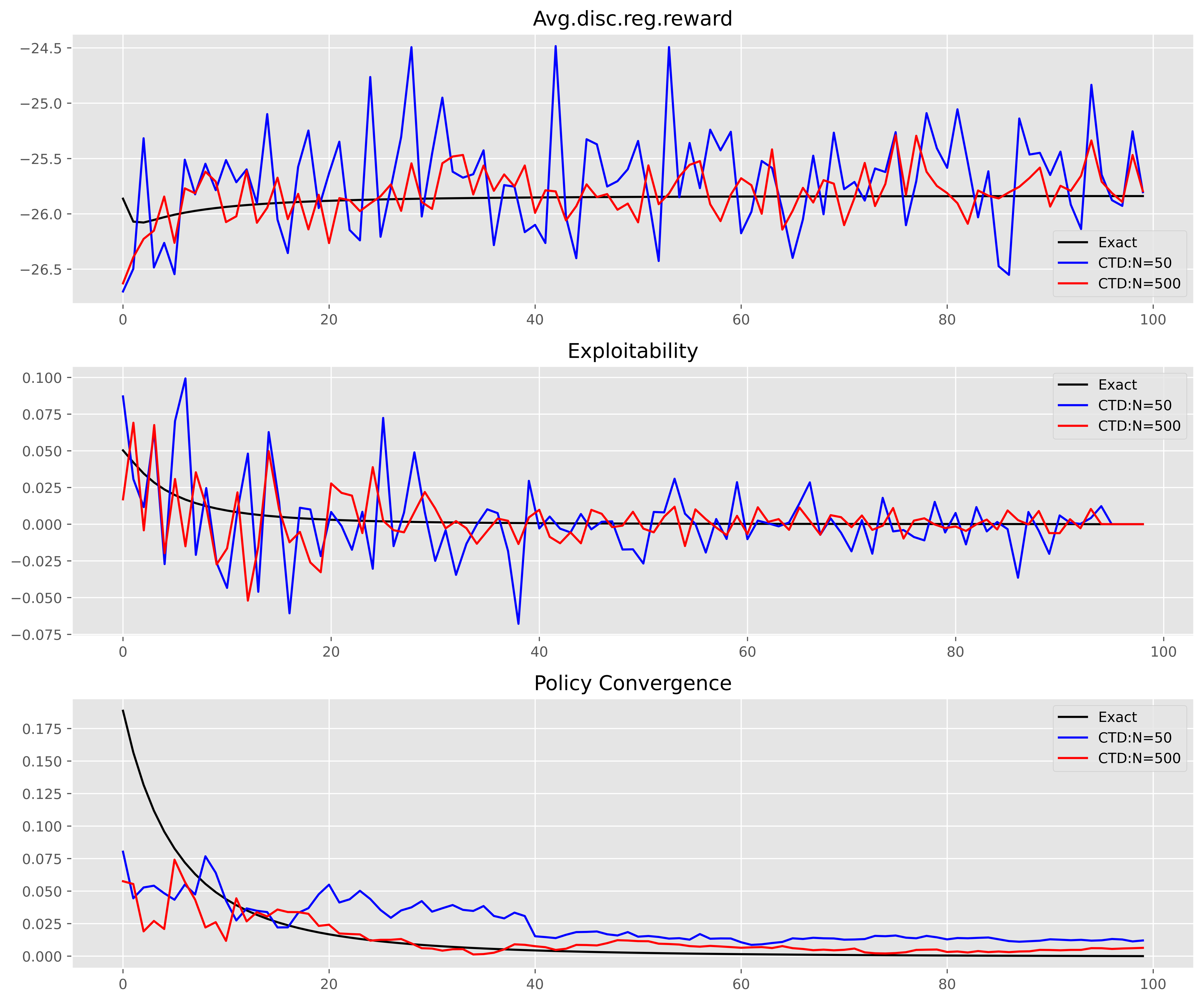}
    \caption{PMA-CTD Learning Result}
    \label{CTD_Result}
\end{figure}
\begin{longtable}{ccp{8cm}} 
\caption{Hyperparameters}\\
\hline
Hyperparameter & Value &  Comment  \\
\hline
\endfirsthead
\multicolumn{3}{c}
{\tablename\ \thetable\ -- \textit{Continued from previous page}} \\
\hline
 Hyperparameter &  Value & Comment \\
\hline
\endhead
\hline
\multicolumn{3}{r}{\textit{Continued on next page}} \\
\endfoot
\hline
\endlastfoot

$\lambda$ & 0.5 & The scale of the regularizer.\\ 
\hline 
$\gamma$ & 0.95 & Discount rate \\ 
\hline 
$I_{ctd}$ & 500 & Iteration round number for the CTD learning, we tested $I_{ctd}$ in 
   \{10,100,500,1000\}, 100 gave reasonably stable learning that progressed sufficiently quickly. \\
   \hline
   $I_{\mix}$ & 200 & Number of fictitious playing \\
   \hline
   $\{\beta_{n}\}_{n=1}^{I_{ctd}}$ & $\frac{2/(1-\gamma)}{n+t_0-1},t_0= 100$ & Learning rate for the CTD learning in each step.\\
\end{longtable}

\end{document}